\def\dref#1{\eqref{#1}}
\def\eq{\displaystyle\stackrel\triangle=}
\def\xra{\xrightarrow}
\newcommand{\diag}{\ensuremath{\mathrm{diag}}}
\DeclareMathOperator*{\rank}{rank} \DeclareMathOperator*{\tr}{Tr}
\DeclareMathOperator*{\argmin}{arg\,min}
\newcommand*\mcapinn[2]{\vcenter{\hbox{$\mathsurround=0pt
  \ifx\displaystyle#1\textstyle\else#1\fi\bigcap$}}}
\newcommand*\mcupinn[2]{\vcenter{\hbox{$\mathsurround=0pt
  \ifx\displaystyle#1\textstyle\else#1\fi\bigcup$}}}
\DeclareFontFamily{OT1}{pzc}{}
\DeclareFontShape{OT1}{pzc}{m}{it}{<-> s * [1.200] pzcmi7t}{}
\DeclareMathAlphabet{\mathpzc}{OT1}{pzc}{m}{it}
\newtheorem{exam}{Example}
\newcommand{\bra}[1]{\ensuremath{|#1\rangle}}
\newcommand{\braket}[2]{\ensuremath{|#1\rangle\langle#2|}}
\newenvironment{proof}[1][Proof]{\noindent\textbf{#1.} }{\hfill \rule{0.5em}{0.5em}}
\newtheorem{thm}{Theorem}
\newtheorem{lem}{Lemma}
\newtheorem{rem}{Remark}
\newtheorem{prop}{Proposition}
\begin{document}
	
\title{\bf Quantum Tomography by Regularized Linear Regression}

\author{Biqiang Mu\thanks{Key Laboratory of Systems and Control, Institute of Systems Science, Academy of Mathematics and Systems Science, Chinese Academy of Sciences, Beijing 100190, China. E-mail: bqmu@amss.ac.cn.}, Hongsheng Qi\thanks{Key Laboratory of Systems and Control, Institute of Systems Science, Academy of Mathematics and Systems Science, Chinese Academy of Sciences, Beijing 100190, China; Research School of Electrical, Energy and Materials Engineering,  The Australian National University, Canberra 0200, Australia. E-mail: qihongsh@amss.ac.cn.}, Ian R. Petersen\thanks{Research School of Electrical, Energy and Materials Engineering,  The Australian National University, Canberra 0200, Australia. E-mail: ian.petersen@anu.edu.au.}, Guodong Shi\thanks{Australian Center for Field Robotics, School of Aerospace, Mechanical and Mechatronic Engineering, The University of Sydney, NSW 2008, Australia. E-mail: guodong.shi@sydney.edu.au.}}

\date{}
	
\maketitle

\begin{abstract}
In this paper, we study extended    linear regression approaches for quantum state tomography based on regularization techniques.  For unknown quantum states  represented by density matrices, performing measurements under certain basis yields random outcomes, from which a classical linear regression model can be established. First of all,  for complete or over-complete measurement bases,  we show that the empirical data can be utilized for the construction of a weighted least squares estimate (LSE) for quantum tomography. Taking into consideration the trace-one condition, a constrained 
weighted LSE can be explicitly computed, being the optimal unbiased estimation among all linear estimators. Next, for general measurement bases, we show that  $\ell_2$-regularization   with proper regularization gain  provides even lower mean-square error under a cost in bias. The regularization parameter is tuned by two estimators in terms of a risk characterization. Finally, a concise and unified  formula is established for the regularization parameter with complete measurement basis under an equivalent regression model, which proves that the proposed tuning estimators are asymptotically optimal as the number of samples grows to infinity under the risk metric.  Additionally, 
numerical examples are provided to validate the established  results.
\end{abstract}
	

\section{Introduction}
Since Feynman pointed out the possibility of using quantum resources to carry out computation in the early 1980s, significant progresses have been made in both the theoretical understanding and the  real-world implementations for computing and communication mechanisms based on quantum states \citep{Nielsen}. Underpinning such efforts lies in the development of quantum tomography \citep{1989,qubit2001,art2005,science2014},  where reliable quantum state estimation \citep{prl2009,prl2010,prl2011} and system identification methods \citep{rouchon2009,rouchon2012,ian2018,xue2018,ian2019} provide basic  assurance for the validity of quantum systems that we intend to work on. The fundamental  quantum measurement postulate indicates that any form of  quantum information probe    would have an inherent probabilistic nature. The exponentially growing complexity of quantum systems along with the increasing scale further adds to the challenging reality: only partial information can be made available via measurements for uncertain quantum systems; processing the measurement data faces enormously high computation barrier for large-scale quantum systems.

One primary task of quantum tomography is to determine an unknown quantum state from a number of identical copies \citep{prl2009,prl2010,prl2011}. Performing measurement on those copies   along certain observables, i.e., measurement bases,  yields independent realizations of some hidden   random variable whose statistics encode  the quantum state and the observable. Therefore, utilizing the outcomes of the measurements  we can build estimations of the unknown quantum state since the observables are known (which can be selected and designed). 
Apparently the choice of the estimation method is not unique since the estimation  error metrics can be characterized by different metrics, and  the resulting computational feasibility also raises constraint on the potentially viable  estimation approaches. Therefore, there is often a tradeoff between estimation quality and computational efficiency \citep{2012}. 

Linear regression, as a universal estimator \citep{Ljung1999}, becomes a natural and important   quantum state tomography approach due to its  simplicity and practicability. A thorough comparison was made in  \cite{qi2013} between linear regression method  and maximum likelihood estimation for quantum state tomography under complete or over-complete measurement bases. Linear regression was also applied to quantum tomography with incomplete measurement bases for low-rank, e.g., \cite{gross2010,tit2011,alquier2013} or sparse, e.g., \cite{annals2016} quantum states  in view of the insights from compressed sensing, where a small number of measurement bases was proven to be enough  for the recovery of a high-dimensional quantum state with high probability as the dimension increases. Recently, linear regression method was also generalized to the adaptive measurement case where selection of the measurement basis depends on the previous measurement outcomes \citep{qi2017}. 

In this paper, we study  the role of regularization for linear regression-based quantum state tomography. In recent years, the power of regularization has been  well noted in the literature of optimization, machine learning, and system identification, e.g., \cite{online,dp2016,chiu2016} for the purpose of avoiding overfitting in empirical learning. Noting any quantum state can be  represented as a trace-one positive  Hermitian  density matrix, which is of low rank if it is a combination of a small number of pure states, we establish the following results. 
\begin{itemize}
\item For complete or over-complete measurement basis,  the empirical data can be utilized for constructing of a weighted least squares estimate (LSE) for quantum tomography. The weighted LSE provides reduced mean-square error compared to standard LSE. Taking into consideration the trace-one condition, the constrained 
weighted LSE can be explicitly computed, which is the optimal unbiased estimation that is linear in the measurement data. 
\item For any (complete, over-complete, or under-complete) measurement basis, a closed form solution is established for tomography with $\ell_2$-regularized  weighted linear regression.   It is shown that with proper regularization parameter, this regularized regression always provides even lower mean-square error subject to, of course, a price of additional bias. 
\item The regularization parameter can be further optimized subject to a risk characterization. An explicit formula is established for the regularization parameter under an equivalent regression model, which proves that the proposed tuning estimators are asymptotically  optimal for complete bases as the number of samples grows to infinity for the risk metric.  
\end{itemize}
Numerical examples are provided for the validation of  the established theoretical results, which confirms the potential usefulness of the proposed linear regression methods in quantum state tomography. 

The remainder of the paper is organized as follows. In Section \ref{secproblem}, we present the standard linear regression model for quantum state tomography, and review some preliminary knowledge on the underlying rationale. In Section \ref{secregular}, we present the  extended quantum  tomography  methods based on weighted LSE, constrained weighted LSE, and constrained regularized  LSE, respectively, whose performances in terms of mean-square error are thoroughly  investigated.   Section \ref{sec4} further presents the asymptotically optimal regularization gain under an equivalent model. Numerical examples are presented in Section \ref{sec5}, and finally, some concluding remarks are given in Section \ref{sec6}.

\section{Problem Definition and Preliminaries}\label{secproblem}
\subsection{Linear Regression for Quantum State Tomography}
Let $\mathcal{H}$ be a $d$-dimensional Hilbert space that characterizes the state space of a quantum system. Denote the space of  linear Hermitian  operators over $\mathcal{H}$ by $\mathcal{L}(\mathcal{H})$.
Suppose that $\big\{\mathsf{B}_i\big\}_{i=1}^{d^2}$ is an  orthonormal basis of $\mathcal{L}(\mathcal{H})$ with $\tr(\mathsf{B}_i^\dagger\mathsf{B}_j)=\delta_{ij}$ and $\mathsf{B}_i^\dagger = \mathsf{B}_i$, where 
$\tr(\cdot)$ means the trace of a square matrix, 
$(\cdot)^\dagger$ represents the Hermitian conjugate of a complex matrix, and  $\delta_{ij}$ is the Kronecker function. A quantum state $\rho$
as a density operator over $\mathcal{H}$ can then be expressed by
\begin{align}
\rho = \sum_{i=1}^{d^2}  \theta_i \mathsf{B}_i \label{rho}
\end{align}
where $\theta_i=\tr(\rho \mathsf{B}_i) \in\mathbb{R}$ is the coordinate of $\rho$ under the given basis $\big\{\mathsf{B}_i\big\}_{i=1}^{d^2}$. 
Let there be a positive operator-valued measurement (POVM) over the space $\mathcal{H}$, denoted by $\{\mathsf{M}_m \}_{m=1}^M$ 
with $\sum_{m=1}^M \mathsf{M}_m^\dagger \mathsf{M}_m =\mathsf{I}$, where  $\mathsf{I}$ is the identity operator. 
Then  $\mathsf{E}_m\eq \mathsf{M}_m^\dagger \mathsf{M}_m$  can be expressed as a linear combination of the orthogonal basis of $\big\{\mathsf{B}_i\big\}_{i=1}^{d^2}$:
$$
\mathsf{E}_m = \sum_{i=1}^{d^2} \beta_{mi} \mathsf{B}_i
$$ for each $1\leq m \leq M$, 
where $\beta_{mi}=\tr(\mathsf{E}_m \mathsf{B}_i)$. 
When the quantum state $\rho$ is being measured under the POVM $\{\mathsf{M}_m \}_{m=1}^M$, the probability of observing outcome $m$ is  $
p_m = \tr(\mathsf{E}_m \rho )
=\bm{\beta}_m^\top\bm{\theta} $, where $\bm{\beta}_m = [\beta_{m1},\cdots,\beta_{md^2}]^\top $ and $\bm{\theta} = [\theta_1,\cdots,\theta_{d^2}]^\top $. Denoting $\mathbf{p}=[p_1,\dots,p_M]^\top\in \mathbb{R}^M$, $\mathbf{A}=[\bm{\beta}_1,\dots,\bm{\beta}_M]^\top \in\mathbb{R}^{M\times d^{2}}$, we have the following  fundamental quantum measurement description in the form of a linear algebraic equation:
$$
\mathbf{p}= \mathbf{A} \bm{\theta}. 
$$ 
The tomography of an unknown quantum state $\rho$ is therefore equivalent to identifying the vector $\bm\theta$,  where $\mathbf{A}$ is known and $\mathbf{p}$ is estimated by experimental realizations of measuring $\rho$ from the POVM $\{\mathsf{M}_m \}_{m=1}^M$. The POVM can be in general represented under Pauli matrices, see e.g, \cite{annals2016,annals2013}.

A standard quantum state tomography process is as follows: (i) Prepare $N=nM$ identical copies of an uncertain quantum state $\rho$; (ii) Perform measurement  along each $\mathsf{M}_m$ within   the POVM $\{\mathsf{M}_m \}_{m=1}^M$ independently for $n$ copies; (iii)  For each $1\leq m\leq M$, record the number of times that the outcome $\mathsf{M}_m$ is observed among those $n$ experiments, denoted by $\#m$, from the $n$ experiments. Then,
\begin{align}
\widehat{p}_m=\frac{\#m}{n}\label{pm_est}
\end{align}
 is a natural estimator of the probability $p_m$, leading to
\begin{align}
\widehat{p}_m = {\bm{\beta}}_m^\top \theta + e_m,
\end{align}
where $e_m=\widehat{p}_m - p_m$ is the estimation error. The distribution of $e_m$ depends on the sample size $n$, as it is the sum of $n$ identical and independently distributed (i.i.d.) Bernoulli random variables with mean $p_m$. This naturally  yields the following   linear regression problem:
\begin{align} 
\mathbf{y}=\mathbf{A}\bm{\theta} + \mathbf{e} \label{rg}
\end{align}
with
$\mathbf{y}=[\widehat{p}_1,\cdots,\widehat{p}_M]^\top $
and $\mathbf{e}=(e_1,\cdots,e_M)^\top $.

\subsection{The Noise Distribution} 
Define i.i.d. random variables $b_{l}^{(m)}$ for $1\leq l \leq n$, which takes value $1$ with  probability $p_m$ and $0$ with probability $1-p_m$.
Then there holds
\begin{align}
e_m=\widehat{p}_m -p_m= \frac{\sum_{l=1}^{n}  b_{l}^{(m)}}{n}-p_m
=\sum_{l=1}^n \frac{ b_{l}^{(m)}-p_m }{n}. 
\end{align}
Note that $(b_{l}^{(m)}-p_m)/n$ takes value $(1-p_m)/n$ with  probability $p_m$ and $-p_m/n$ with probability $1-p_m$.
It follows that
	\begingroup
\allowdisplaybreaks
\begin{subequations}
	\label{em_est}
	\begin{align}
&\mathbb{E}(e_m )=0\\
&\mathbb{V}(  e_m ) 
=\mathbb{E} (e_m )^2
=   (p_m-p_m^2)/n. \label{varest}
\end{align}
\end{subequations}
\endgroup
As a result, the distribution of $e_m$ is as follows:
\begin{align}
\mathbb{P}\left(  e_m = \left(\frac{1-p_m}{n} \right)^k \left( -\frac{p_m}{n}\right)^{n-k}  \right)
=\dbinom{n}{k} p_m^k (1-p_m)^{n-k}.
\end{align}
 As $n$ tends to infinity, each $e_m$ will converge to a Gaussian random variable with mean $0$ and variance $(p_m-p_m^2)/n$.

 \subsection{Simultaneous Measurements}
 In the tomography process described above, each $\mathsf{M}_m$ is separately measured, i.e., a binary outcome is recorded for any copy of $\rho$, where $1$ represents  $\mathsf{M}_m$, and $0$ represents  $\mathsf{I}-\mathsf{M}_m$. 
An alternative quantum tomography process can be described  based on $n$ copies of $\rho$, where we perform measurement by the POVM $\{\mathsf{M}_m \}_{m=1}^M$  collectively. To be precise, the outcome associated with each copy of the quantum state  now takes value in $\{1,\dots,m\}$, and then the number of times that the outcome $m$ is observed among those $n$ experiments, denoted by $\#'m$, is recorded  from the $n$ experiments for each $1\leq m\leq M$. Consequently,
\begin{align}
\bar{p}_m=\frac{\#'m}{n}\label{pm_est2}
\end{align}
 is still an estimator of the probability $p_m$, leading to
\begin{align}
\bar{p}_m = {\bm{\beta}}_m^\top \theta + \bar{e}_m,\ \ m=1,\dots,M. 
\end{align}
The estimation error $\bar{e}_m$ as a random variable has the same distribution of $e_m$. However, the $\bar{e}_m$ are no longer independent since now $\sum_{m=1}^n \bar{p}_m=1$ is a sure event. Except for this minor difference, this new formulation of quantum tomography procedure remains the same.

\subsection{Standard Least Squares}
For the estimation  problem \eqref{rg}, the least squares (LS) solution 
\begin{align}
\nonumber
\widehat{\bm{\theta}}^{\rm LS}
& =\argmin_{\bm{\theta}} \  (\mathbf{y}-\mathbf{A} \bm{\theta})^\top  (\mathbf{y}-\mathbf{A} \bm{\theta})\\
&=(\mathbf{A}^\top  \mathbf{A})^{-1} \mathbf{A}^\top \mathbf{y} \label{ls}
\end{align}
is a common choice provided that $\mathbf{A}$ has full column rank. 
The estimate $\widehat{\bm{\theta}}^{\rm LS}$ admits   the following   properties:
\begin{itemize}
	\item $\widehat{\bm{\theta}}^{\rm LS}$ is unbiased, namely, $\mathbb{E}\big( \widehat{\bm{\theta}}^{\rm LS}\big)=\bm{\theta}$;
	\item The mean squared error (MSE) matrix of $\widehat{\bm{\theta}}^{\rm LS}$ is
	 \begin{align}
	\mathbb{MSE }\big(\widehat{\bm{\theta}}^{\rm LS}\big)
	&\eq \mathbb{E} (\widehat{\bm{\theta}}^{\rm LS}  - \bm{\theta})(\widehat{\bm{\theta}}^{\rm LS}  - \bm{\theta})^\top\nonumber\\
	&=(\mathbf{A}^\top  \mathbf{A})^{-1}  \mathbf{A}^\top \mathrm{P} \mathbf{A} (\mathbf{A}^\top  \mathbf{A})^{-1}\label{ls.mse} 
	\end{align} 
	where $\mathrm{P}=\diag\big([p_1-p_1^2,\cdots,p_M-p_M^2]\big)/n$.
\end{itemize}
However, this standard least squares neglected the fact that the $e_m$ have different variances, although they are all zero mean. As a result, the above covariance is not optimal. Furthermore, the condition that $\mathbf{A}$ be full column rank means the POVM $\{\mathsf{M}_m \}_{m=1}^M$  is informationally complete, i.e., any two density operators are distinguishable under the POVM given sufficiently large number of samples. This is not practical for large-scale quantum systems. 

\medskip

\section{Regularized Linear Regressions} 
In this section, we present a few generalizations to standard LSE for the considered quantum state tomography problem, and investigate their performances in terms of mean-square errors.
\label{secregular}
\subsection{Weighted Regression}
Noticing  $\mathbb{V}(e_m)=(p_m-p_m^2)/n$, we can instead use the following weighted least squares (WLS) estimate
\begin{align}
\widehat{\bm{\theta}}^{\rm WLS} &=\argmin_{\bm{\theta}}  (\mathbf{y}-\mathbf{A} \bm{\theta})^\top  \mathrm{W} (\mathbf{y}-\mathbf{A} \bm{\theta})\nonumber\\
&=(\mathbf{A}^\top  \mathrm{W}\mathbf{A})^{-1} \mathbf{A}^\top \mathrm{W} \mathbf{y} \label{wls}
\end{align}
with $
\mathrm{W}=\mathrm{P}^{-1}=n\diag\big([1/(p_1-p_1^2),\cdots,1/(p_M-p_M^2)]\big)$ penalizing the difference in variances for the noises $e_m$. This weighted least square $\widehat{\bm{\theta}}^{\rm WLS} $ continues to be unbiased since $\mathbb{E}\big(\widehat{\bm{\theta}}^{\rm WLS}\big)=\bm{\theta} $ is easily verifiable
and its MSE is
\begin{align}\label{wls.mse}
\mathbb{MSE}\big( \widehat{\bm{\theta}}^{\rm WLS} \big)
=(\mathbf{A}^\top  \mathrm{W}\mathbf{A})^{-1}.
\end{align}
Suppose $\rank(\mathbf{A})=d^2$ and let $\widehat{\bm{\theta} }  $ be any linear unbiased estimate for $\bm{\theta}$.
Then we have
\begin{align*}
\mathbb{MSE}(\widehat{\bm{\theta}} ) \geq \mathbb{MSE}(\widehat{\bm{\theta}}^{\rm WLS} ).
\end{align*}
This means it is the best estimator of $\bm{\theta}$ among all unbiased linear estimators in the sense that it achieves the minimal covariance.

In practice, the matrix $\mathrm{W}$ in \eqref{wls} is unknown and a feasible solution is to use the estimate 
\begin{align}\label{awls}
\widehat{\bm{\theta}}^{\rm AWLS}=(\mathbf{A}^\top  \widehat{\mathrm{W}}\mathbf{A})^{-1} \mathbf{A}^\top \widehat{\mathrm{W}} \mathbf{y},
\end{align}
where $\mathrm{W}$ in \eqref{wls} is replaced by its consistent estimate
\begin{align}
\label{west}
\widehat{\mathrm{W}}=n\cdot\diag\big([1/(\widehat{p}_1-\widehat{p}_1^2),\cdots,1/(\widehat{p}_M-\widehat{p}_M^2)]\big)
\end{align}
with $\widehat{p}_m,1\leq m \leq M$ given by \eqref{pm_est}.
In the following, it is shown that the estimate \eqref{awls} is accurate enough and asymptotically coincides with \eqref{wls}.

	For	a random sequence $\xi_n$, 
	we define $\xi_n=O_p(a_n)$ by that 
	$\{\xi_n/a_n\}$ is bounded in probability, i.e., $\forall
	\epsilon>0,\exists L>0$ such that $\mathbb{P}(|\xi_n/a_n|>L)<\epsilon,~\forall
	n$. Then there holds for large $n$ that
	\begingroup
\allowdisplaybreaks
\begin{align}
\widehat{\bm{\theta}}^{\rm AWLS}
-\widehat{\bm{\theta}}^{\rm WLS}
&= (\mathbf{A}^\top  \widehat{\mathrm{W}}\mathbf{A})^{-1} \mathbf{A}^\top  \widehat{\mathrm{W}}\mathbf{e}
-(\mathbf{A}^\top  \mathrm{W}\mathbf{A})^{-1} \mathbf{A}^\top  \mathrm{W}\mathbf{e}\\
\nonumber
&=\big( (\mathbf{A}^\top  \widehat{\mathrm{W}} \mathbf{A})^{-1} 
- (\mathbf{A}^\top \mathrm{W} \mathbf{A})^{-1} \big)\mathbf{A}^\top  \widehat{\mathrm{W}}\mathbf{e}
+(\mathbf{A}^\top  \mathrm{W}\mathbf{A})^{-1} \mathbf{A}^\top  \big(  \widehat{\mathrm{W}} -\mathrm{W}\big)\mathbf{e}\\
\nonumber
&=O_p(1/\sqrt{n}) (\mathbf{A}^\top \mathrm{W} \mathbf{A})^{-1} 
\mathbf{A}^\top  \mathrm{W} \big(1+ O_p(1/\sqrt{n})\big) \mathbf{e}
+(\mathbf{A}^\top  \mathrm{W}\mathbf{A})^{-1} \mathbf{A}^\top  O_p\big(1/\sqrt{n}\big) \mathrm{W} \mathbf{e}\\
&=O_p(1/\sqrt{n})(\mathbf{A}^\top  \mathrm{W}\mathbf{A})^{-1} 
\mathbf{A}^\top  \mathrm{W} \mathbf{e} \label{approx}
\end{align}
\endgroup
in terms of
\begin{align*}
e_m=O_p(1/\sqrt{n}),~1\leq m \leq M
\end{align*} 
and further
\begingroup
\allowdisplaybreaks
	\begin{align*}
	&\widehat{\mathrm{W}}
	=\mathrm{W} \big(1+ O_p(1/\sqrt{n})\big)\\
	&\widehat{\mathrm{W}} - \mathrm{W} = O_p\big(1/\sqrt{n}\big) \mathrm{W}.
	\end{align*}
\endgroup
This means $\widehat{\bm{\theta}}^{\rm AWLS}$ is a consistently practical approximation for the weighted LSE $\widehat{\bm{\theta}}^{\rm WLS}$.
Actually, the approximation \dref{west} and resulting conclusions \dref{approx} also hold for the following introduced estimators. 

\subsection{Constrained Weighted   Regression}
The standard or weighted least squares solutions might lead to estimates that are not legitimate quantum states. In fact, the quantum state has an essential requirement
\begin{align}
{\rm Tr} (\rho ) =1.
\end{align}
This becomes for the model \eqref{rho} that
\begin{align}
\bm{\theta}^\top {\rm Tr}(\mathsf{B})=1
\end{align}
where ${\rm Tr}(\mathsf{B})$ is defined by
\begin{align}
{\rm Tr}(\mathsf{B}) \eq [{\rm Tr}(\mathsf{B}_1),\cdots, {\rm Tr}(\mathsf{B}_{d^2}) ]^\top.
\end{align}
This inspires us to define  the constrained  least squares (CLS) estimate   
\begin{align}
\widehat{\bm{\theta}}^{\rm CLS}=\argmin_{\bm{\theta}^\top \tr(\mathsf{B})=1} \ ~(\mathbf{y}-\mathbf{A} \bm{\theta})^\top  (\mathbf{y}-\mathbf{A} \bm{\theta}). \label{cls}
\end{align}
For the estimate \eqref{cls}, we have the following proposition to characterize its property.

\begin{prop}
	\label{prop3}
	Suppose $\rank(\mathbf{A})=d^2$. The CLS estimate $\widehat{\bm{\theta}}^{\rm CLS}$ has the following closed-form solution
\begin{align}
\widehat{\bm{\theta}}^{\rm CLS} 
= \widehat{\bm{\theta}}^{\rm LS} 
-
\frac{C   \tr(\mathsf{B}) }
{ \tr(\mathsf{B})^\top  C  \tr(\mathsf{B})}
\big(\tr(\mathsf{B})^\top  \widehat{\bm{\theta}}^{\rm LS}  -1\big) \label{clses}
\end{align}
where $\widehat{\bm{\theta}}^{\rm LS} $ is the least squares estimate given by \eqref{ls} and $C=(\mathbf{A}^\top  \mathbf{A})^{-1}$,
and its MSE matrix is 
	 \begin{align*}
\mathbb{MSE }\big(\widehat{\bm{\theta}}^{\rm CLS} \big)
&\eq \mathbb{E} (\widehat{\bm{\theta}}^{\rm CLS}
  - \bm{\theta})
  (\widehat{\bm{\theta}}^{\rm CLS}  - \bm{\theta})^\top\\
&=F \mathbf{A} ^\top \mathrm{W}^{-1} \mathbf{A} F
\end{align*} 
where
$
F\eq C - \frac{C \tr(\mathsf{B}) \tr(\mathsf{B}) ^\top C}
{ \tr(\mathsf{B})^\top  C \tr(\mathsf{B})}. $
\end{prop}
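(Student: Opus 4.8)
The plan is to treat \eqref{cls} as an equality-constrained convex quadratic program, solve it by Lagrange multipliers to get the closed form, and then propagate the resulting affine expression through the noise to obtain the MSE.

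First I would form the Lagrangian $L(\bm{\theta},\lambda) = (\mathbf{y}-\mathbf{A}\bm{\theta})^\top(\mathbf{y}-\mathbf{A}\bm{\theta}) + 2\lambda(\tr(\mathsf{B})^\top\bm{\theta}-1)$. Since the objective is a convex quadratic (recall $\rank(\mathbf{A})=d^2$, so $\mathbf{A}^\top\mathbf{A}\succ 0$) and the constraint is affine, stationarity together with primal feasibility is both necessary and sufficient for the global minimizer. Setting $\partial L/\partial\bm{\theta}=\mathbf{0}$ gives $\mathbf{A}^\top\mathbf{A}\bm{\theta} = \mathbf{A}^\top\mathbf{y} - \lambda\tr(\mathsf{B})$, hence $\bm{\theta} = \widehat{\bm{\theta}}^{\rm LS} - \lambda C\tr(\mathsf{B})$ with $C = (\mathbf{A}^\top\mathbf{A})^{-1}$. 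Substituting into $\tr(\mathsf{B})^\top\bm{\theta}=1$ and solving for the scalar $\lambda$ yields $\lambda = (\tr(\mathsf{B})^\top\widehat{\bm{\theta}}^{\rm LS}-1)/(\tr(\mathsf{B})^\top C\tr(\mathsf{B}))$; back-substitution then reproduces exactly \eqref{clses}. The denominator is nonzero because $C\succ 0$ and $\tr(\mathsf{B})\neq\mathbf{0}$.

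For the MSE, the key observation is that the true coordinate vector itself satisfies the constraint, $\tr(\mathsf{B})^\top\bm{\theta}=1$, since $\rho$ is a genuine density operator with unit trace. Writing the least squares error as $\widehat{\bm{\theta}}^{\rm LS}-\bm{\theta}=C\mathbf{A}^\top\mathbf{e}$ (immediate from \eqref{ls} and \eqref{rg}), this lets me replace $\tr(\mathsf{B})^\top\widehat{\bm{\theta}}^{\rm LS}-1$ in \eqref{clses} by $\tr(\mathsf{B})^\top C\mathbf{A}^\top\mathbf{e}$. Subtracting $\bm{\theta}$ from \eqref{clses} and factoring out $\mathbf{A}^\top\mathbf{e}$ then collapses the error into the clean affine form $\widehat{\bm{\theta}}^{\rm CLS}-\bm{\theta} = F\mathbf{A}^\top\mathbf{e}$, where $F$ is precisely the symmetric matrix in the statement.

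Finally I would take the outer-product expectation. Because $\mathbb{E}(\mathbf{e})=\mathbf{0}$, the estimator is unbiased and its MSE equals its covariance; using $\mathbb{E}(\mathbf{e}\mathbf{e}^\top)=\mathrm{P}=\mathrm{W}^{-1}$ (diagonal by independence of the $e_m$, with entries from \eqref{varest}) and the symmetry $F^\top=F$, I obtain $\mathbb{MSE}(\widehat{\bm{\theta}}^{\rm CLS}) = F\mathbf{A}^\top\mathrm{W}^{-1}\mathbf{A}F$, as claimed. I do not anticipate a genuine obstacle here: the derivation is routine linear algebra, and the only point requiring care is invoking $\tr(\mathsf{B})^\top\bm{\theta}=1$ to cancel the deterministic part of the constraint residual, which is exactly what makes the error purely linear in $\mathbf{e}$ and hence the estimator exactly unbiased.
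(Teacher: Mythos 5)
Your proof is correct and follows exactly the approach the paper relies on: the paper omits the proof of Proposition \ref{prop3} as standard (citing \citealp{Teil1971,Amemiya1985}), and its Appendix A proof of the analogous Theorem \ref{thm2} is precisely your Lagrange-multiplier derivation specialized with $\gamma$-regularization and weight $\mathrm{W}$, including the same step of using $\tr(\mathsf{B})^\top\bm{\theta}=1$ to reduce the error to $F\mathbf{A}^\top\mathbf{e}$ before taking the outer-product expectation with $\mathbb{E}(\mathbf{e}\mathbf{e}^\top)=\mathrm{P}=\mathrm{W}^{-1}$. Nothing is missing; your remarks on sufficiency of stationarity for the convex QP and on the nonvanishing denominator are correct side conditions that the paper leaves implicit.
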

To make the notation simple, we will a little abuse the symbols $F$ and $C$ for different cases in the following.

 To reduce the MSE of the estimate \eqref{cls}, we can similarly introduce 
 the constrained  weighted least squares (CWLS) estimate   
 \begin{align}\
 \widehat{\bm{\theta}}^{\rm CWLS}
 =\argmin_{\bm{\theta}^\top \tr(\mathsf{B})=1}
   ~(\mathbf{y}-\mathbf{A} \bm{\theta})^\top  \mathrm{W} (\mathbf{y}-\mathbf{A} \bm{\theta}). 
   \label{cwls}
 \end{align}

 \begin{thm}
 	\label{thm1}
 	Suppose $\rank(\mathbf{A})=d^2$ and $p_m\in(0,1)$ for $m=1,\cdots,M$.
 	The estimate $\widehat{\bm{\theta}}^{\rm CWLS}$ can be explicitly written as 
$$
\widehat{\bm{\theta}}^{\rm CWLS} 
= \widehat{\bm{\theta}}^{\rm WLS} 
-
\frac{C \tr(\mathsf{B}) }
{ \tr(\mathsf{B})^\top  C \tr(\mathsf{B})}
\big(\tr(\mathsf{B})^\top  \widehat{\bm{\theta}}^{\rm WLS}  -1\big)
$$
where $\widehat{\bm{\theta}}^{\rm WLS}$ is the WLS estimate \eqref{wls} and $C=(\mathbf{A}^\top  \mathrm{W} \mathbf{A})^{-1} $. The resulting MSE
\begin{align}\label{cwls.mse}
\mathbb{MSE}(\widehat{\bm{\theta}}^{\rm CWLS} )
=\mathbb{E} (\widehat{\bm{\theta}}^{\rm CWLS}  - \bm{\theta})
	(\widehat{\bm{\theta}}^{\rm CWLS}  - \bm{\theta})^\top
	=F 
\end{align}
where
$
F\eq C - \frac{C \tr(\mathsf{B}) \tr(\mathsf{B}) ^\top C}
{ \tr(\mathsf{B})^\top  C \tr(\mathsf{B})} $,
	is optimal in the sense that
	\begin{align*}
	\mathbb{MSE}(\widehat{\bm{\theta}} ) \geq \mathbb{MSE}(\widehat{\bm{\theta}}^{\rm CWLS} )
	\end{align*}
	where $\widehat{\bm{\theta}}$ is any unbiased estimate for $\bm{\theta}$ that is affine in $\mathbf{y}$ and $\bm{\theta}$ satisfies the constraint
	$\bm{\theta}^\top \tr(\mathsf{B})=1$.
	\end{thm}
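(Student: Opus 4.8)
The plan is to establish the three assertions of Theorem \ref{thm1} in sequence: the closed form, the MSE identity \eqref{cwls.mse}, and the optimality. Throughout I abbreviate $\mathbf{v}\eq\tr(\mathsf{B})$. For the closed form I would attach a Lagrange multiplier $\lambda$ to the scalar constraint $\mathbf{v}^\top\bm{\theta}=1$ and minimize $(\mathbf{y}-\mathbf{A}\bm{\theta})^\top\mathrm{W}(\mathbf{y}-\mathbf{A}\bm{\theta})+2\lambda(\mathbf{v}^\top\bm{\theta}-1)$. Setting the $\bm{\theta}$-gradient to zero gives $\mathbf{A}^\top\mathrm{W}\mathbf{A}\,\bm{\theta}=\mathbf{A}^\top\mathrm{W}\mathbf{y}-\lambda\mathbf{v}$, hence $\bm{\theta}=\widehat{\bm{\theta}}^{\rm WLS}-\lambda C\mathbf{v}$ with $C=(\mathbf{A}^\top\mathrm{W}\mathbf{A})^{-1}$. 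Inserting this into the constraint and solving the resulting scalar equation yields $\lambda=(\mathbf{v}^\top\widehat{\bm{\theta}}^{\rm WLS}-1)/(\mathbf{v}^\top C\mathbf{v})$, which reproduces the stated formula; here I would note that $\mathbf{v}^\top C\mathbf{v}>0$ because $C\succ0$ under $\rank(\mathbf{A})=d^2$ and $p_m\in(0,1)$ (so $\mathrm{W}\succ0$), so $\lambda$ is well defined.

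For the MSE, the key observation is that the true parameter already satisfies $\mathbf{v}^\top\bm{\theta}=1$, whence $\mathbf{v}^\top\widehat{\bm{\theta}}^{\rm WLS}-1=\mathbf{v}^\top(\widehat{\bm{\theta}}^{\rm WLS}-\bm{\theta})$. Substituting this into the closed form collapses it to $\widehat{\bm{\theta}}^{\rm CWLS}-\bm{\theta}=G(\widehat{\bm{\theta}}^{\rm WLS}-\bm{\theta})$ with the idempotent $G\eq I-C\mathbf{v}\mathbf{v}^\top/(\mathbf{v}^\top C\mathbf{v})$. Therefore $\mathbb{MSE}(\widehat{\bm{\theta}}^{\rm CWLS})=G\,\mathbb{MSE}(\widehat{\bm{\theta}}^{\rm WLS})\,G^\top=GCG^\top$ by \eqref{wls.mse}. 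I would then record the two elementary identities $GC=F$ and $F\mathbf{v}=0$ (each obtained by cancelling the scalar $\mathbf{v}^\top C\mathbf{v}$), which reduce $GCG^\top=FG^\top=F-(F\mathbf{v})\mathbf{v}^\top C/(\mathbf{v}^\top C\mathbf{v})=F$, establishing \eqref{cwls.mse}.

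The optimality is the substantive part. I would first exhibit $\widehat{\bm{\theta}}^{\rm CWLS}$ as an affine unbiased estimator, $\widehat{\bm{\theta}}^{\rm CWLS}=\mathbf{L}_\star\mathbf{y}+\mathbf{c}_\star$ with $\mathbf{L}_\star=F\mathbf{A}^\top\mathrm{W}$ (using $GC=F$) and $\mathbf{c}_\star=C\mathbf{v}/(\mathbf{v}^\top C\mathbf{v})$, and then compare against an arbitrary competitor $\widehat{\bm{\theta}}=\mathbf{L}\mathbf{y}+\mathbf{c}$. The crucial and delicate step is to convert "unbiased for every admissible $\bm{\theta}$" into an algebraic condition on $\mathbf{L}$: since $\mathbf{L}\mathbf{A}\bm{\theta}+\mathbf{c}=\bm{\theta}$ is only required on the hyperplane $\{\mathbf{v}^\top\bm{\theta}=1\}$ rather than on all of $\mathbb{R}^{d^2}$, differencing two admissible parameters shows that $(\mathbf{L}\mathbf{A}-I)$ must annihilate the $(d^2-1)$-dimensional subspace $\ker(\mathbf{v}^\top)$, which forces $\mathbf{L}\mathbf{A}-I=\mathbf{g}\mathbf{v}^\top$ for some vector $\mathbf{g}$, and then $\mathbf{c}=-\mathbf{g}$. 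With this the estimation error collapses to $\widehat{\bm{\theta}}-\bm{\theta}=\mathbf{L}\mathbf{e}$ and its MSE is $\mathbf{L}\mathrm{W}^{-1}\mathbf{L}^\top$ (recall $\mathrm{P}=\mathrm{W}^{-1}$).

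To finish, I would write $\mathbf{L}=\mathbf{L}_\star+\mathbf{D}$; the two constraints $\mathbf{L}\mathbf{A}=I+\mathbf{g}\mathbf{v}^\top$ and $\mathbf{L}_\star\mathbf{A}=I+\mathbf{g}_\star\mathbf{v}^\top$ give $\mathbf{D}\mathbf{A}=\mathbf{h}\mathbf{v}^\top$ for some $\mathbf{h}$. Expanding $\mathbf{L}\mathrm{W}^{-1}\mathbf{L}^\top$, the cross term is $\mathbf{L}_\star\mathrm{W}^{-1}\mathbf{D}^\top=F\mathbf{A}^\top\mathbf{D}^\top=F(\mathbf{D}\mathbf{A})^\top=F\mathbf{v}\mathbf{h}^\top=0$ thanks to $F\mathbf{v}=0$, so $\mathbf{L}\mathrm{W}^{-1}\mathbf{L}^\top=F+\mathbf{D}\mathrm{W}^{-1}\mathbf{D}^\top\succeq F$ since $\mathrm{W}^{-1}\succ0$, which is the claimed inequality. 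I expect the main obstacle to be precisely the reduction of constraint-restricted unbiasedness to the rank-one condition $\mathbf{L}\mathbf{A}-I=\mathbf{g}\mathbf{v}^\top$: unlike the classical Gauss--Markov setting, $\mathbf{L}\mathbf{A}=I$ is neither assumed nor necessary, and the vanishing of the cross term rests entirely on coupling this relaxed condition with the orthogonality $F\mathbf{v}=0$.
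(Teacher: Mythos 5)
Your proof is correct and is essentially the argument the paper itself relies on: the paper omits the proof of this theorem entirely, stating in Appendix A that it is standard and deferring to \citep{Teil1971,Amemiya1985}, and those references argue exactly as you do---Lagrange-multiplier elimination for the closed form, the error reduction $\widehat{\bm{\theta}}^{\rm CWLS}-\bm{\theta}=G\big(\widehat{\bm{\theta}}^{\rm WLS}-\bm{\theta}\big)$ with $G=\mathsf{I}-C\tr(\mathsf{B})\tr(\mathsf{B})^\top/\big(\tr(\mathsf{B})^\top C\tr(\mathsf{B})\big)$ for the MSE, and the restricted Gauss--Markov comparison in which unbiasedness only over the hyperplane $\{\bm{\theta}:\bm{\theta}^\top\tr(\mathsf{B})=1\}$ is converted into the rank-one condition $\mathbf{L}\mathbf{A}-\mathsf{I}=\mathbf{g}\tr(\mathsf{B})^\top$ and the cross term is killed by $F\tr(\mathsf{B})=0$. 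All of your algebraic identities ($GC=F$, $FC^{-1}F=F$, $\Cov(\mathbf{e})=\mathrm{P}=\mathrm{W}^{-1}$, and the verification that $\widehat{\bm{\theta}}^{\rm CWLS}=F\mathbf{A}^\top\mathrm{W}\mathbf{y}+C\tr(\mathsf{B})/\big(\tr(\mathsf{B})^\top C\tr(\mathsf{B})\big)$ belongs to the competitor class) check out, so your proposal is a complete, self-contained rendition of the proof the paper chose to cite rather than reproduce.
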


\subsection{Regularized Weighted Regression}
Further, we introduce the following weighted regression with $\ell_2$-regularization: 
\begin{subequations}
	\label{rop}
\begin{align}
&\mathop{\rm minimize} \limits_{\bm{\theta}} \quad  ~(\mathbf{y}-\mathbf{A} \bm{\theta})^\top  \mathrm{W} (\mathbf{y}-\mathbf{A} \bm{\theta})+\gamma\|\bm{\theta}\|^2\\
&\mbox{subject to} \quad ~\bm{\theta}^\top \tr(\mathsf{B})=1.\label{cons2}
\end{align}
\end{subequations}
where $\gamma\geq 0$ is a regularization parameter
and $\|\cdot\|$ represents the 2-norm of a vector. The motivation for introducing (\ref{rop}) may arise  from the following two aspects:
\begin{itemize}
\item[(i)] When the POVM $\{\mathsf{M}_m \}_{m=1}^M$ is under-determinate, the  matrix $\mathbf{A}$ in (\ref{rg}) might not have full column rank. As a result, the $\widehat{\bm{\theta}}^{\rm LS}$, $ \widehat{\bm{\theta}}^{\rm CLS}$, and $\widehat{\bm{\theta}}^{\rm CWLS}$ will all fail to produce a unique estimate to the quantum state. The additional $\ell_2$-regularization term in the prediction error function will resolve this non-uniqueness challenge. 
\item[(ii)] In practice, the quantum state $\rho$ is often a combination of some finite number of pure states. As a result, a significant prior knowledge on  $\rho$ would be that it  is of low rank.
Since the rank minimization optimization problem with convex constraints is NP-hard \citep{Recht2010}, the nuclear norm is a common alternative as an approximation of the rank constraint for matrices in various matrix optimization problems. Note that $\rho^\dagger\rho$ has the same rank as that of $\rho$.
As a result,  $\rho^\dagger\rho$ is still of low rank
and the nuclear norm of $\rho^\dagger\rho$ is
	\begingroup
\allowdisplaybreaks
\begin{align*}
\|\rho^\dagger\rho\|_\star &\eq \sum_{i=1}^d \sigma_i(\rho^\dagger\rho)=\tr(\rho^\dagger\rho) \\
&= 
\tr\left[\left( \sum_{i=1}^{d^2}  \theta_i \mathsf{B}_i  \right)^\dagger 
\left( \sum_{j=1}^{d^2}  \theta_j \mathsf{B}_j  \right) \right]\\
&=\sum_{i=1}^{d^2} |\theta_i|^2
\\& = \|\bm{\theta}\|^2.
\end{align*}
\endgroup
Therefore $\ell_2$ regularization can be a good rank penalty as well. 
\end{itemize}
The two aspects are certainly connected in practice, where reconstruction of unknown low-rank quantum state is desried with a small number of measurement basis.

\begin{rem}
For the positive semidefinite quantum state $\rho$,   penalizing the nuclear norm $\rho$ (see, e.g.,  \cite{gross2010}) is not quite well-defined because
	\begingroup
\allowdisplaybreaks
\begin{align}
\nonumber
\|\rho\|_\star &\eq \sum_{i=1}^{d} \sigma_i(\rho) 
=\sum_{i=1}^{d} \sqrt{\lambda_i(\rho^\dagger\rho) }\\
&=\sum_{i=1}^{d} \lambda_i(\rho) 
={\rm Tr}(\rho) =1,
\end{align}
\endgroup
where $\sigma_i$ and $\lambda_i$ are the singular values and eigenvalues of $\rho$, respectively. Note that (\ref{rop}) is essentially the regularized optimization approach adopted in \cite{gross2010} for the numerical study of quantum state reconstruction problems. 
\end{rem}

\begin{rem} The optimization problem (\ref{rop}) can be equivalently represented as 
\begin{subequations}
	\begin{align}
	&\mathop{\rm minimize} \limits_{\bm{\theta}} \quad  ~(\mathbf{y}-\mathbf{A} \bm{\theta})^\top  \mathrm{W} (\mathbf{y}-\mathbf{A} \bm{\theta})\\
	&\mbox{subject to} \quad ~\bm{\theta}^\top \tr(\mathsf{B})=1,~\|\bm{\theta}\|^2\leq c\label{cons1}
	\end{align}
\end{subequations}
where $c>0$ corresponds to $\gamma$. In (\ref{cons1}), it is clear that the $\ell_2$ norm of the $\bm{\theta}$ serves as a constraint from the two aspects of motivations for such regularization. 
\end{rem}

For convenience and consistence of the results displayed in the paper, here we first introduce the 
regularized weighted least squares (RWLS) estimate
\begin{subequations}
	\begin{align}
\widehat{\bm{\theta}}^{\rm RWLS} 
&\eq
\argmin_{\bm{\theta}} 
(\mathbf{y}\!-\!\mathbf{A} \bm{\theta})^\top  \mathrm{W} (\mathbf{y} \!-\!\mathbf{A} \bm{\theta})+\gamma\|\bm{\theta}\|^2\\
&=(\mathbf{A}^\top  \mathrm{W}\mathbf{A} + \gamma \mathsf{I})^{-1}\mathbf{A} ^\top \mathrm{W} \mathbf{y},
\end{align}
\end{subequations}
where the constraint $\bm{\theta}^\top \tr(\mathsf{B})=1$ is neglected. 

The problem \eqref{rop} also has  a closed-from solution, which is stated in the following theorem.
\begin{thm}
	\label{thm2}
The optimal weighted  regularized quantum state estimate, denoted $\widehat{\bm{\theta}}^{\rm CRWLS}$,  as the solution to \eqref{rop} is given by
	\begin{align}
	&\widehat{\bm{\theta}}^{\rm CRWLS}  
	= \widehat{\bm{\theta}}^{\rm RWLS} 
	-C
	\tr(\mathsf{B})
	\frac{ \tr(\mathsf{B})^\top  \widehat{\bm{\theta}}^{\rm RWLS}  -1}{ \tr(\mathsf{B})^\top  C \tr(\mathsf{B})}
	\label{crwls}
	\end{align}
	where $C=(\mathbf{A}^\top  \mathrm{W}\mathbf{A} + \gamma \mathsf{I})^{-1}$.
	The resulting  MSE matrix of $\widehat{\bm{\theta}}^{\rm CRWLS}$ is
	\begin{align}
	\nonumber
	\mathbb{MSE}(\widehat{\bm{\theta}}^{\rm CRWLS})
	&\eq
	\mathbb{E} (\widehat{\bm{\theta}}^{\rm CRWLS}  - \bm{\theta})
	(\widehat{\bm{\theta}}^{\rm CRWLS}  - \bm{\theta})^\top\\
	&=F-\gamma F(  \mathsf{I}  - \gamma  \bm{\theta}\bm{\theta} ^\top)F
\label{mse1}
	\end{align}
	where  $F= C - \frac{C \tr(\mathsf{B}) \tr(\mathsf{B}) ^\top C}
	{ \tr(\mathsf{B})^\top  C \tr(\mathsf{B})} $. 
	\end{thm}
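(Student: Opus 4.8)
The plan is to handle the two claims separately: first the closed form \eqref{crwls}, then the MSE formula \eqref{mse1}, the latter building on the error decomposition obtained from the former.

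For the closed form I would attach a single Lagrange multiplier $\lambda$ to the affine constraint $\bm{\theta}^\top \tr(\mathsf{B})=1$ and form
\[
L(\bm{\theta},\lambda)=(\mathbf{y}-\mathbf{A}\bm{\theta})^\top \mathrm{W}(\mathbf{y}-\mathbf{A}\bm{\theta})+\gamma\|\bm{\theta}\|^2+2\lambda\big(\bm{\theta}^\top \tr(\mathsf{B})-1\big).
\]
Setting $\nabla_{\bm{\theta}}L=0$ gives $(\mathbf{A}^\top \mathrm{W}\mathbf{A}+\gamma\mathsf{I})\bm{\theta}=\mathbf{A}^\top \mathrm{W}\mathbf{y}-\lambda\tr(\mathsf{B})$, so with $C=(\mathbf{A}^\top \mathrm{W}\mathbf{A}+\gamma\mathsf{I})^{-1}$ one reads off $\bm{\theta}=\widehat{\bm{\theta}}^{\rm RWLS}-\lambda C\tr(\mathsf{B})$. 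Imposing the constraint solves for $\lambda$ in closed form, and back-substitution reproduces \eqref{crwls} exactly. Because the objective is strictly convex (its Hessian $\mathbf{A}^\top \mathrm{W}\mathbf{A}+\gamma\mathsf{I}$ is positive definite), this stationary point is the unique global minimizer, so no second-order verification is needed.

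For the MSE I would first record the error of the unconstrained estimate. Using $\mathbf{y}=\mathbf{A}\bm{\theta}+\mathbf{e}$ together with $C\mathbf{A}^\top \mathrm{W}\mathbf{A}=\mathsf{I}-\gamma C$ gives $\widehat{\bm{\theta}}^{\rm RWLS}-\bm{\theta}=-\gamma C\bm{\theta}+C\mathbf{A}^\top \mathrm{W}\mathbf{e}$, which already exhibits the regularization bias $-\gamma C\bm{\theta}$. Since the true $\bm{\theta}$ satisfies $\tr(\mathsf{B})^\top \bm{\theta}=1$, the correction term in \eqref{crwls} can be rewritten via $\tr(\mathsf{B})^\top \widehat{\bm{\theta}}^{\rm RWLS}-1=\tr(\mathsf{B})^\top(\widehat{\bm{\theta}}^{\rm RWLS}-\bm{\theta})$, so that
\[
\widehat{\bm{\theta}}^{\rm CRWLS}-\bm{\theta}=G\big(\widehat{\bm{\theta}}^{\rm RWLS}-\bm{\theta}\big),\qquad G\eq \mathsf{I}-\frac{C\tr(\mathsf{B})\tr(\mathsf{B})^\top}{\tr(\mathsf{B})^\top C\tr(\mathsf{B})}.
\]
The key structural observations are that $G$ is idempotent ($G^2=G$, by direct expansion since the rank-one piece squares to itself) and that $F=GC$; combining these with the RWLS error yields the clean form $\widehat{\bm{\theta}}^{\rm CRWLS}-\bm{\theta}=-\gamma F\bm{\theta}+F\mathbf{A}^\top \mathrm{W}\mathbf{e}$.

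Taking the outer product and expectation, the cross terms vanish because $\mathbb{E}(\mathbf{e})=0$, leaving a squared-bias piece $\gamma^2 F\bm{\theta}\bm{\theta}^\top F$ and a variance piece $F\mathbf{A}^\top \mathrm{W}\,\mathbb{E}(\mathbf{e}\mathbf{e}^\top)\mathrm{W}\mathbf{A}F$. Since $\mathrm{W}=\mathrm{P}^{-1}$ and $\mathbb{E}(\mathbf{e}\mathbf{e}^\top)=\mathrm{P}$, the sandwich $\mathrm{W}\mathrm{P}\mathrm{W}$ collapses to $\mathrm{W}$, so the variance piece is $F\mathbf{A}^\top \mathrm{W}\mathbf{A}F$. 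The final and most delicate step is the identity $F\mathbf{A}^\top \mathrm{W}\mathbf{A}F=F-\gamma F^2$: writing $\mathbf{A}^\top \mathrm{W}\mathbf{A}=C^{-1}-\gamma\mathsf{I}$ reduces this to $FC^{-1}F=F$, which follows at once from $F=GC$ and $G^2=G$, since $FC^{-1}F=GC\,C^{-1}\,GC=G^2C=GC=F$. Assembling the two pieces gives $F-\gamma F^2+\gamma^2 F\bm{\theta}\bm{\theta}^\top F=F-\gamma F(\mathsf{I}-\gamma\bm{\theta}\bm{\theta}^\top)F$, which is \eqref{mse1}. I expect this last algebraic collapse — recognizing the idempotent projector $G$ and exploiting the substitution $\mathbf{A}^\top \mathrm{W}\mathbf{A}=C^{-1}-\gamma\mathsf{I}$ to force $FC^{-1}F=F$ — to be the main obstacle; the rest is bookkeeping around the bias term introduced by $\gamma$.
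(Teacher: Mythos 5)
Your proposal is correct and follows essentially the same route as the paper's proof: a Lagrange multiplier argument yielding $\widehat{\bm{\theta}}^{\rm CRWLS}=\widehat{\bm{\theta}}^{\rm RWLS}-\lambda^* C\tr(\mathsf{B})$, then the error decomposition $\widehat{\bm{\theta}}^{\rm CRWLS}-\bm{\theta}=-\gamma F\bm{\theta}+F\mathbf{A}^\top\mathrm{W}\mathbf{e}$ (using $\tr(\mathsf{B})^\top\bm{\theta}=1$), and finally the substitution $\mathbf{A}^\top\mathrm{W}\mathbf{A}=C^{-1}-\gamma\mathsf{I}$ together with $FC^{-1}F=F$ to collapse the variance term. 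The only difference is cosmetic and in your favor: the paper asserts the identities $F^\top=F$, $F\tr(\mathsf{B})=0$, $FC^{-1}F=F$ without proof, whereas you derive $FC^{-1}F=F$ cleanly via the idempotent factor $G$ with $F=GC$, and you make explicit the steps $\mathbb{E}(\mathbf{e}\mathbf{e}^\top)=\mathrm{P}$ and $\mathrm{W}\mathrm{P}\mathrm{W}=\mathrm{W}$ that the paper uses silently.
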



It is worth noting that Theorem~\ref{thm2} does not depend on the rank of $\mathbf{A}$. 
The next theorem shows that the CRWLS estimate $\widehat{\bm{\theta}}^{\rm CRWLS}$ yields immediate improvement in terms of mean squred error if the regularization parameter $\gamma$ is well chosen.

\begin{thm}
	\label{thm5}
	There holds
	$$\mathbb{MSE}(\widehat{\bm{\theta}}^{\rm CRWLS}) 
	<	\mathbb{MSE}(\widehat{\bm{\theta}}^{\rm CWLS})$$
	if
	$0<\gamma<2/\big(\|\bm{\theta}\|^2
	-\frac1{\|{\rm Tr}(\mathsf{B})\|^2}\big)$.
\end{thm}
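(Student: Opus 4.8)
The plan is to prove the matrix inequality by isolating a \emph{variance-reduction} term and a \emph{bias} term and showing that, for the stated range of $\gamma$, the former dominates the latter. Throughout I would write $G\eq\mathbf{A}^\top\mathrm{W}\mathbf{A}\succ 0$ and $\mathbf{t}\eq\tr(\mathsf{B})$, and, following the warning about abusing $F$ and $C$, attach subscripts: let $F_0$ be the matrix $F$ of Theorem~\ref{thm1} built from $C_0=G^{-1}$, so $\mathbb{MSE}(\widehat{\bm{\theta}}^{\rm CWLS})=F_0$, and let $F_\gamma$ be the matrix $F$ of Theorem~\ref{thm2} built from $C_\gamma=(G+\gamma\mathsf{I})^{-1}$, so by \eqref{mse1} $\mathbb{MSE}(\widehat{\bm{\theta}}^{\rm CRWLS})=F_\gamma-\gamma F_\gamma^2+\gamma^2 F_\gamma\bm{\theta}\bm{\theta}^\top F_\gamma$. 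The target is then
\[
F_0-\mathbb{MSE}(\widehat{\bm{\theta}}^{\rm CRWLS})
=(F_0-F_\gamma)+\gamma F_\gamma^2-\gamma^2 F_\gamma\bm{\theta}\bm{\theta}^\top F_\gamma\succ 0 .
\]

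The backbone is a short list of identities, all flowing from the single fact $F_\bullet\mathbf{t}=0$ (immediate from the definition of $F$, hence also $\mathbf{t}^\top F_\bullet=0$ by symmetry). First I would record $F_0 G=\mathsf{I}-C_0\mathbf{t}\mathbf{t}^\top/(\mathbf{t}^\top C_0\mathbf{t})$, since $C_0G=\mathsf{I}$; multiplying on the right by $F_\gamma$ and using $\mathbf{t}^\top F_\gamma=0$ gives the cross-identity $F_0GF_\gamma=F_\gamma=F_\gamma GF_0$. Similarly $F_\gamma(G+\gamma\mathsf{I})F_\gamma=F_\gamma$ yields $F_\gamma GF_\gamma=F_\gamma-\gamma F_\gamma^2$, and $F_0GF_0=F_0$. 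Expanding $(F_0-F_\gamma)G(F_0-F_\gamma)$ and substituting these four evaluations collapses the cross terms and produces the crucial identity
\[
F_0-F_\gamma-\gamma F_\gamma^2=(F_0-F_\gamma)\,G\,(F_0-F_\gamma)\succeq 0 ,
\]
because $G\succ 0$. This is the engine of the proof: it upgrades the awkward comparison $F_0-F_\gamma\succeq\gamma F_\gamma^2$ to a manifest quadratic form.

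With the variance part in hand, the remaining work is to control the rank-one bias term. I would decompose $\bm{\theta}=\mathbf{t}/\|\mathbf{t}\|^2+\bm{\theta}_\perp$ with $\bm{\theta}_\perp\perp\mathbf{t}$; the constraint $\mathbf{t}^\top\bm{\theta}=1$ then forces $\|\bm{\theta}_\perp\|^2=\|\bm{\theta}\|^2-1/\|\mathbf{t}\|^2$. Since $F_\gamma\mathbf{t}=0$ we have $F_\gamma\bm{\theta}=F_\gamma\bm{\theta}_\perp$, so using $\bm{\theta}_\perp\bm{\theta}_\perp^\top\preceq\|\bm{\theta}_\perp\|^2\mathsf{I}$,
\[
\gamma^2 F_\gamma\bm{\theta}\bm{\theta}^\top F_\gamma
=\gamma^2 F_\gamma\bm{\theta}_\perp\bm{\theta}_\perp^\top F_\gamma
\preceq \gamma^2\Big(\|\bm{\theta}\|^2-\tfrac{1}{\|\mathbf{t}\|^2}\Big)F_\gamma^2 .
\]
Feeding this bound and the crucial identity (in the form $F_0-F_\gamma\succeq\gamma F_\gamma^2$) into the target difference gives
\[
F_0-\mathbb{MSE}(\widehat{\bm{\theta}}^{\rm CRWLS})
\succeq 2\gamma F_\gamma^2-\gamma^2\Big(\|\bm{\theta}\|^2-\tfrac{1}{\|\mathbf{t}\|^2}\Big)F_\gamma^2
=\gamma\Big(2-\gamma\big(\|\bm{\theta}\|^2-\tfrac{1}{\|\mathbf{t}\|^2}\big)\Big)F_\gamma^2 .
\]
The factor $2$ is precisely the sum of the explicit $\gamma F_\gamma^2$ in the MSE formula and the hidden $\gamma F_\gamma^2$ extracted from $F_0-F_\gamma$; this is what yields the threshold $2/(\|\bm{\theta}\|^2-1/\|\mathbf{t}\|^2)$ rather than the weaker $1/(\cdots)$ one gets by ignoring the variance gain $F_0-F_\gamma$.

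Finally, for $0<\gamma<2/(\|\bm{\theta}\|^2-1/\|\mathbf{t}\|^2)$ the scalar coefficient is strictly positive, and since $\rank(\mathbf{A})=d^2$ forces $\ker F_\gamma=\mathrm{span}(\mathbf{t})$, the matrix $F_\gamma^2$ is positive definite on $\mathbf{t}^\perp$, giving strict positivity there. One point I would state explicitly is that both MSE matrices annihilate $\mathbf{t}$ (every term ends in a factor $F_\bullet$), so the strict inequality is necessarily read on the subspace $\mathbf{t}^\perp$ carrying the estimation error, the shared kernel direction $\mathbf{t}$ being immaterial. I expect the genuine obstacle to be the cross-identity $F_0GF_\gamma=F_\gamma$ and the ensuing collapse of $(F_0-F_\gamma)G(F_0-F_\gamma)$: because $F_0$ and $F_\gamma$ come from different inverses $G^{-1}$ and $(G+\gamma\mathsf{I})^{-1}$ they do not commute, so this identity cannot be guessed from scalar intuition and must be squeezed out of $F_\bullet\mathbf{t}=0$; once it is established, the rest is routine positive-semidefinite bookkeeping.
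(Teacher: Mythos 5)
Your proof is correct, but it takes a genuinely different route from the paper's. The paper never argues in the $\bm{\theta}$-coordinates at all: it invokes the equivalent unconstrained model of Section \ref{sec4}, writing $\widehat{\bm{\theta}}^{\rm CRWLS}-\bm{\theta}=\widetilde{\mathbf{Q}}^\top\big(\widehat{\bm{\alpha}}^{\rm RWLS}-\bm{\alpha}\big)$ via the orthogonal matrix $\mathbf{Q}$, and then reduces everything to the unconstrained ridge comparison \eqref{gg1}, which is proved by an \emph{exact} factorization of the MSE gap, $\mathbb{MSE}\big(\widehat{\bm{\alpha}}^{\rm RWLS}(\gamma)\big)-\mathbb{MSE}\big(\widehat{\bm{\alpha}}^{\rm RWLS}(0)\big)=\gamma\mathbf{V}\big(\gamma\bm{\alpha}\bm{\alpha}^\top-\gamma(\mathbf{K}^\top\mathrm{W}\mathbf{K})^{-1}-2\mathsf{I}\big)\mathbf{V}$, negative definite when $\gamma\bm{\alpha}\bm{\alpha}^\top-2\mathsf{I}\prec 0$, i.e.\ $\gamma<2/\|\bm{\alpha}\|^2$, with $\|\bm{\alpha}\|^2=\|\bm{\theta}\|^2-1/\|{\rm Tr}(\mathsf{B})\|^2$ coming from \eqref{alphatheta}. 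You instead stay in the original coordinates and replace the change of variables by the projection identities $F_\bullet\,{\rm Tr}(\mathsf{B})=0$ and $FC^{-1}F=F$ (the paper records these as \eqref{a8} but uses them only for Theorem \ref{thm2}); your cross-identity $F_0GF_\gamma=F_\gamma$ and the resulting collapse $F_0-F_\gamma-\gamma F_\gamma^2=(F_0-F_\gamma)G(F_0-F_\gamma)\succeq 0$ do exactly the work that the paper's $\mathbf{Q}$-reduction does, and your orthogonal split $\bm{\theta}={\rm Tr}(\mathsf{B})/\|{\rm Tr}(\mathsf{B})\|^2+\bm{\theta}_\perp$ recovers \eqref{alphatheta} directly. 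The trade-offs: the paper's computation is an identity with no slack and reuses machinery (Proposition \ref{prop4}, Lemma \ref{lemb1}) that is needed again for Theorem \ref{thm6}, whereas your argument incurs Cauchy--Schwarz slack in $\bm{\theta}_\perp\bm{\theta}_\perp^\top\preceq\|\bm{\theta}_\perp\|^2\mathsf{I}$ yet lands on the same threshold; on the other hand your version is self-contained (no construction of $\mathbf{Q}$ required) and is actually more scrupulous about strictness --- both MSE matrices annihilate ${\rm Tr}(\mathsf{B})$, so the inequality can only be strict on ${\rm Tr}(\mathsf{B})^\perp$, a point you state explicitly while the paper glosses it by conjugating a strict $(d^2-1)$-dimensional inequality with the non-square $\widetilde{\mathbf{Q}}$. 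One small caveat: your strictness step uses $\ker F_\gamma={\rm span}\big({\rm Tr}(\mathsf{B})\big)$, which needs $C_\gamma\succ 0$ --- automatic here since $C_\gamma=(G+\gamma\mathsf{I})^{-1}$ with $\gamma>0$, so unlike the CWLS side you do not even need $\rank(\mathbf{A})=d^2$ for that particular step, though it is implicitly assumed for $\widehat{\bm{\theta}}^{\rm CWLS}$ to be defined.
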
	
\begin{rem}
	\label{rem3}
There holds from the Cauchy–Schwarz inequality that  $$
\|\bm{\theta}\|^2 \|{\rm Tr}(\mathsf{B})\|^2 \geq |\bm{\theta}^\top {\rm Tr}(\mathsf{B})|^2 =  1
$$ for all  quantum states $\rho$. Moreover, when strict equality takes place, there is $\lambda \in \mathbb{R}$ such that  $\theta_i= \lambda {\rm Tr} (\mathsf{B}_i)$ for all $i=1,\dots,d^2$. As a result, 
$$
 {\rm Tr} (\rho \mathsf{B}_i)= \theta_i= \lambda {\rm Tr} (\mathsf{B}_i), \ \ i=1,\dots,d^2
$$
which implies $\rho =\lambda \mathsf{I}$, and  hence $\lambda$ must be $1/d$. Therefore, we have just established  that$$
\|\bm{\theta}\|^2>\frac1{\|{\rm Tr}(\mathsf{B})\|^2}
	$$
for all $\rho$ as quantum states except for $\rho = \mathsf{I}/d$.
\end{rem}	
Theorem \ref{thm5} shows that regularization that considers the low rank property of the quantum state $\rho$ can further improve the estimate for the parameter vector $\bm{\theta}$ if we can choose a proper $\gamma$.

\begin{rem}
Theorem \ref{thm1} indicates $\widehat{\bm{\theta}}^{\rm CWLS}$ has the smallest MSE among all the unbiased estimate of $\bm{\theta}$ linear with $\mathbf{y}$ while Theorem \ref{thm5} shows that $\widehat{\bm{\theta}}^{\rm CRWLS}$ has a smaller MSE than $\widehat{\bm{\theta}}^{\rm CWLS}$ even if $\widehat{\bm{\theta}}^{\rm CRWLS}$ is also linear with $\mathbf{y}$.
	 The reason is that regularization introduces a small bias but decreases the variance more such that the total MSE is smaller.
\end{rem}	
 
The estimate $\widehat{\bm{\theta}}^{\rm CRWLS}$ is a function of the regularization parameter $\gamma$, the selection of which needs to be determined carefully to achieve a better performance.
The essence of tuning $\gamma$ is to choose a proper model complexity for the estimate $\widehat{\bm{\theta}}^{\rm CRWLS}$  given the data.
Here we provide a method of tuning $\gamma$ by the measurements based on the risk definition of the estimate $\widehat{\bm{\theta}}^{\rm CRWLS}$.
Also, we will prove that the tuning method is asymptotically optimal in the risk sense. For convenience of derivation, rewrite the estimate $\widehat{\bm{\theta}}^{\rm CRWLS}$
as the affine form with respective the output $\mathbf{y}$
\begin{align}
\widehat{\bm{\theta}}^{\rm CRWLS}=\mathbf{H}\mathbf{y} + \bm{f},
\end{align}
where with $C=(\mathbf{A}^\top  \mathrm{W}\mathbf{A} + \gamma \mathsf{I})^{-1}$,
	\begingroup
\allowdisplaybreaks
\begin{align*}
&\mathbf{H}=C\mathbf{A}^\top  \mathrm{W} 
-C \tr(\mathsf{B}) \nonumber
	\frac{ \tr(\mathsf{B})^\top  C \mathbf{A}^\top  \mathrm{W} }{ \tr(\mathsf{B})^\top  C\tr(\mathsf{B})},\\
	&\bm{f}=\frac{C \tr(\mathsf{B})}
	{ \tr(\mathsf{B})^\top  C \tr(\mathsf{B})}.
\end{align*}
\endgroup
Let us introduce the risk for the estimate $\widehat{\bm{\theta}}^{\rm CRWLS}$ defined by \citep{Rao2018}:
\begin{subequations}
	\label{thetarisk}
	\begin{align}
R(\widehat{\bm{\theta}}^{\rm CRWLS} )
&
\eq \mathbb{E}\big(\mathbf{A} \bm{\theta} -   \mathbf{A}\widehat{\bm{\theta}}^{\rm CRWLS}  \big)^\top  
\mathrm{W}
\big(\mathbf{A} \bm{\theta} -   \mathbf{A}\widehat{\bm{\theta}}^{\rm CRWLS}  \big)\\\
&=\gamma^2
\bm{\theta} ^\top
\! F
\mathbf{A}^\top  
\!
\mathrm{W}\mathbf{A} 
F\bm{\theta}
+\tr\left(  F  \mathbf{A}^\top  \mathrm{W}\mathbf{A}  
F \mathbf{A}^\top 
\!\mathrm{W} \mathbf{A}  \right)
\label{risktheta}
\end{align}
\end{subequations}
which is a reference measure to characterize how well the estimate \eqref{crwls} can achieve, namely, 
gives an upper bound of the estimate \eqref{crwls} in the risk sense \eqref{thetarisk}.
Thus the regularization parameter $\gamma$  tuned by the risk \eqref{thetarisk}
\begin{align}
\widehat{ \gamma}_{R}
(\widehat{\bm{\theta}}^{\rm CRWLS} )
\eq\argmin_{\gamma \geq 0}R(\widehat{\bm{\theta}}^{\rm CRWLS} )
\label{rprisktheta}
\end{align}
is the optimal regularization parameter of $\gamma$ for any given data in the  risk sense.
Unfortunately, the cost function \eqref{thetarisk} of \eqref{rprisktheta}  requires the access to the true parameter $\bm{\theta}$, which is usually unavailable for a system to be identified.

In the following, we use an unbiased estimate for \eqref{thetarisk} as the cost function of an implementable tuning estimator in terms of data to estimate $\gamma$,
which is given by
\begin{align}
\widehat{ \gamma}_u
(\widehat{\bm{\theta}}^{\rm CRWLS} )=\argmin_{\gamma\geq 0}\
(\mathbf{y} - \mathbf{A}\widehat{\bm{\theta}}^{\rm CRWLS})^\top
\mathrm{W}
(\mathbf{y} - \mathbf{A}\widehat{\bm{\theta}}^{\rm CRWLS})+2\tr\big( \mathbf{A} \mathbf{H}  \big). 
\label{unbiasedrisktheta}
\end{align}
It can be verified that the expectation of the cost function \dref{unbiasedrisktheta} over the estimation error $\mathbf{e}$ is exactly the risk \dref{thetarisk}.

The properties of $\widehat{ \gamma}_{R}
(\widehat{\bm{\theta}}^{\rm CRWLS} )$ and $\widehat{ \gamma}_u
(\widehat{\bm{\theta}}^{\rm CRWLS} )$
will be given in the next section under an alternative regression model.

\section{An Equivalent Regression Model}
\label{sec4}
Up to now our discussions on the quantum state tomography problem have been around the linear model with an equality constraint:
\begin{subequations}
	\label{constlm}
	\begin{align}
&\mathbf{y}=\mathbf{A}\bm{\theta} + \mathbf{e}\\
&\mbox{subject to }\bm{\theta}^\top \tr(\mathsf{B})=1.\label{consp}
\end{align}
\end{subequations}
In this section, we first present a way of transforming this standing linear regression model into an unconstrained version. Then, under the new but equivalent model we establish some important asymptotic  properties of the regularized regression solutions as sample size grows.

\subsection{Eliminating Equality  Constraint} 
Let us first construct an orthogonal matrix $\mathbf{Q}$ of size $d^2\times d^2$ as follows:
The first row of $\mathbf{Q}$ is ${\rm Tr}(\mathsf{B})^\top/\|{\rm Tr}(\mathsf{B})\|$ and the remaining rows are chosen such that $\mathbf{Q}$ is orthogonal.
Thus, we have from \dref{constlm} that
\begin{align}
\mathbf{y}=\mathbf{D}\bm{\beta} + \mathbf{e}  \label{equimodel}
\end{align}
where
\begin{subequations}
	\label{qmatrix}
	\begin{align}
&\mathbf{D}\eq \mathbf{A}\mathbf{Q}^\top=[\mathbf{d},\mathbf{K}]\\
&\bm{\beta} \eq \mathbf{Q}\bm{\theta}=[\beta_1,\bm{\alpha}^\top]^\top. \label{beta}
\end{align}
\end{subequations}
The constraint \eqref{consp} on $\bm{\theta}$ is forced by the fact that the first element $\beta_1$ of $\bm{\beta}$ is $1/\|{\rm Tr}(\mathsf{B})\|$.
As a result,  the problem \eqref{constlm} is equivalent to  the unconstrained linear model 
\begin{align}
\mathbf{z}=\mathbf{y} - \frac{1} {\|{\rm Tr}(\mathsf{B})\|}\mathbf{d}
= \mathbf{K} \bm{\alpha} + \mathbf{e}. 
\label{unconslm}
\end{align}
Clearly, by \eqref{beta}
\begin{align}
\|\bm{\alpha}\|^2 
=
\|\bm{\theta}\|^2
-\frac1{\|{\rm Tr}(\mathsf{B})\|^2}.\label{alphatheta}
\end{align}
Thus,  regularization (low rank property) on  $\rho^\dagger\rho$ 
can also be embedded into $\bm{\alpha}$.

For the model \eqref{unconslm}, 
we can produce the corresponding LS, WLS, RWLS estimates.
Here, we  consider the RWLS estimate for \eqref{unconslm} since other estimates (LS, WLS) are special cases by setting $\gamma=0$ and/or $\mathrm{W}=\mathsf{I}$.
The RWLS estimate for \eqref{unconslm} is defined as
	\begin{align}\label{unconstrwls}
	\widehat{\bm{\alpha}}^{\rm RWLS} 
	=
	\argmin_{\bm{\alpha}} 
	(\mathbf{z} \!-\! \mathbf{K} \bm{\alpha})^\top  \mathrm{W}
	(\mathbf{z}\!-\!\mathbf{K} \bm{\alpha}) \!+\!\gamma\|\bm{\alpha}\|^2
	=\mathbf{U}\mathbf{z}
	\end{align}
where
	\begin{align}
	\mathbf{U}\eq \mathbf{V}\mathbf{K} ^\top \mathrm{W},~~
	\mathbf{V}\eq (\mathbf{K}^\top  \mathrm{W}\mathbf{K} + \gamma \mathsf{I})^{-1}. 
	\end{align}
Intuitively, for an estimate $\widehat{\bm{\alpha} }$ of \eqref{unconslm},
the vector defined by
\begin{align}
\widehat{\bm{\theta} }(\widehat{\bm{\alpha} })\eq 
\mathbf{Q}^\top\left[
\begin{array}{c}
\frac1 {\|{\rm Tr}(\mathsf{B})\|}\\
\widehat{\bm{\alpha} }
\end{array}
\right]
\end{align}
should be the corresponding estimate for \eqref{constlm} and independent of the choice of $\mathbf{Q}$.
However, this is not obvious.
Now, we intend to show that the hypothesis above is true.

\begin{prop}
	\label{prop4}
For any regularization parameter $\gamma\geq 0$, there holds
	\begin{align} 
	\widehat{\bm{\theta} }(\widehat{\bm{\alpha}}^{\rm RWLS} )
	=\widehat{\bm{\theta}}^{\rm CRWLS}. \label{equiv}
	\end{align}
	Moreover,
	\begin{align*}
	\mathbb{MSE }\big(\widehat{\bm{\alpha}}^{\rm RWLS} (\gamma)\big)
	&\eq \mathbb{E} (\widehat{\bm{\alpha}}^{\rm RWLS}
	- \bm{\alpha})
	(\widehat{\bm{\alpha}}^{\rm RWLS}
	 - \bm{\alpha})^\top\\
	&=\gamma^2  \mathbf{V} \bm{\alpha} \bm{\alpha}^\top \mathbf{V}
	+ \mathbf{V} \mathbf{K}^\top 
	\mathrm{W}
	\mathbf{K} 
	\mathbf{V}.
	\end{align*} 
%

\end{prop}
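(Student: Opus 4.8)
The plan is to prove the two assertions separately: the equivalence \eqref{equiv} by an orthogonal change of variables that turns the constrained problem \eqref{rop} into the unconstrained ridge problem \eqref{unconstrwls}, and the MSE formula by a direct moment computation of the estimation error.

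For the equivalence, I would exploit that $\mathbf{Q}$ is orthogonal. Writing $\bm{\theta}=\mathbf{Q}^\top\bm{\beta}$ with $\bm{\beta}=[\beta_1,\bm{\alpha}^\top]^\top$, orthogonality gives $\|\bm{\theta}\|^2=\|\bm{\beta}\|^2=\beta_1^2+\|\bm{\alpha}\|^2$, and the first row of $\mathbf{Q}$ being $\tr(\mathsf{B})^\top/\|\tr(\mathsf{B})\|$ forces $\beta_1=\tr(\mathsf{B})^\top\bm{\theta}/\|\tr(\mathsf{B})\|=1/\|\tr(\mathsf{B})\|$ exactly on the feasible set of \eqref{cons2}. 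Next, $\mathbf{A}\bm{\theta}=\mathbf{D}\bm{\beta}=\beta_1\mathbf{d}+\mathbf{K}\bm{\alpha}$, so the residual equals $\mathbf{y}-\mathbf{A}\bm{\theta}=\mathbf{z}-\mathbf{K}\bm{\alpha}$ with $\mathbf{z}$ as in \eqref{unconslm}. Substituting, the objective of \eqref{rop} becomes $(\mathbf{z}-\mathbf{K}\bm{\alpha})^\top\mathrm{W}(\mathbf{z}-\mathbf{K}\bm{\alpha})+\gamma\|\bm{\alpha}\|^2+\gamma/\|\tr(\mathsf{B})\|^2$, whose last term is constant in $\bm{\alpha}$. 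Hence minimizing \eqref{rop} over feasible $\bm{\theta}$ is equivalent to minimizing the objective of \eqref{unconstrwls} over free $\bm{\alpha}$, and the unique minimizers correspond via $\widehat{\bm{\theta}}^{\rm CRWLS}=\mathbf{Q}^\top[1/\|\tr(\mathsf{B})\|,(\widehat{\bm{\alpha}}^{\rm RWLS})^\top]^\top=\widehat{\bm{\theta}}(\widehat{\bm{\alpha}}^{\rm RWLS})$. This argument also settles the preceding remark that $\widehat{\bm{\theta}}(\widehat{\bm{\alpha}}^{\rm RWLS})$ is independent of the choice of $\mathbf{Q}$, since the right-hand side $\widehat{\bm{\theta}}^{\rm CRWLS}$ is defined intrinsically by \eqref{rop}.

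For the MSE, I would start from $\widehat{\bm{\alpha}}^{\rm RWLS}=\mathbf{U}\mathbf{z}=\mathbf{U}(\mathbf{K}\bm{\alpha}+\mathbf{e})$ and isolate the key algebraic identity $\mathbf{U}\mathbf{K}=\mathbf{V}\mathbf{K}^\top\mathrm{W}\mathbf{K}=\mathbf{V}(\mathbf{V}^{-1}-\gamma\mathsf{I})=\mathsf{I}-\gamma\mathbf{V}$, which follows from $\mathbf{V}^{-1}=\mathbf{K}^\top\mathrm{W}\mathbf{K}+\gamma\mathsf{I}$. This yields the clean error decomposition $\widehat{\bm{\alpha}}^{\rm RWLS}-\bm{\alpha}=-\gamma\mathbf{V}\bm{\alpha}+\mathbf{U}\mathbf{e}$, a deterministic bias plus a zero-mean stochastic term. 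Forming the outer product and taking expectation, the cross terms drop out because $\mathbb{E}(\mathbf{e})=0$, leaving $\gamma^2\mathbf{V}\bm{\alpha}\bm{\alpha}^\top\mathbf{V}+\mathbf{U}\,\mathbb{E}(\mathbf{e}\mathbf{e}^\top)\,\mathbf{U}^\top$. Finally I would use $\mathbb{E}(\mathbf{e}\mathbf{e}^\top)=\mathrm{P}=\mathrm{W}^{-1}$ from \eqref{varest} together with $\mathbf{U}^\top=\mathrm{W}\mathbf{K}\mathbf{V}$ (both $\mathrm{W}$ and $\mathbf{V}$ symmetric) to collapse $\mathbf{U}\mathrm{W}^{-1}\mathbf{U}^\top=\mathbf{V}\mathbf{K}^\top\mathrm{W}\mathbf{K}\mathbf{V}$, which gives the stated MSE.

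I expect the only genuinely delicate point to be the first part: carefully verifying that the affine map $\bm{\alpha}\mapsto\mathbf{Q}^\top[1/\|\tr(\mathsf{B})\|,\bm{\alpha}^\top]^\top$ is a bijection from $\mathbb{R}^{d^2-1}$ onto the feasible hyperplane $\{\bm{\theta}:\bm{\theta}^\top\tr(\mathsf{B})=1\}$, so that constrained and unconstrained minimizers genuinely correspond, and that the constant $\gamma/\|\tr(\mathsf{B})\|^2$ split off from $\gamma\|\bm{\theta}\|^2$ does not perturb the argmin. The moment computation in the second part is routine once the identity $\mathbf{U}\mathbf{K}=\mathsf{I}-\gamma\mathbf{V}$ is in hand; the only caveat is that $\mathbb{E}(\mathbf{e}\mathbf{e}^\top)$ is diagonal, which relies on the independence of the $e_m$ in the separate-measurement model rather than the simultaneous one.
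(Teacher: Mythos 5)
Your proof is correct, and the two halves compare differently with the paper. The MSE half is essentially the paper's own computation: both of you derive the error decomposition $\widehat{\bm{\alpha}}^{\rm RWLS}-\bm{\alpha}=-\gamma\mathbf{V}\bm{\alpha}+\mathbf{V}\mathbf{K}^\top\mathrm{W}\mathbf{e}$ from the identity $\mathbf{V}\mathbf{K}^\top\mathrm{W}\mathbf{K}=\mathsf{I}-\gamma\mathbf{V}$ and then take the outer-product expectation using $\mathbb{E}(\mathbf{e}\mathbf{e}^\top)=\mathrm{W}^{-1}$; your explicit caveat that the diagonal covariance relies on the independent, separate-measurement model is a point the paper leaves implicit. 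The equivalence \eqref{equiv}, however, you prove by a genuinely different route. You argue variationally: the orthogonal change of variables $\bm{\beta}=\mathbf{Q}\bm{\theta}$ maps the feasible hyperplane $\{\bm{\theta}:\bm{\theta}^\top\tr(\mathsf{B})=1\}$ bijectively onto $\{\bm{\beta}:\beta_1=1/\|\tr(\mathsf{B})\|\}$, preserves the penalty up to the additive constant $\gamma/\|\tr(\mathsf{B})\|^2$, and turns the residual of \eqref{rop} into that of \eqref{unconstrwls}, so the argmins correspond. The paper instead works at the level of the closed-form solutions: it writes $\widehat{\bm{\theta}}^{\rm CRWLS}=\mathbf{H}\mathbf{y}+\bm{f}$ and proves in Lemma \ref{lemb1}, via a block-inverse computation, the identities $\mathbf{Q}\mathbf{H}=\left[\begin{smallmatrix}0\\ \mathbf{U}\end{smallmatrix}\right]$ and $\mathbf{Q}\bm{f}=\|\tr(\mathsf{B})\|^{-1}\left[\begin{smallmatrix}1\\ -\mathbf{U}\mathbf{d}\end{smallmatrix}\right]$, from which \eqref{equiv} follows by direct multiplication. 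Your route is shorter and more conceptual; in particular it makes it transparent why the correspondence cannot depend on the choice of $\mathbf{Q}$, since the right-hand side of \eqref{equiv} is defined intrinsically by \eqref{rop}, whereas the paper's computation only verifies this a posteriori. The trade-offs: your argument leans on Theorem \ref{thm2} having identified $\widehat{\bm{\theta}}^{\rm CRWLS}$ as the minimizer of \eqref{rop}, and on uniqueness of the minimizers (automatic for $\gamma>0$; for $\gamma=0$ it requires $\mathbf{K}^\top\mathrm{W}\mathbf{K}\succ0$, which you correctly flagged as the delicate point), while the paper's algebraic route produces the explicit identities of Lemma \ref{lemb1} as reusable by-products --- the proof of Proposition \ref{prop5} needs $\mathbf{Q}\mathbf{H}=\left[\begin{smallmatrix}0\\ \mathbf{U}\end{smallmatrix}\right]$ to get $\tr(\mathbf{A}\mathbf{H})=\tr(\mathbf{K}\mathbf{U})$, and Lemma \ref{lemb3} reuses it at $\gamma=0$. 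That said, since your equivalence holds for every realization of $\mathbf{y}$ and both sides are affine in $\mathbf{y}$, Lemma \ref{lemb1} can in fact be recovered from your argument by matching the affine maps, so nothing downstream is lost.
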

\begin{rem}
	\label{rem4}
	When $\gamma=0$, the estimate $\widehat{\bm{\alpha}}^{\rm RWLS}$ is reduced to 
	the WLS estimate of \eqref{unconslm}.
	Meanwhile, we have
	\begin{align}
	\mathbb{MSE }\big(\widehat{\bm{\alpha}}^{\rm RWLS} (\gamma)\big)
	<
	\mathbb{MSE }\big(\widehat{\bm{\alpha}}^{\rm RWLS} (0)\big)
	\label{gg1}
	\end{align}
	for $0<\gamma<2/\bm{\alpha}^\top\bm{\alpha}$, an equivalent statement as Theorem \ref{thm5}. 
	\end{rem}
\subsection{Asymptotically Optimal Regularization Gain }
For the estimate \eqref{unconstrwls}, it also needs to well tune  the regularization parameter $\gamma$.
The risk for the estimate $\widehat{\bm{\alpha}}^{\rm RWLS}$ can be similarly defined as
	\begin{align}
	\nonumber
	&R(\widehat{\bm{\alpha}}^{\rm RWLS} )
	\eq \mathbb{E}\big(\mathbf{K} \bm{\alpha} -   \mathbf{K}\widehat{\bm{\alpha}}^{\rm RWLS}  \big)^{\!\!\!\top}  
	\mathrm{W}
	\big(\mathbf{K} \bm{\alpha} -   \mathbf{K}\widehat{\bm{\alpha}}^{\rm RWLS}  \big)\\
	&\hspace{2mm}=\gamma^2   \bm{\alpha}^\top \mathbf{V}
	\mathbf{K}^\top
	\mathrm{W} 
	\mathbf{K}
	\mathbf{V} \bm{\alpha}
	+ 
	\tr\big( 
	\mathbf{V} \mathbf{K}^\top 
	\mathrm{W}
	\mathbf{K} 
	\mathbf{V} 
	\mathbf{K}^\top
	\mathrm{W}
	\mathbf{K}  \big)\!\!\!
	\label{riskalpha}
	\end{align}
	and	the resulting optimal regularization parameter is
	\begin{align}
	\widehat{ \gamma}_{R}
	(\widehat{\bm{\alpha}}^{\rm RWLS} )
	\eq\argmin_{\gamma \geq 0}
	R(\widehat{\bm{\alpha}}^{\rm RWLS}(\gamma) ).
	\label{alpharisk}
	\end{align}
Let us construct an unbiased estimate
\begin{align}
\mathscr{C}_u(\gamma)\eq(\mathbf{z} - \mathbf{K}\widehat{\bm{\alpha}}^{\rm RWLS})^\top
\mathrm{W}
(\mathbf{z} - \mathbf{K}\widehat{\bm{\alpha}}^{\rm RWLS})
+2\tr\big( \mathbf{K} \mathbf{U}  \big)
\label{cu}
\end{align}
 for \eqref{riskalpha} and it can straightforwardly check its expectation with respect to $\mathbf{e}$ is $R(\widehat{\bm{\alpha}}^{\rm RWLS} )$ up to a constant.
Consequently, we propose the tuning estimator for $\gamma$ as
	\begin{align}
	\widehat{ \gamma}_u
	(\widehat{\bm{\alpha}}^{\rm RWLS} )
	\eq\argmin_{\gamma\geq 0}\
\mathscr{C}_u(\gamma)
	\label{unbiasedriskalpha} 
	\end{align}
which gives a way to estimate $\gamma$ directly by the data.

The following proposition illustrates that the tuning estimators \dref{rprisktheta} and \dref{unbiasedrisktheta} as well as \dref{alpharisk} and \dref{unbiasedriskalpha} 
developed for the constrained model \eqref{unbiasedrisktheta} and its unconstrained counterpart \eqref{unbiasedriskalpha}, respectively,
are equivalent.
\begin{prop}\label{prop5}

There hold
	\begingroup
\allowdisplaybreaks
\begin{subequations}
		\begin{align}
	\widehat{ \gamma}_{R}
	(\widehat{\bm{\alpha}}^{\rm RWLS} )
	&=
	\widehat{ \gamma}_{R}
	(\widehat{\bm{\theta}}^{\rm CRWLS} )\label{eqhp2}\\
	 \widehat{ \gamma}_u
	(\widehat{\bm{\alpha}}^{\rm RWLS} )
	&=
	\widehat{ \gamma}_u
	(\widehat{\bm{\theta}}^{\rm CRWLS} ).\label{eqhp4}
	\end{align}
\end{subequations}
\endgroup
\end{prop}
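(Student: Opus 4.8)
The plan is to reduce both identities to the single structural fact supplied by Proposition~\ref{prop4}, namely the estimator-level equivalence \eqref{equiv}, which holds for \emph{every} $\gamma\geq 0$. Since each tuning estimator is an $\argmin$ over $\gamma\geq 0$ of a scalar objective, it suffices to show that the objective minimized in \eqref{rprisktheta} agrees, as a function of $\gamma$, with the one in \eqref{alpharisk}, and that the objective in \eqref{unbiasedrisktheta} agrees with $\mathscr{C}_u(\gamma)$ of \eqref{cu} up to an additive constant free of $\gamma$; objectives that coincide (or differ by a $\gamma$-free constant) have the same minimizers. Throughout I write $\tau=\|\tr(\mathsf{B})\|$.

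For \eqref{eqhp2} I would first exploit the transformation \eqref{qmatrix}. Because $\bm{\theta}$ obeys the constraint \eqref{consp}, the first entry of $\bm{\beta}=\mathbf{Q}\bm{\theta}$ equals $1/\tau$, so $\mathbf{A}\bm{\theta}=\mathbf{A}\mathbf{Q}^\top\mathbf{Q}\bm{\theta}=\mathbf{D}\bm{\beta}=\frac1\tau\mathbf{d}+\mathbf{K}\bm{\alpha}$. On the other hand, \eqref{equiv} together with the definition of $\widehat{\bm{\theta}}(\cdot)$ and $\mathbf{D}=[\mathbf{d},\mathbf{K}]$ gives $\mathbf{A}\widehat{\bm{\theta}}^{\rm CRWLS}=\frac1\tau\mathbf{d}+\mathbf{K}\widehat{\bm{\alpha}}^{\rm RWLS}$. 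Subtracting, the common term $\frac1\tau\mathbf{d}$ cancels and
\[
\mathbf{A}\bm{\theta}-\mathbf{A}\widehat{\bm{\theta}}^{\rm CRWLS}
=\mathbf{K}\big(\bm{\alpha}-\widehat{\bm{\alpha}}^{\rm RWLS}\big).
\]
Inserting this into the definitions of the two risks—each being the expected $\mathrm{W}$-weighted squared prediction error, per \eqref{thetarisk} and \eqref{riskalpha}—shows $R(\widehat{\bm{\theta}}^{\rm CRWLS})=R(\widehat{\bm{\alpha}}^{\rm RWLS})$ for every $\gamma\geq 0$, and taking $\argmin$ over $\gamma\geq 0$ yields \eqref{eqhp2}.

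For \eqref{eqhp4} I would treat the two summands of \eqref{unbiasedrisktheta} and \eqref{cu} separately. The residual part transfers immediately: since $\mathbf{z}=\mathbf{y}-\frac1\tau\mathbf{d}$ by \eqref{unconslm} and $\mathbf{A}\widehat{\bm{\theta}}^{\rm CRWLS}=\frac1\tau\mathbf{d}+\mathbf{K}\widehat{\bm{\alpha}}^{\rm RWLS}$ as above, one gets $\mathbf{y}-\mathbf{A}\widehat{\bm{\theta}}^{\rm CRWLS}=\mathbf{z}-\mathbf{K}\widehat{\bm{\alpha}}^{\rm RWLS}$, so the two weighted quadratic residuals are literally identical. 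It then remains to show $\tr(\mathbf{A}\mathbf{H})=\tr(\mathbf{K}\mathbf{U})$ for the penalty terms. The cleanest route is to note that $\mathbf{A}\widehat{\bm{\theta}}^{\rm CRWLS}$ is an affine function of $\mathbf{y}$ whose linear coefficient is unique: reading it from $\widehat{\bm{\theta}}^{\rm CRWLS}=\mathbf{H}\mathbf{y}+\bm{f}$ produces the coefficient $\mathbf{A}\mathbf{H}$, while reading it from $\mathbf{A}\widehat{\bm{\theta}}^{\rm CRWLS}=\frac1\tau\mathbf{d}+\mathbf{K}\mathbf{U}\mathbf{z}$ with $\mathbf{z}=\mathbf{y}-\frac1\tau\mathbf{d}$ produces the coefficient $\mathbf{K}\mathbf{U}$. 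Equating the unique linear parts gives $\mathbf{A}\mathbf{H}=\mathbf{K}\mathbf{U}$, hence $\tr(\mathbf{A}\mathbf{H})=\tr(\mathbf{K}\mathbf{U})$. Thus the objectives in \eqref{unbiasedrisktheta} and \eqref{unbiasedriskalpha} coincide identically and \eqref{eqhp4} follows.

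I expect the penalty-trace identity $\tr(\mathbf{A}\mathbf{H})=\tr(\mathbf{K}\mathbf{U})$ to be the only step demanding genuine care, the residual and risk pieces being immediate consequences of Proposition~\ref{prop4}. Should the affine-uniqueness shortcut be considered too slick, the fallback is a direct computation: using orthogonality of $\mathbf{Q}$ write $C=\mathbf{Q}^\top(\mathbf{D}^\top\mathrm{W}\mathbf{D}+\gamma\mathsf{I})^{-1}\mathbf{Q}$ and note $\mathbf{Q}\tr(\mathsf{B})=\tau e_1$, so that $F=\mathbf{Q}^\top\tilde F\mathbf{Q}$ with $\tilde F=G-Ge_1e_1^\top G/[G]_{11}$ and $G=(\mathbf{D}^\top\mathrm{W}\mathbf{D}+\gamma\mathsf{I})^{-1}$; this reduces the two traces to $\tr(\tilde F\,\mathbf{D}^\top\mathrm{W}\mathbf{D})$ and $\tr(\mathbf{V}\mathbf{K}^\top\mathrm{W}\mathbf{K})$, after which the identity $GS=\mathsf{I}-\gamma G$ (with $S=\mathbf{D}^\top\mathrm{W}\mathbf{D}$) and a Schur-complement evaluation of $[G]_{11}$ and $[G^2]_{11}$ make the $\gamma$-dependent parts cancel. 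The affine argument is preferable precisely because it sidesteps this block algebra entirely.
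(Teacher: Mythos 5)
Your proposal is correct, and its skeleton is the same as the paper's: both arguments reduce everything to Proposition \ref{prop4}, derive the vector identities $\mathbf{A}\bm{\theta}-\mathbf{A}\widehat{\bm{\theta}}^{\rm CRWLS}=\mathbf{K}\bm{\alpha}-\mathbf{K}\widehat{\bm{\alpha}}^{\rm RWLS}$ and $\mathbf{y}-\mathbf{A}\widehat{\bm{\theta}}^{\rm CRWLS}=\mathbf{z}-\mathbf{K}\widehat{\bm{\alpha}}^{\rm RWLS}$ exactly as you do (cancellation of the common term $\mathbf{d}/\|{\rm Tr}(\mathsf{B})\|$ after applying $\mathbf{A}\mathbf{Q}^\top\mathbf{Q}$), and then need $\tr(\mathbf{A}\mathbf{H})=\tr(\mathbf{K}\mathbf{U})$. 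The only step where you genuinely diverge is this last identity. The paper inserts $\mathbf{Q}^\top\mathbf{Q}$ and invokes Lemma \ref{lemb1} directly, namely $\mathbf{Q}\mathbf{H}=\left[\begin{smallmatrix}0\\ \mathbf{U}\end{smallmatrix}\right]$, so that $\tr(\mathbf{A}\mathbf{H})=\tr\big(\mathbf{A}\mathbf{Q}^\top\mathbf{Q}\mathbf{H}\big)=\tr(\mathbf{K}\mathbf{U})$. You instead match the linear coefficients of the affine map $\mathbf{y}\mapsto\mathbf{A}\widehat{\bm{\theta}}^{\rm CRWLS}$ in its two representations, which yields the stronger matrix identity $\mathbf{A}\mathbf{H}=\mathbf{K}\mathbf{U}$ (the paper's Lemma \ref{lemb1} also gives this, but the proof only records the trace consequence). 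Your shortcut is valid, with one qualification you should make explicit: coefficient matching requires that \eqref{equiv} hold \emph{identically in} $\mathbf{y}$, not merely at the observed data vector; this is indeed the case because the paper's proof of Proposition \ref{prop4} is pure matrix algebra via Lemma \ref{lemb1}, so $\mathbf{Q}(\mathbf{H}\mathbf{y}+\bm{f})$ equals the stacked vector with first entry $1/\|{\rm Tr}(\mathsf{B})\|$ and remainder $\mathbf{U}\mathbf{z}$ for every $\mathbf{y}\in\mathbb{R}^M$, after which differencing at two data vectors pins down the linear part. In the end both routes rest on Lemma \ref{lemb1}; you simply reuse it once (through Proposition \ref{prop4}) where the paper cites it a second time, so your version is marginally more economical and even slightly stronger, while your fallback block computation would essentially re-derive Lemma \ref{lemb1} and is unnecessary.
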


Denote
\begin{subequations}
	\begin{align}
&\Sigma \eq 
\mathbf{K}^\top
\diag\big([p_1-p_1^2,\cdots,p_M-p_M^2]\big)
\mathbf{K}\\
&\Upsilon\eq\mathbf{A}^\top
\diag\big([p_1-p_1^2,\cdots,p_M-p_M^2]\big)
\mathbf{A} .
\end{align}
\end{subequations}
We can establish the asymptotically optimal selection of the regularization parameter $\gamma$ explicitly for the regularized regression estimate  of the quantum state in the risk senses \dref{thetarisk} and \dref{riskalpha}. 
\begin{thm}\label{thm6}
	Suppose $\rank(\mathbf{A})=d^2$.
The limits take place as the sample size $n\xra{}\infty$ by
	\begin{subequations}
		\label{ophp}
		\begin{align}
		&\widehat{ \gamma}_{R}
		(\widehat{\bm{\alpha}}^{\rm RWLS} )
		\xra{} \gamma^\star~ \mbox{deterministically}\label{ophp1} \\
		&\widehat{ \gamma}_u
		(\widehat{\bm{\alpha}}^{\rm RWLS} ) 
		\xra{} \gamma^\star~ \mbox{almost surely}\label{ophp2} 
		\end{align}
	\end{subequations} 
	where
		\label{asymrp}
		\begin{align*}
		\nonumber
		\gamma^\star
		=\frac{\tr\big(  \Sigma^{-1} \big)}
		{\bm{\alpha} ^\top \Sigma^{-1}\bm{\alpha} } 
		=\frac{\tr\big(\Upsilon^{-1}\big)
			-\frac{\tr(\mathsf{B})^\top \Upsilon^{-2} \tr(\mathsf{B}) }{ \tr(\mathsf{B})^\top  \Upsilon^{-1}\tr(\mathsf{B})}}{\bm{\theta}^\top 
			\Upsilon^{-1}
			\bm{\theta}
			-
			\frac{\bm{\theta} ^\top \Upsilon^{-1} \tr(\mathsf{B})  \tr(\mathsf{B})^\top \Upsilon^{-1}  \bm{\theta}}
			{ \tr(\mathsf{B})^\top  \Upsilon^{-1}\tr(\mathsf{B})}}
		\end{align*}
		is the asymptotically optimal selection of  $\gamma$  for the  estimate \dref{crwls} of the quantum state in the risk senses \dref{thetarisk} and \dref{riskalpha}.
Moreover, there hold  as $n\xra{}\infty$
	\begin{align}
n\big(\widehat{ \gamma}_{R}
(\widehat{\bm{\alpha}}^{\rm RWLS} )
- \gamma^\star\big)\xra{}~
\frac{3\gamma^\star
	\big(\gamma^\star
	\bm{\alpha}^\top
	\mathbf{\Sigma^{-2}}
	\bm{\alpha}
	-
	\tr\big( \Sigma^{-2} \big)
	\big)}
{\bm{\alpha}^\top
	\mathbf{\Sigma^{-1}}
	\bm{\alpha}}
\label{ophpd} 
\end{align}
deterministically and
\begin{align}
&\sqrt{n}
\big(	\widehat{ \gamma}_u 
(\widehat{\bm{\alpha}}^{\rm RWLS} ) 
-\gamma^\star\big)
\xra{}\mathscr{N}
\left(0, \frac{4(\gamma^\star)^2\bm{\alpha}^\top
	\Sigma^{-3}
	\bm{\alpha} }
{\big(\bm{\alpha}^\top
	\Sigma^{-1}
	\bm{\alpha}\big)^2 }
\right)
\label{ophpr} 
\end{align}
in distribution. 

\end{thm}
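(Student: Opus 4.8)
The plan is to prove everything inside the unconstrained model \eqref{unconslm}: by Proposition~\ref{prop5} the tuning rules \eqref{alpharisk}--\eqref{unbiasedriskalpha} coincide with their constrained counterparts, so it suffices to analyse $\widehat{\bm\alpha}^{\rm RWLS}$. First I would diagonalise $\mathbf K^\top\mathrm W\mathbf K$, writing its eigenvalues as $\mu_i$ and setting $a_i=(\bm\alpha^\top\mathbf u_i)^2$; then \eqref{riskalpha} becomes $R(\gamma)=\sum_i(\gamma^2\mu_i a_i+\mu_i^2)/(\mu_i+\gamma)^2$ and the stationarity condition $R'(\gamma)=0$ collapses to the scalar equation $\sum_i\mu_i^2(\gamma a_i-1)/(\mu_i+\gamma)^3=0$. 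Since $\mathrm W=\mathrm P^{-1}$ grows like $n$, every $\mu_i\to\infty$ at rate $n$; expanding $\mu_i^2/(\mu_i+\gamma)^3=\mu_i^{-1}-3\gamma\mu_i^{-2}+\cdots$ and keeping the leading term gives $\gamma^\star=\tr\!\big((\mathbf K^\top\mathrm W\mathbf K)^{-1}\big)/\big(\bm\alpha^\top(\mathbf K^\top\mathrm W\mathbf K)^{-1}\bm\alpha\big)$, in which the common factor $n$ cancels to leave the $\Sigma$-form in the statement. The equivalent $\Upsilon$-form then follows from a Schur-complement identity: conjugating $\Upsilon$ by $\mathbf Q$ exhibits $\Sigma$ as its trailing block, and the $\tr(\mathsf B)$-projection terms in the numerator and denominator are exactly the Schur complements that reduce $\tr(\Upsilon^{-1})$ and $\bm\theta^\top\Upsilon^{-1}\bm\theta$ to $\tr(\Sigma^{-1})$ and $\bm\alpha^\top\Sigma^{-1}\bm\alpha$.

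Because $R(\gamma)$ is deterministic, $\widehat\gamma_R$ is the deterministic root of the scalar equation above, so \eqref{ophp1} is immediate and the rate \eqref{ophpd} comes from pushing the expansion one order further. Writing $A(\gamma)=\gamma\,\bm\alpha^\top\Sigma^{-1}\bm\alpha-\tr(\Sigma^{-1})$ and $B(\gamma)=\gamma\,\bm\alpha^\top\Sigma^{-2}\bm\alpha-\tr(\Sigma^{-2})$, the condition reads $\tfrac1n A(\gamma)-\tfrac{3\gamma}{n^2}B(\gamma)+O(n^{-3})=0$; substituting $\widehat\gamma_R=\gamma^\star+\delta_n$, using $A(\gamma^\star)=0$ and $A'(\gamma^\star)=\bm\alpha^\top\Sigma^{-1}\bm\alpha$, and solving the linearised equation yields $n\delta_n\to 3\gamma^\star B(\gamma^\star)/A'(\gamma^\star)$, which is precisely \eqref{ophpd}.

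For the data-driven rule I would work with the estimating equation $\mathscr{C}_u'(\gamma)=0$. As $\mathbf e\to 0$ almost surely and the empirical averages defining $\mathbf z$ and $\widehat{\mathrm W}$ obey the strong law, $n\,\mathscr{C}_u'(\gamma)$ converges almost surely and uniformly on compact $\gamma$-sets to $2A(\gamma)$, whose unique positive root is $\gamma^\star$; this gives \eqref{ophp2}. For the central limit theorem \eqref{ophpr} I would use the $Z$-estimator expansion $\sqrt n(\widehat\gamma_u-\gamma^\star)=-\sqrt n\,\mathscr{C}_u'(\gamma^\star)/\mathscr{C}_u''(\bar\gamma)$ with $\mathscr{C}_u''(\bar\gamma)\to R''(\gamma^\star)=\tfrac2n\bm\alpha^\top\Sigma^{-1}\bm\alpha$. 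The key algebraic fact is that the derivative of the residual operator telescopes, $\tfrac{d}{d\gamma}(\mathsf I-\mathbf K\mathbf U)^\top\mathrm W(\mathsf I-\mathbf K\mathbf U)=2\gamma\,\mathrm W\mathbf K\mathbf V^3\mathbf K^\top\mathrm W$, which is of order $1/n$; hence the fluctuation of $\mathscr{C}_u'(\gamma^\star)$ is dominated by its part linear in $\mathbf e$, namely $4\gamma^\star(\mathrm W\mathbf K\mathbf V^2\bm\alpha)^\top\mathbf e=O_p(n^{-3/2})$, rather than by the quadratic-in-$\mathbf e$ part. Applying the Lindeberg central limit theorem to this linear form (each $e_m$ being a centred scaled Bernoulli sum with $\Cov(\mathbf e)=\mathrm P$) and dividing by $R''(\gamma^\star)^2$ produces the stated variance $4(\gamma^\star)^2\bm\alpha^\top\Sigma^{-3}\bm\alpha/(\bm\alpha^\top\Sigma^{-1}\bm\alpha)^2$.

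The main obstacle is the central limit theorem: the risk minimum is shallow, with curvature $R''(\gamma^\star)$ only $O(1/n)$, so a crude bound on the gradient fluctuation would suggest no concentration at all. Recovering the genuine $\sqrt n$-rate depends entirely on the cancellation that makes $\mathbf N_\gamma':=\tfrac{d}{d\gamma}(\mathsf I-\mathbf K\mathbf U)^\top\mathrm W(\mathsf I-\mathbf K\mathbf U)$ one order smaller than its individual summands, together with careful tracking of the competing $O_p(n^{-3/2})$ linear and $O_p(n^{-2})$ quadratic contributions so that only the former survives after division by $R''(\gamma^\star)$. A secondary point is verifying that replacing $\mathrm W$ by the consistent estimate $\widehat{\mathrm W}$ (error $O_p(1/\sqrt n)$ by \eqref{approx}) does not disturb the leading-order asymptotics, which is also what makes \eqref{alpharisk} and \eqref{unbiasedriskalpha} share the limit $\gamma^\star$.
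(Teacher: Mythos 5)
Your proposal is correct and reaches all four conclusions with the right constants, but by a genuinely different route in two places. For $\gamma^\star$ you diagonalise $\mathbf K^\top\mathrm W\mathbf K$ and expand the scalar stationarity equation $\sum_i\mu_i^2(\gamma a_i-1)/(\mu_i+\gamma)^3=0$ in powers of $\mu_i^{-1}$; the paper instead works matricially, writing $\mathbf V\mathbf K^\top\mathrm W\mathbf K=\mathsf I-\tfrac{\gamma}{n}\mathbf S$ with $\mathbf S=(\Sigma+\tfrac{\gamma}{n}\mathsf I)^{-1}$ and Taylor-expanding the centred costs $\mathscr C_R,\mathscr C_U$ via Lemma \ref{lemb2}. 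These are term-by-term equivalent: your $\tfrac1n A(\gamma)-\tfrac{3\gamma}{n^2}B(\gamma)+O(n^{-3})=0$ is exactly the paper's expansion of $\mathscr C_R'$, and solving it linearised at $\gamma^\star$ reproduces \eqref{ophpd}. Your Schur-complement derivation of the $\Upsilon$-form is slicker than the paper's: Lemma \ref{lemb3} obtains the same identities by setting $\gamma=0$ in Lemma \ref{lemb1} and computing $\mathbf H\mathrm W^{-1}\mathbf H^\top$, whereas you invoke the standard fact that the inverse of the trailing principal block of $\mathbf Q\Upsilon\mathbf Q^\top$ equals the Schur complement of the $(1,1)$ entry in the inverse, together with $\mathbf Q\tr(\mathsf B)=\|\tr(\mathsf B)\|\bm e_1$; this checks out directly. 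The largest divergence is the CLT. The paper sidesteps your delicate gradient analysis by an exact Pythagorean split of $\mathscr C_u$ through $\widehat{\bm\alpha}^{\rm WLS}$ (the cross term \eqref{c2} vanishes identically), after which all randomness funnels through $\widehat{\bm\alpha}^{\rm WLS}$, and $\sqrt n\,(\widehat{\bm\alpha}^{\rm WLS}-\bm\alpha)\xra{}\mathscr N(0,\Sigma^{-1})$ plus the Delta method immediately give the variance $16(\gamma^\star)^2\bm\alpha^\top\Sigma^{-3}\bm\alpha$ for $\sqrt n\,\mathscr C_U'(\gamma^\star)$. Your alternative works at the level of the estimating equation: the telescoping identity $\tfrac{d}{d\gamma}(\mathsf I-\mathbf K\mathbf U)^\top\mathrm W(\mathsf I-\mathbf K\mathbf U)=2\gamma\,\mathrm W\mathbf K\mathbf V^3\mathbf K^\top\mathrm W$ is correct (it follows from $\mathbf N_\gamma=\mathrm W-\mathrm W\mathbf K\mathbf V\mathbf K^\top\mathrm W-\gamma\,\mathrm W\mathbf K\mathbf V^2\mathbf K^\top\mathrm W$ and $d\mathbf V/d\gamma=-\mathbf V^2$), your dominant linear term $4\gamma^\star\bm\alpha^\top\mathbf V^2\mathbf K^\top\mathrm W\mathbf e$ is exactly the leading linear-in-$\mathbf e$ part of $\mathscr C_u'(\gamma^\star)$, and its variance $16(\gamma^\star)^2\bm\alpha^\top\mathbf V^2\mathbf K^\top\mathrm W\mathbf K\mathbf V^2\bm\alpha\approx 16(\gamma^\star)^2 n^{-3}\bm\alpha^\top\Sigma^{-3}\bm\alpha$, divided by the squared curvature $n\,\mathscr C_u''\xra{}2\bm\alpha^\top\Sigma^{-1}\bm\alpha$, matches \eqref{ophpr}. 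What your route buys is transparency about why the rate is only $\sqrt n$ — the shallow minimum (curvature $O(1/n)$) against a gradient fluctuation $O_p(n^{-3/2})$, with the quadratic-in-$\mathbf e$ part, of mean and size $O(n^{-2})$, verifiably negligible; what the paper's split buys is that these cancellations are automatic, so no order-counting is needed.

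Two minor caveats, shared with the paper rather than specific to you: both arguments need $\bm\alpha\neq 0$, i.e.\ $\rho\neq\mathsf I/d$ (Remark \ref{rem3}), so that $\gamma^\star<\infty$, and both pass from uniform convergence on a compact subset of $[0,\infty)$ containing $\gamma^\star$ to convergence of the argmin (the paper's Lemma \ref{ct} also assumes compactness); moreover the Lindeberg step and the covariance $\Cov(\mathbf e)=\mathrm P$ use independence of the $e_m$, i.e.\ the separate-measurement scheme. Your closing point about $\widehat{\mathrm W}$ is extra care beyond the statement — the estimators in \eqref{alpharisk} and \eqref{unbiasedriskalpha} are defined with the true $\mathrm W$, as in the paper's proof — and note that showing the $O_p(1/\sqrt n)$ perturbation from $\widehat{\mathrm W}$ leaves the limit law in \eqref{ophpr} unchanged would itself require an argument at the $o_p(n^{-3/2})$ scale, which neither you nor the paper carries out.
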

  
 \begin{rem}
 	\label{rem6}
 	 Theorem \ref{thm6} shows that the implementable estimators
 	 $
 	 \widehat{ \gamma}_u
 	 (\widehat{\bm{\theta}}^{\rm CRWLS} )$ 
 	  and
 	 $\widehat{ \gamma}_u
 	 (\widehat{\bm{\alpha}}^{\rm RWLS} )$
 	  converges to the asymptotically optimal $\gamma^\star$ as the optimal estimators $\widehat{ \gamma}_{R}
 	  (\widehat{\bm{\theta}}^{\rm CRWLS} )$ and $\widehat{ \gamma}_{R}
 	  (\widehat{\bm{\alpha}}^{\rm RWLS} )$ for any finite sample data do. 
 	  On the other hand, $
 	 \widehat{ \gamma}_u
 	 (\widehat{\bm{\theta}}^{\rm CRWLS} )$ 
 	 and
 	 $\widehat{ \gamma}_u
 	 (\widehat{\bm{\alpha}}^{\rm RWLS} )$ have a slower rate of convergence than that of $\widehat{ \gamma}_{R}
 	 (\widehat{\bm{\theta}}^{\rm CRWLS} )$ and $\widehat{ \gamma}_{R}
 	 (\widehat{\bm{\alpha}}^{\rm RWLS} )$.
 \end{rem}

\section{Numerical Examples}\label{sec5}
\subsection{Overdeterminate Measurement Basis}
\begin{exam}\label{exam1}\normalfont
	We consider  the following quantum Werner state tomography for a two-qubit system as studied in~\cite{qi2013}: 
	$$\rho_q=q\braket{{\boldmath \Psi}^-}{{\boldmath \Psi}^-}+\frac{1-q}{4}\mathsf{I}
	$$
	where $\bra{{\boldmath \Psi}^-}=({\bra{01}-\bra{10}})/{\sqrt{2}}$ and $q\in[0,1]$ is a parameter associated with the state.
	Take an orthonormal basis $\big\{\mathsf{B}_i\big\}_{i=1}^{16}$ as 
	$$
	\mathsf{B}_i=\frac{1}{\sqrt{2}}\sigma_{j}\otimes \frac{1}{\sqrt{2}}\sigma_{k},\quad i=4j+k+1
	$$
	for $j,k=0,1,2,3$, where \[\sigma_0=I_2,
	\sigma_1=\begin{bmatrix}0 & 1\\ 1 & 0\end{bmatrix},
	\sigma_2=\begin{bmatrix}0 & -i\\ i & 0\end{bmatrix},
	\sigma_3=\begin{bmatrix}1 & 0\\ 0 & -1\end{bmatrix}
	\] are the Pauli matrices.
	Let 
	\begin{align*}
	&\bra{\varphi_1}=\frac{1}{\sqrt{6}}[1,1]^\top, 
	\bra{\varphi_2}=\frac{1}{\sqrt{6}}[1,-1]^\top, 
	\bra{\varphi_3}=\frac{1}{\sqrt{6}}[1,i]^\top, \\
	&\bra{\varphi_4}=\frac{1}{\sqrt{6}}[1,-i]^\top, 
	\bra{\varphi_5}=\frac{1}{\sqrt{3}}[1,0]^\top, 
	\bra{\varphi_6}=\frac{1}{\sqrt{3}}[0,1]^\top.
	\end{align*} 
	
	Then
	$$
	\mathsf{E}_m=\braket{\varphi_{j}}{\varphi_{j}}\otimes \braket{\varphi_{k}}{\varphi_{k}},\quad m=6(j-1)+k,
	$$
	for $j,k=1,2,\dots,6$
	form our measurement basis $\big\{\mathsf{M}_m\big\}_{m=1}^{36}$ with $\mathsf{M}_m=\bra{\varphi_j}\otimes\bra{\varphi_k}$.
	We can verify that the measurement set $\big\{\mathsf{M}_m\big\}_{m=1}^{36}$ is overcomplete and the matrix $\mathbf{A}=[{\boldmath \beta}_1, \dots, {\boldmath \beta}_{36}]^\top\in\mathbb{R}^{36\times 16}$ has full column rank.
	
	We first sample the parameter $q$ to identify the quantum states $\rho_q$, and then for any $\rho_q$ carry out the tomography procedure for $1000$ rounds based on $n=110,1100,11000$ copies, respectively. The measurement process is simulated by i.i.d. random numbers $\lambda\in\{1,\dots,36\}$ according to $\mathbb{P}(\lambda=m)=p_m=\tr(\mathsf{E}_m\rho_q)$. Then for the experimental round $k$, $k=1,\dots,1000$, $\hat{p}_m{(k)}$ is recorded as the estimation of $p_m$. Based on the $\hat{p}_m{(k)}$, we derive the standard, weighted, and constrained weighted estimates $\widehat{\bm{\theta}}^{\mathrm{LS}}{(k)}$, $\widehat{\bm{\theta}}^{\mathrm{WLS}}{(k)}$, $\widehat{\bm{\theta}}^{\mathrm{CWLS}}{(k)}$,  and $\widehat{\bm{\theta}}^{\mathrm{CRWLS}}{(k)}$ (with $\gamma = 1/\big(\|\bm{\theta}\|^2
	-\frac1{\|{\rm Tr}(\mathsf{B})\|^2}\big)$) according to \eqref{ls}, \eqref{wls},  \eqref{cwls}, and \eqref{crwls} respectively, where $\mathrm{W}$ is replaced by its estimate \dref{west}. Note that some of the $\hat{p}_m$ might inevitably take value zero, and whenever that happens  we set $\hat{p}_m=10^{-8}$ in the weight matrix $\mathrm{W}$ for the sake of computation.   The experimental mean-square error $\mathbb{MSE}_{\rm exp}$ is  then computed by averaging the square error from each round of experiments, which are plotted with their theoretical predictions from \eqref{ls.mse}, \eqref{wls.mse}, \eqref{cwls.mse}, and \eqref{mse1},  for each state and their estimates 
	in  Figs.~\ref{fig:1}--\ref{fig:3}, where the true $\mathrm{W}$ instead of its estimate is used. 
	
	From these figures one easily sees that the experimental estimates are approaching  the theoretical ones as the number of samples $n$ grows large  for all four estimates, LS, WLS, CWLS, and CRWLS, which validates Theorem  \ref{thm1}, Theorem \ref{thm2}, and Theorem \ref{thm5}.  For small sample size ($n=110$), the WLS, CWLS, and CRWLS are apparently  producing worse experimental mean-square error compared to LS. The reason for that might  that  the $\widehat{\mathrm{W}}$ constructed from the $\hat{p}_m$  may not be accurate enough for approximating the true $\mathrm{W}$. For relatively larger sample size ($n=11000$), the WLS, CWLS, and CRWLS all provide significant improvments  compared to LS. It is worth noting that even with small sample size, the CRWLS may lead to drastically reduced error for small $q$, where $\rho_q$ tends to be closer to a separable quantum state.

	\begin{figure}[!htbp]
		\centering
		\includegraphics[width=.7\linewidth]{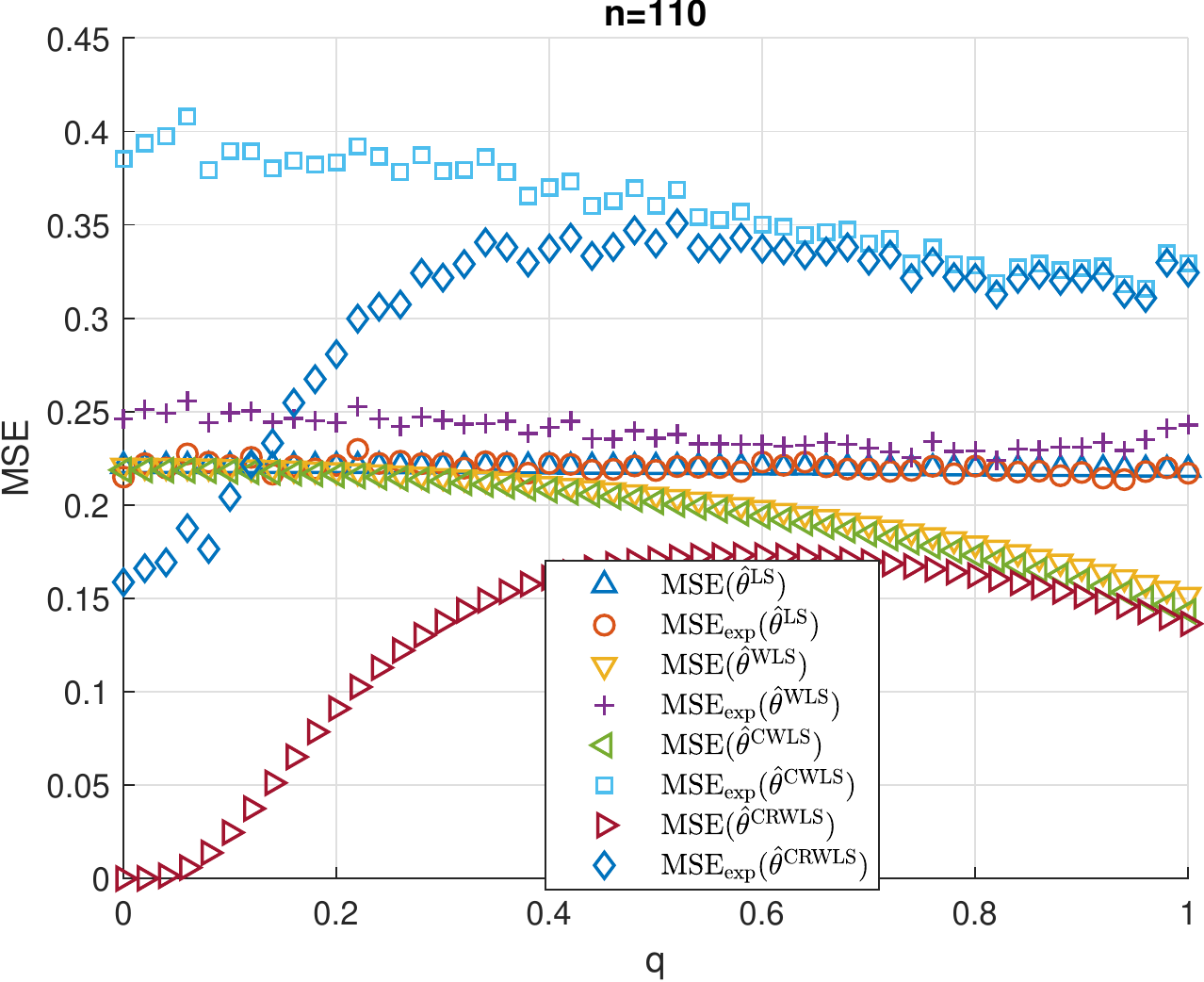}
		\caption{MSEs for estimating  Werner states with $n=110$ copies. 
		}\label{fig:1}
	\end{figure}
	
	\begin{figure}[!htbp]
		\centering
		\includegraphics[width=.7\linewidth]{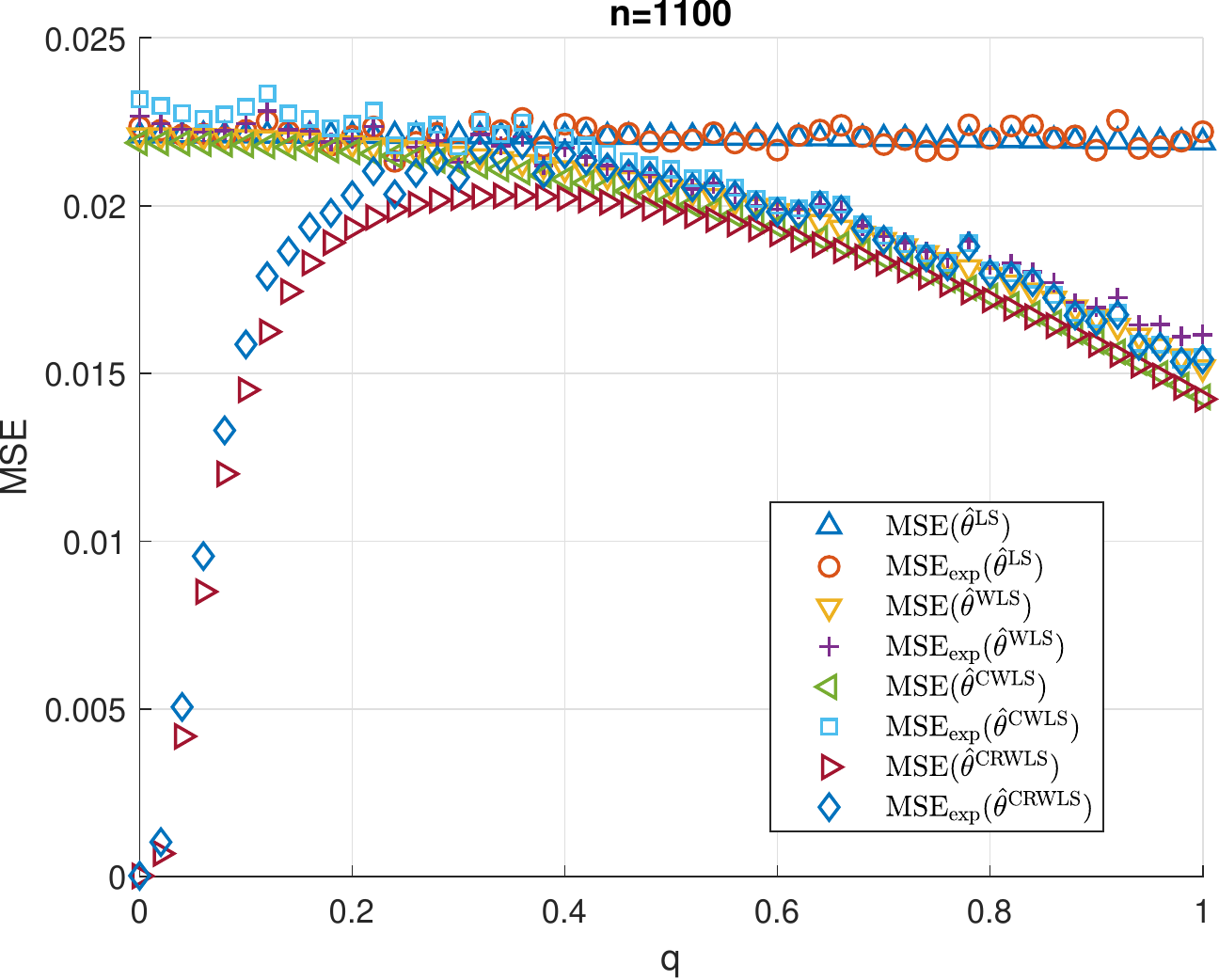}
		\caption{MSEs for estimating Werner states with $n=1100$ copies.}\label{fig:2}
	\end{figure}
	
	\begin{figure}[!htbp]
		\centering
		\includegraphics[width=.7\linewidth]{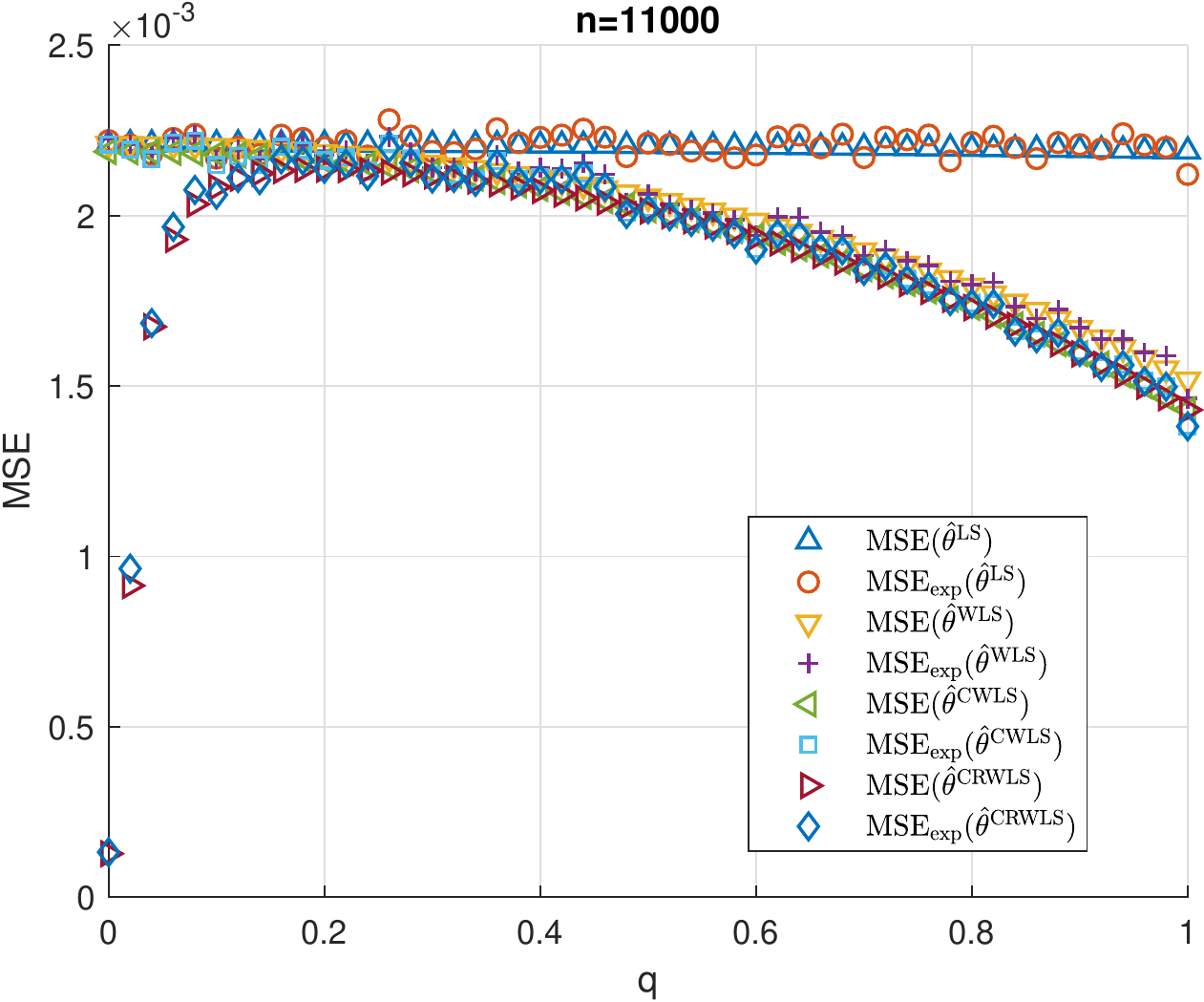}
		\caption{MSEs for estimating   Werner states with $n=11000$ copies.}\label{fig:3}
	\end{figure}
	
\end{exam}

\subsection{Small Sample-size and Optimal Regularizer}
As we have seen from Example \ref{exam1}, when the sample size $n$ is small, the weighted estimates  $\widehat{\bm{\theta}}^{\mathrm{WLS}}$, $\widehat{\bm{\theta}}^{\mathrm{CWLS}}$,  and $\widehat{\bm{\theta}}^{\mathrm{CRWLS}}$ may lead to lower accuracy compared to $\widehat{\bm{\theta}}^{\mathrm{LS}}$. In the following example, we show that in this case forcing $\mathrm{W}=\mathrm{I}$ in $\widehat{\bm{\theta}}^{\mathrm{CRWLS}}{(k)}$ to obtain a constrained regularized LS estimate (CRLS) would resolve the issue under the tuning methods in (\ref{rprisktheta}) and (\ref{unbiasedrisktheta}). 
\begin{exam} \normalfont
Consider exactly the same quantum state and tomography setup as in Example \ref{exam1}. Let $\mathrm{W}=\mathrm{I}$ in $\widehat{\bm{\theta}}^{\mathrm{CRWLS}} $ so that we define
$$
\widehat{\bm{\theta}}^{\mathrm{CRLS}} = \widehat{\bm{\theta}}^{\mathrm{CRWLS}} \big |_{\mathrm{W}=\mathrm{I}} 
$$
as the unweighted CRLS estimate. The regularization gain $\gamma$ is selected under the optimal value $\widehat{ \gamma}_{R}$ from (\ref{rprisktheta}) in the risk sense and its unbiased estimate $\widehat{ \gamma}_{u}$ from (\ref{unbiasedrisktheta}), under which for any $\rho_q$ we carry out the tomography procedure for $1000$ rounds based on $n=110,1100$ copies, respectively.   The resulting experimental mean-square errors $\mathbb{MSE}_{\rm exp}(\widehat{\bm{\theta}}^{\mathrm{CRLS}},\widehat{ \gamma}_{R}) $  and $\mathbb{MSE}_{\rm exp}(\widehat{\bm{\theta}}^{\mathrm{CRLS}},\widehat{ \gamma}_{u}) $ are  then computed and plotted in  Figs.~\ref{fig:r1}--\ref{fig:r2}, respectively, in comparison to  the theoretical and experimental MSE of standard LS estimate $\widehat{\bm{\theta}}^{\mathrm{LS}}$.

	\begin{figure}[!htbp]
		\centering
		\includegraphics[width=.7\linewidth]{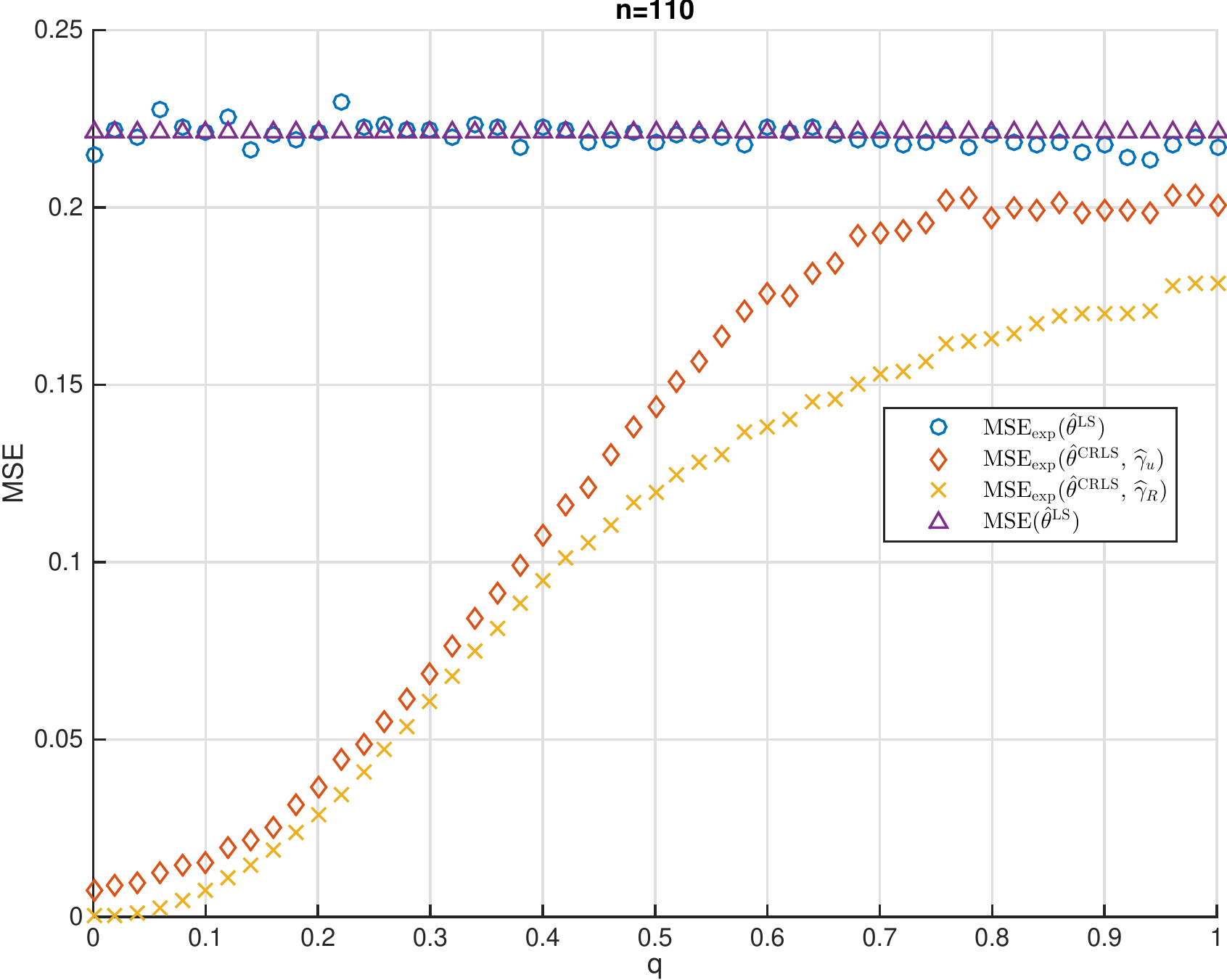}
		\caption{CRLS vs. LS estimates for  Werner states with $n=110$ copies. 
		}\label{fig:r1}
	\end{figure}
	
	\begin{figure}[!htbp]
		\centering
		\includegraphics[width=.7\linewidth]{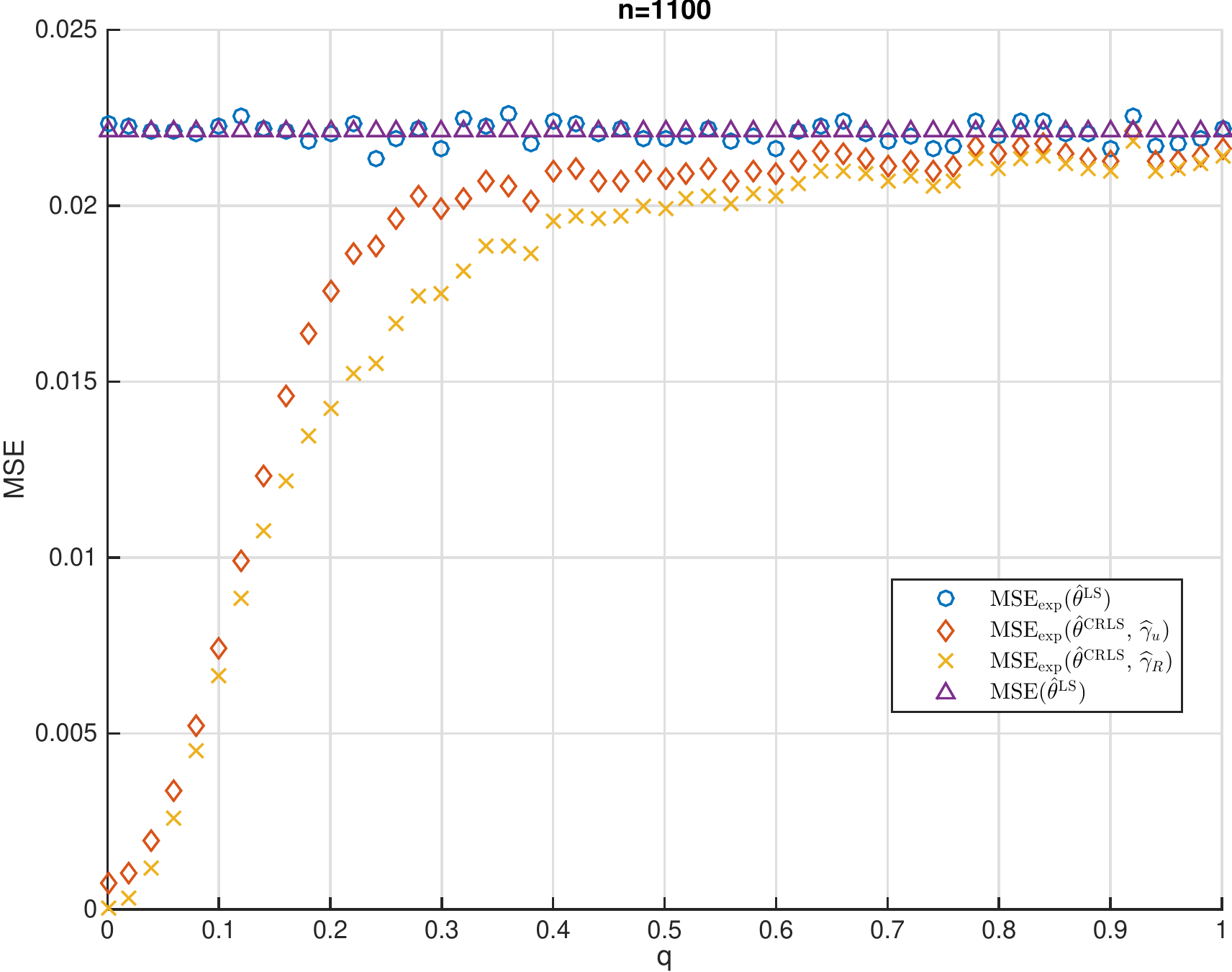}
		\caption{CRLS vs. LS estimates for  Werner states with $n=1100$ copies.}\label{fig:r2}
	\end{figure}

As we can see from the numerical results, with $n=110$, the regularizer for $\widehat{\bm{\theta}}^{\mathrm{CRLS}} $ significantly improves the estimation accuracy compared to $\widehat{\bm{\theta}}^{\mathrm{LS}}$ under both  $\widehat{ \gamma}_{R}$ and $\widehat{ \gamma}_{u}$.  While with $n=1100$, for relatively large $q$, the advantage of $\widehat{\bm{\theta}}^{\mathrm{CRLS}} $ is no longer obvious compared to $\widehat{\bm{\theta}}^{\mathrm{LS}}$ since in this case, the use of the weight ${\rm W}$ becomes essential for the performance. 

\end{exam}

\subsection{Under-determinate Measurement Basis}

\begin{exam}\normalfont
	We consider a 6-qubit quantum state. We use the $6$-qubit Pauli matrices to form our basis $\{\mathsf{B}_j,j=1,\dots,d^2\}$ with $d=2^6=64$.  Let $u = \begin{bmatrix}
	\frac{\sqrt{3}}{2} & \frac{1}{2} \\
	-\frac{1}{2}i & \frac{\sqrt{3}}{2}i
	\end{bmatrix}$ be a $2\times 2$ unitary matrix. Let
	\begin{align*}
	&\bra{\psi_1}=[0,\dots,0,\overbrace{\sqrt{p}}^{i_1\text{-th}},0,\dots,0,\overbrace{\sqrt{1-p}}^{i_2\text{-th}},0,\dots,0]^\top, \\
	& \bra{\psi_2}=[0,\dots,0,\overbrace{1}^{i_3\text{-th}},0,\dots,0]^\top,\\
	& \bra{\psi_3}=[0,\dots,0,\overbrace{1}^{i_4\text{-th}},0,\dots,0]^\top,
	\end{align*} 
	be three pure states, where $i_1=42, i_2=8, i_3=59, i_4=30$. 
	Then $\rho_p=\left(u^{\otimes 6}\right)^\dag(\frac{1}{3}\braket{\psi_1}{\psi_1}+\frac{1}{3}\braket{\psi_2}{\psi_2}+\frac{1}{3}\braket{\psi_3}{\psi_3})u^{\otimes 6}$ is a rank-$3$ density matrix for a $6$-qubit system for all $p\in[0,1]$. Note that $\rho_p$ is low-rank but not sparse due to the existence of $u$.
We index the set of $6$-fold tensor product of Pauli matrices $\big\{\sigma_{l_1}\otimes\dots\otimes\sigma_{l_6}:(l_1,\dots,l_6)\in\{1,2,3\}^6\big\}$   by $\{P_j,j=1,\dots,3^6\}$. The $P_j$'s are of full rank and have eigenvalues $\pm 1$. Denote by $Q_{j\pm}$ the projection onto the eigenspaces of $P_j$ with respect to $\pm 1$ respectively. We randomly choose and then fix $\{Q_{j_1+},\dots,Q_{j_{200}+}\}$ from $\{Q_{j+},j=1,\dots,3^6\}$. Then $\mathcal{M}^+\eq \{M_k=\sqrt{Q_{j_k+}/200},k=1,\dots,200\}$ forms an under-complete  measurement basis, and the resulting $\mathbf{A}=[\beta_1,\dots,\beta_{200}]^\top\in\mathbb{R}^{200\times 4096}$ becomes under-determinate.
	
We use $n=1100,11000,110000$ copies for each $\rho_p$ and perform independent measurements over each copy along any element in the basis $\mathcal{M}^+$. The parameter $p$ is sampled at $p=0,0.1,\dots,1$, and for each $p$, we carry out the tomography procedure for $1000$ rounds. For each round, we set $\gamma=1,10,100,1000$ for $\widehat{\bm{\theta}}^{\mathrm{CRWLS}}$, whose
experimental mean-square errors $\mathbb{MSE}_{\mathrm{exp}}$  with respect to the parametrized $p$   are plotted with $n=1100,11000,110000$ in Figs. \ref{fig:4}--\ref{fig:6}, respectively. From these plots we see that the MSE is fundamentally lower bounded by the $\mathcal{M}^+$ instead of heavily relying on the sample size $n$. Moreover, the MSE and the risk are not quite sensitive with respect to the regularization gain $\gamma$.	It is expected that these estimation results  can be improved by utilizing the optimal regularizer selection from \eqref{unbiasedrisktheta}, but would require significantly higher computation cost. 

	\begin{figure}[!htbp]
		\centering
		\includegraphics[width=.7\linewidth]{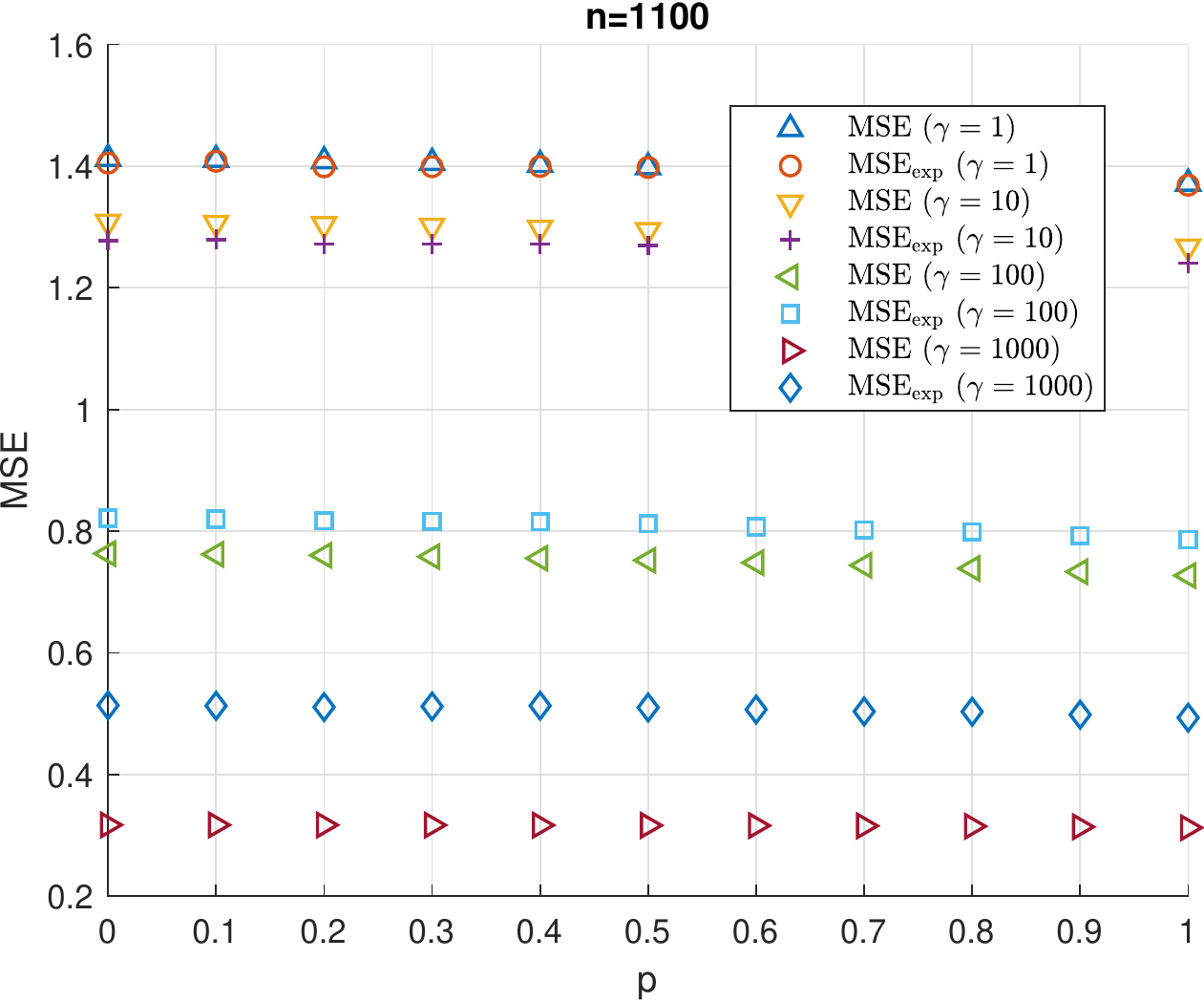}
		\caption{MSEs for estimating the six-qubit state  $\rho_p$ by CRWLS  with $n=1100$ copies.}\label{fig:4}
	\end{figure}
	
	\begin{figure}[!htbp]
		\centering
		\includegraphics[width=.7\linewidth]{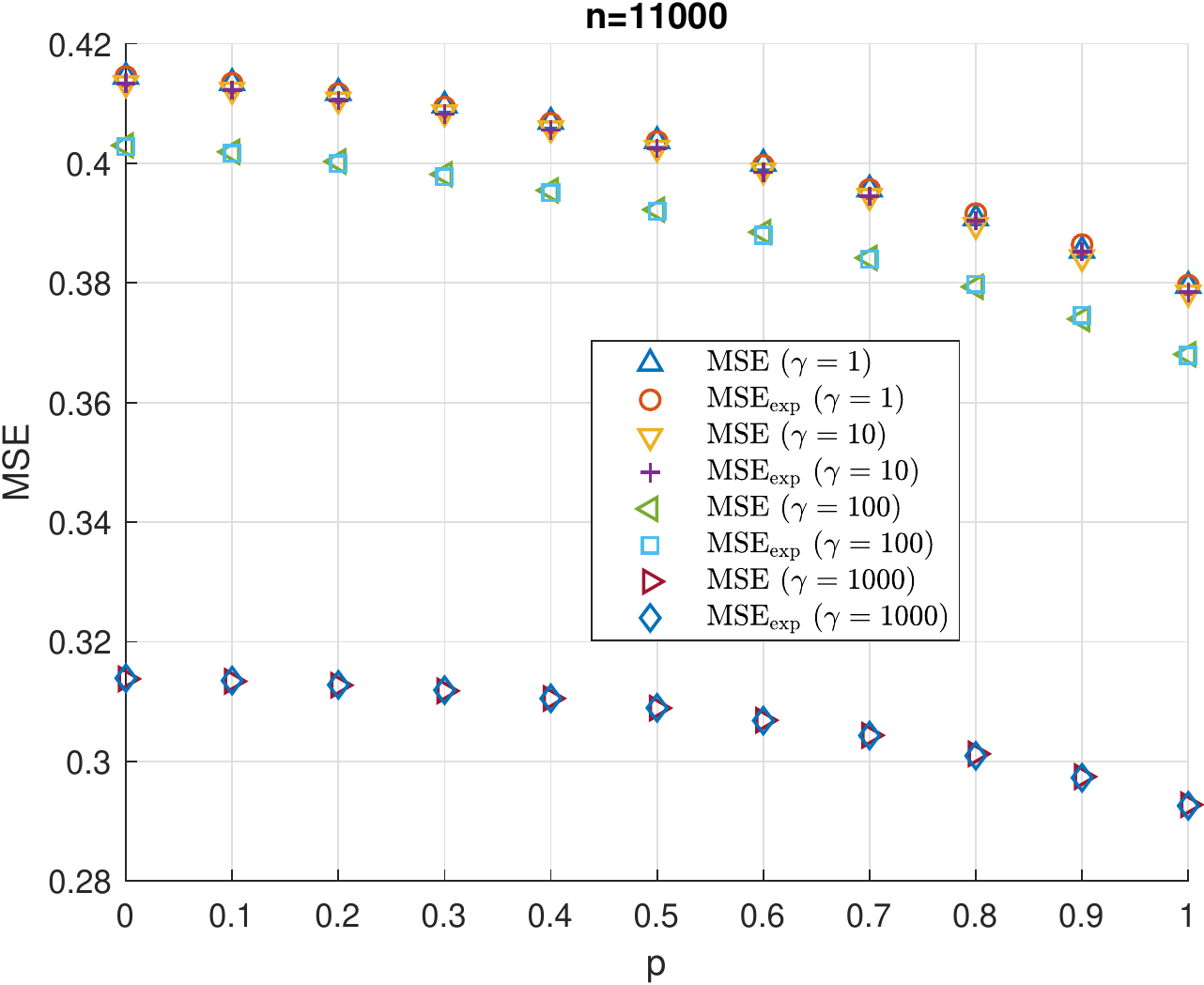}
		\caption{MSEs for estimating the six-qubit state  $\rho_p$ by CRWLS  with $n=11000$ copies.}\label{fig:5}
	\end{figure}
	
	\begin{figure}[!htbp]
		\centering
		\includegraphics[width=.7\linewidth]{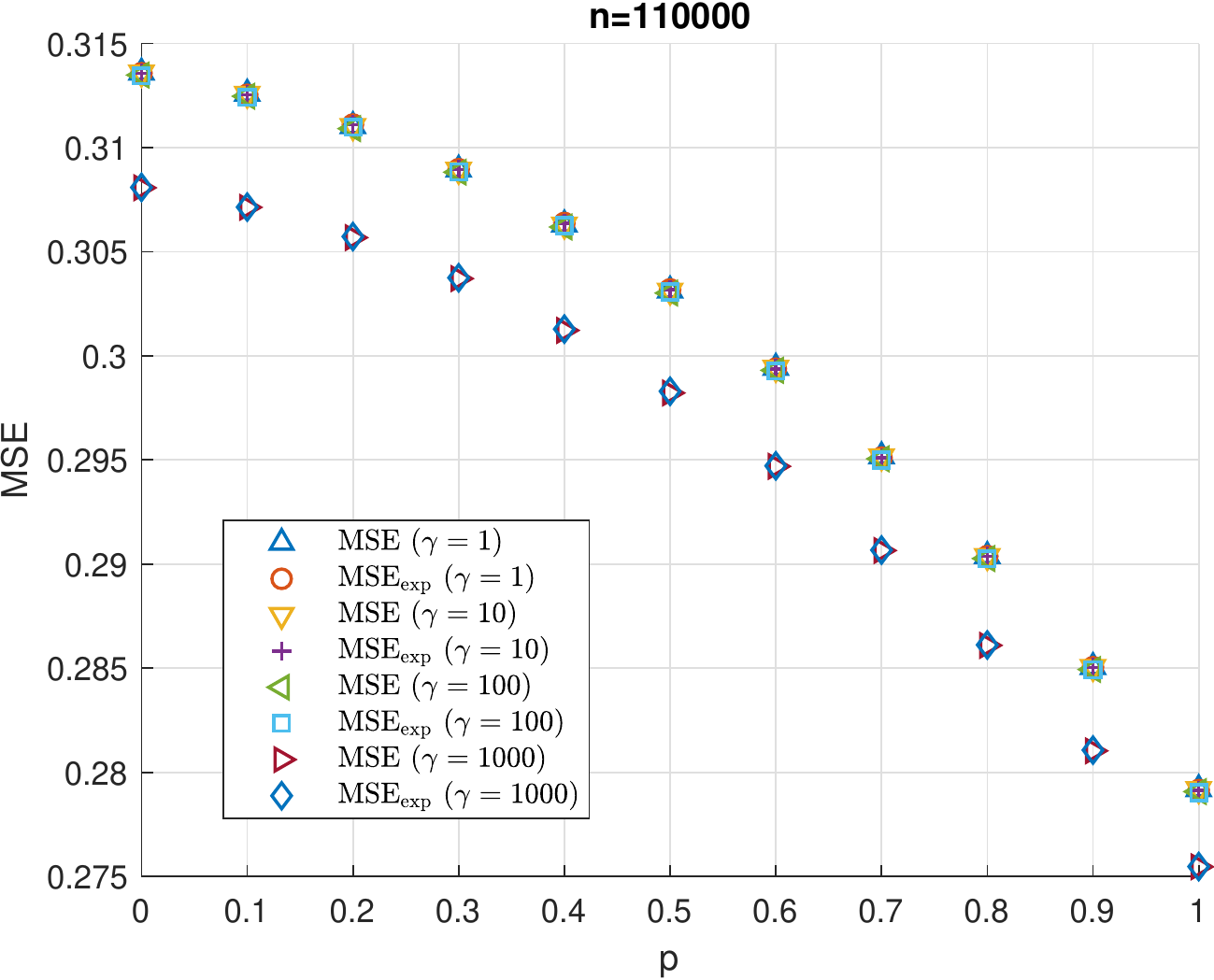}
		\caption{MSEs for estimating the six-qubit state  $\rho_p$ by CRWLS  with $n=110000$ copies.}\label{fig:6}
	\end{figure}
\end{exam}

\section{Conclusions} \label{sec6}

We have studied a series of   linear regression methods for quantum state tomography based on regularization.  With complete or over-complete measurement bases,  the empirical data was shown to be useful  for the construction of a weighted LSE from the measurement outcomes of an unknown quantum state. It was proven that the trace-constrained weighted LSE is the optimal unbiased estimation among all linear estimators. For general measurement bases, either complete or incomplete,  we showed that  $\ell_2$-regularization   with proper regularization parameter could yield even lower mean-square error under a penalty in bias. An explicit formula was established for the regularization parameter under an equivalent regression model, which shows that
the proposed tuning estimators are asymptotically optimal as the number of samples grows to infinity for the risk metric.  An interesting future direction lies in  regularization-based  approach for Hamiltonian identification of quantum dynamical systems.

\medskip

\appendix

\section*{Appendix A}

\renewcommand{\thesection}{A}

\setcounter{thm}{0}
\renewcommand{\thethm}{A\arabic{thm}}
\setcounter{lem}{0}
\renewcommand{\thelem}{A\arabic{lem}}
\setcounter{rem}{0}
\renewcommand{\therem}{A\arabic{rem}}
\renewcommand{\thesection}{A}
\setcounter{equation}{0}
\renewcommand{\theequation}{\thesection.\arabic{equation}}

The proofs of Proposition  \ref{prop3} and Theorem  \ref{thm1} are standard and can be found in the books \citep{Teil1971,Amemiya1985}.
So, they are omitted for saving space.

	\subsection{Proof of Theorem  \ref{thm2}}
The constrained optimization problem \eqref{rop} is transformed to an unconstrained one by introducing Lagrange multiplier $\lambda$
and the resulting Lagrange function of \eqref{rop} is 
\begin{align*}
L(\bm{\theta},\lambda) 
= (\mathbf{y}-\mathbf{A} \bm{\theta})^\top  (\mathbf{y}-\mathbf{A} \bm{\theta})
+ \lambda (\bm{\theta}^\top {\rm Tr}(\mathsf{B})-1).
\end{align*}
Thus the optimal solution $(\widehat{\bm{\theta}}^{\rm CRWLS}, \lambda^*)$ of the problem \eqref{rop} satisfies the first optimality condition
\begin{subequations}
	\begin{align} 
&2 \mathbf{A}^\top \mathrm{W} (\mathbf{y}-\mathbf{A} \widehat{\bm{\theta}}^{\rm CRWLS} ) + 2\gamma \widehat{\bm{\theta}}^{\rm CRWLS}  + \lambda^* \tr(\mathsf{B})=0   \label{crwls1}\\
&\tr(\mathsf{B})^\top \widehat{\bm{\theta}}^{\rm CRWLS}=1. \label{crwls2}
\end{align}
\end{subequations}
It follows from \eqref{crwls1} that 
\begin{align}
\widehat{\bm{\theta}}^{\rm CRWLS} 
=\widehat{\bm{\theta}}^{\rm RWLS}- \lambda^* C\tr(\mathsf{B}).\label{crwls3}
\end{align}
Furthermore, by using \eqref{crwls2}, we have
\begin{align*}
\tr(\mathsf{B})^\top  \widehat{\bm{\theta}}^{\rm CRWLS}= \tr(\mathsf{B})^\top  ( \widehat{\bm{\theta}}^{\rm RWLS}- \lambda^* C \tr(\mathsf{B})) =1,
\end{align*}
which yields
\begin{align}
\lambda^* = \frac{ \tr(\mathsf{B})^\top \widehat{\bm{\theta}}^{\rm CRWLS} -1}{ \tr(\mathsf{B})^\top  C \tr(\mathsf{B})}.\label{oplambda}
\end{align}
Substituting (\ref{oplambda}) into \eqref{crwls3} obtains 
\begin{align}
&\widehat{\bm{\theta}}^{\rm CRWLS}  
= \widehat{\bm{\theta}}^{\rm RWLS} 
-C\nonumber
\tr(\mathsf{B})
\frac{ \tr(\mathsf{B})^\top  \widehat{\bm{\theta}}^{\rm RWLS}  -1}{ \tr(\mathsf{B})^\top  C \tr(\mathsf{B})}.
\end{align}

Next, we compute the MSE matrix of $\widehat{\bm{\theta}}^{\rm CRWLS}$.
By the constraint $\tr(\mathsf{B})^\top \bm{\theta} =1$, we have
\begin{align}
\nonumber
&\hspace{5mm}\widehat{\bm{\theta}}^{\rm CRWLS} 
-\bm{\theta}\\
&= \widehat{\bm{\theta}}^{\rm RWLS} 
-\bm{\theta}
-
\frac{C \tr(\mathsf{B}) }
{ \tr(\mathsf{B})^\top  C \tr(\mathsf{B})}
\big(\tr(\mathsf{B})^\top  (\widehat{\bm{\theta}}^{\rm RWLS}  -\bm{\theta})\big)\nonumber \\
\nonumber
&=C(-\gamma\bm{\theta} + \mathbf{A}^\top\mathrm{W}\mathbf{e})
-
\frac{C \tr(\mathsf{B}) }
{ \tr(\mathsf{B})^\top  C \tr(\mathsf{B})}
\big(\tr(\mathsf{B})^\top
C(-\gamma\bm{\theta} + \mathbf{A}^\top\mathrm{W}\mathbf{e}) \big) \nonumber\\
\nonumber
&=-\gamma F\bm{\theta} + F\mathbf{A}^\top\mathrm{W}\mathbf{e}.
\end{align}
The matrix $F$ has the following properties:
\begin{align}
F^\top = F,~F \tr(\mathsf{B}) = 0,~FC^{-1}F=F. \label{a8}
\end{align}
As a result, the MSE matrix of $\widehat{\bm{\theta}}^{\rm CRWLS} $ is
	\begingroup
\allowdisplaybreaks
\begin{align*}
&\hspace{4mm}\mathbb{E}(\widehat{\bm{\theta}}^{\rm CRWLS}  -\bm{\theta} )
(\widehat{\bm{\theta}}^{\rm CRWLS}  -\bm{\theta} )^\top\\
&=\gamma^2F\bm{\theta}\bm{\theta} ^\top F
+F\mathbf{A}^\top\mathrm{W}\mathbf{A} F\\
&= \gamma^2F\bm{\theta}\bm{\theta} ^\top F
+F(C^{-1}-\gamma \mathsf{I} ) F\\
&=
FC^{-1}F-\gamma F( \mathsf{I}  - \gamma  \bm{\theta}\bm{\theta} ^\top)F\\
&=F-\gamma F(  \mathsf{I}  - \gamma  {\theta}\bm{\theta} ^\top)F.
\end{align*}
\endgroup
This completes the proof.

	\subsection{Proof of Theorem  \ref{thm5}}
	The proof requires the equivalent model \dref{equimodel} as well as resulting notations and conclusion  in Section \ref{sec4}.
Denote the orthogonal matrix appearing in \dref{qmatrix} by
\begin{align}
\mathbf{Q}
=
\left[
\begin{array}{c}
\frac{{\rm Tr}(\mathsf{B})^\top} {\|{\rm Tr}(\mathsf{B})\|}\\
\widetilde{\mathbf{Q}}
\end{array}
\right].
\end{align}	
Thus by \dref{qmatrix} and \dref{equiv} we have
	\begingroup
\allowdisplaybreaks
\begin{align}
\nonumber
\widehat{\bm{\theta}}^{\rm CRWLS}  - \bm{\theta}
&=
\left[\frac{{\rm Tr}(\mathsf{B})} {\|{\rm Tr}(\mathsf{B})\|},~  \widetilde{\mathbf{Q}}^\top \right]
\left[
\begin{array}{c}
\frac{1}{\|{\rm Tr}(\mathsf{B})\|}\\
\widehat{\bm{\alpha}}^{\rm RWLS} 
\end{array}
\right]
-\left[\frac{{\rm Tr}(\mathsf{B})} {\|{\rm Tr}(\mathsf{B})\|},~  \widetilde{\mathbf{Q}}^\top \right]
\left[
\begin{array}{c}
\frac{1}{\|{\rm Tr}(\mathsf{B})\|}\\
\bm{\alpha}
\end{array}
\right]\\
&=\widetilde{\mathbf{Q}}^\top
\big(\widehat{\bm{\alpha}}^{\rm RWLS} 
-   \bm{\alpha}\big).\label{aa2}
\end{align}	
\endgroup
When $0<\gamma<2/\big(\|\bm{\theta}\|^2
-\frac1{\|{\rm Tr}(\mathsf{B})\|^2}\big)$,
 there holds
	\begingroup
\allowdisplaybreaks
\begin{align}
\nonumber
\mathbb{MSE}(&\widehat{\bm{\theta}}^{\rm CRWLS})
\eq
\mathbb{E} (\widehat{\bm{\theta}}^{\rm CRWLS}  - \bm{\theta})
(\widehat{\bm{\theta}}^{\rm CRWLS}  - \bm{\theta})^\top\\
\nonumber
&=
\widetilde{\mathbf{Q}}^\top
\mathbb{E}
\big(\big(\widehat{\bm{\alpha}}^{\rm RWLS} 
- \bm{\alpha}\big)
-\big(\widehat{\bm{\alpha}}^{\rm RWLS} 
-   \bm{\alpha}\big)
\big)^\top
\widetilde{\mathbf{Q}}\\
\nonumber
&=\widetilde{\mathbf{Q}}^\top
	\mathbb{MSE }\big(\widehat{\bm{\alpha}}^{\rm RWLS} (\gamma)\big)
\widetilde{\mathbf{Q}}\\
\nonumber
&<\widetilde{\mathbf{Q}}^\top
\mathbb{MSE }\big(\widehat{\bm{\alpha}}^{\rm RWLS} (0)\big)
\widetilde{\mathbf{Q}}\\
\nonumber
&=\widetilde{\mathbf{Q}}^\top
\mathbb{E}
\big(\big(\widehat{\bm{\alpha}}^{\rm RWLS}(0) 
-\bm{\alpha}\big)
-\big(\widehat{\bm{\alpha}}^{\rm RWLS}(0) 
-   \bm{\alpha}\big)
\big)^\top
\widetilde{\mathbf{Q}}\\
\nonumber
&=
\mathbb{E} (\widehat{\bm{\theta}}^{\rm CRWLS} (0) - \bm{\theta})
(\widehat{\bm{\theta}}^{\rm CRWLS}(0)  - \bm{\theta})^\top\\
&=
\mathbb{MSE }\big(\widehat{\bm{\theta}}^{\rm CWLS}\big)
\end{align}
\endgroup
where the inequality is obtained by \dref{gg1} in Section \ref{sec4} and $\widehat{\bm{\theta}}^{\rm CRWLS} (0)$  is exactly $\widehat{\bm{\theta}}^{\rm CWLS}$.

	\subsection{Proof of Proposition \ref{prop4} and Remark \ref{rem4}}
	It follows from Lemma \ref{lemb1} that 
		\begingroup
	\allowdisplaybreaks
	\begin{align}
	\nonumber
	&\hspace{-5mm}\mathbf{Q} \widehat{\bm{\theta}}^{\rm CRWLS}
	=
	\mathbf{Q} 
	 \mathbf{H} \mathbf{y} + \mathbf{Q}  \mathbf{f}\\
	 \nonumber
	&= \left[
	\begin{array}{c}
	0 \\
	\mathbf{U}
	\end{array}
	\right] 
	\left( \mathbf{z} + \frac{\mathbf{d}} {\|{\rm Tr}(\mathsf{B})\|}\right)
	+\frac{1}{\|{\rm Tr}(\mathsf{B})\|}
	\left[
	\begin{array}{c}
	1 \\
	-\mathbf{U} \mathbf{d}
	\end{array}
	\right]
	\\
	&=
	\left[
	\begin{array}{c}
	\frac1{\|{\rm Tr}(\mathsf{B})\|}\\
	\widehat{\bm{\alpha}}^{\rm RWLS} 
	\end{array}
	\right].
	\label{equi}
	\end{align}
	\endgroup
	Pre-multiplying $\mathbf{Q}^\top$ on both sides of \eqref{equi} proves \eqref{equiv}.

By \eqref{unconstrwls}, we have
\begin{align}
\nonumber
\widehat{\bm{\alpha}}^{\rm RWLS} - \bm{\alpha}
&=\mathbf{V}\mathbf{K} ^\top \mathrm{W} \big(  \mathbf{K} \bm{\alpha} + \mathbf{e} \big)- \bm{\alpha}\\
\nonumber
&=\big(\mathbf{V}\mathbf{K} ^\top \mathrm{W} \mathbf{K} - \mathsf{I} 
\big)\bm{\alpha}
+\mathbf{V}\mathbf{K} ^\top \mathrm{W} \mathbf{e}\\
&=-\gamma\mathbf{V} \bm{\alpha}
+\mathbf{V}\mathbf{K} ^\top \mathrm{W} \mathbf{e}, \label{diff}
\end{align}
which derives
	\begin{align*}
	\mathbb{MSE }\big(\widehat{\bm{\alpha}}^{\rm RWLS} (\gamma)\big)
	&\eq \mathbb{E} (\widehat{\bm{\alpha}}^{\rm RWLS}
	- \bm{\alpha})
	(\widehat{\bm{\alpha}}^{\rm RWLS}
	- \bm{\alpha})^\top\\
	&=\gamma^2  \mathbf{V} \bm{\alpha} \bm{\alpha}^\top \mathbf{V}
	+ \mathbf{V} \mathbf{K}^\top 
	\mathrm{W}
	\mathbf{K} 
	\mathbf{V}.
	\end{align*} 
	When $0<\gamma<2/\bm{\alpha}^\top\bm{\alpha}  $,
	we have $\gamma\bm{\alpha} \bm{\alpha}^\top-2\mathsf{I}<0 $	  and further
	$$\gamma\bm{\alpha} \bm{\alpha}^\top 
	- \gamma (\mathbf{K}^\top 
	\mathrm{W}
	\mathbf{K} )^{-1}
	-2\mathsf{I}<0$$
since $ \gamma (\mathbf{K}^\top 
\mathrm{W}
\mathbf{K} )^{-1}$ is always positive definite.
Thus we obtain
		\begin{align*}
	&\hspace{5mm}\mathbb{MSE }\big(\widehat{\bm{\alpha}}^{\rm RWLS} (\gamma)\big)
	-
	\mathbb{MSE }\big(\widehat{\bm{\alpha}}^{\rm RWLS} (0)\big)\\
	&=\gamma^2  \mathbf{V} \bm{\alpha} \bm{\alpha}^\top \mathbf{V}
	+ \mathbf{V} \mathbf{K}^\top 
	\mathrm{W}
	\mathbf{K} 
	\mathbf{V}
	-
	\big(\mathbf{K}^\top 
	\mathrm{W}
	\mathbf{K} \big)^{-1}\\
	&=\mathbf{V}
	\big(\gamma^2 \bm{\alpha} \bm{\alpha}^\top 
	+  \mathbf{K}^\top 
	\mathrm{W}
	\mathbf{K} 
	-\mathbf{V}^{-1}
	\big(\mathbf{K}^\top 
	\mathrm{W}
	\mathbf{K} \big)^{-1}
	\mathbf{V}^{-1}\big)
	\mathbf{V}\\
	&=\gamma
	\mathbf{V}
	\big(\gamma\bm{\alpha} \bm{\alpha}^\top 
	- \gamma (\mathbf{K}^\top 
	\mathrm{W}
	\mathbf{K} )^{-1}
	-2\mathsf{I}\big)
	\mathbf{V}\\
	&<0
	\end{align*}
	if $0<\gamma<2/\bm{\alpha}^\top  \bm{\alpha}$.

\subsection{Proof of Proposition  \ref{prop5}}
By \eqref{equi}, we have
	\begingroup
\allowdisplaybreaks
\begin{align*}
&\hspace{-9mm}\mathbf{A} \bm{\theta}-   \mathbf{A}\widehat{\bm{\theta}}^{\rm CRWLS} 
 =\mathbf{A}\mathbf{Q}^\top
 \left(\mathbf{Q}  \bm{\theta} 
 -   \mathbf{Q} \widehat{\bm{\theta}}^{\rm CRWLS} \right)\\
 &=\big[\mathbf{d},\mathbf{K}\big]
\left( \left[
 \begin{array}{c}
\frac1 {\|{\rm Tr}(\mathsf{B})\| }\\
\bm{\alpha}
 \end{array}
 \right]   
 -\left[
 \begin{array}{c}
 \frac1{\|{\rm Tr}(\mathsf{B})\|}\\
 \widehat{\bm{\alpha}}^{\rm RWLS} 
 \end{array}
 \right]\right)
 \\
  &= \mathbf{K}\bm{\alpha} - \mathbf{K}\widehat{\bm{\alpha}}^{\rm RWLS} 
\end{align*}
\endgroup
which means that $R(\widehat{\bm{\theta}}^{\rm CRWLS} )=R(\widehat{\bm{\alpha}}^{\rm RWLS} )$.
So, the assertion \eqref{eqhp2} holds.

Similarly, it gives
		\begingroup
\allowdisplaybreaks
\begin{align}
\mathbf{y} -   \mathbf{A}\widehat{\bm{\theta}}^{\rm CRWLS}
=\mathbf{z} + \frac{\mathbf{d}} {\|{\rm Tr}(\mathsf{B})\|}
-
\big[\mathbf{d},\mathbf{K}\big]
\left[
\begin{array}{c}
\frac1{\|{\rm Tr}(\mathsf{B})\|}\\
\widehat{\bm{\alpha}}^{\rm RWLS} 
\end{array}
\right]= \mathbf{z} - \mathbf{K}\widehat{\bm{\alpha}}^{\rm RWLS}.  \label{m1}
\end{align}
\endgroup
The first equation in Lemma \eqref{lemb1} derives
\begin{align}
\tr\big( \mathbf{A} \mathbf{H}  \big)
=
\tr\big( \mathbf{A} \mathbf{Q}^\top \mathbf{Q} \mathbf{H}  \big)
=\tr\left(\big[\mathbf{d},\mathbf{K}\big]
 \left[
\begin{array}{c}
0 \\
\mathbf{U}
\end{array}
\right] \right)
=
\tr\big( \mathbf{K} \mathbf{U}  \big).\label{m2}
\end{align}
Combining \eqref{m1} with \eqref{m2} proves \eqref{eqhp4}.

	\subsection{Proof of Theorem  \ref{thm6}}
	We first prove the convergence \dref{ophp}.
It follows from \eqref{diff} that 
\begin{align*}
\mathbb{E}\big(\mathbf{K} \bm{\alpha} -   \mathbf{K}\widehat{\bm{\alpha}}^{\rm RWLS}  \big)^\top
\mathrm{W} 
  \big(\mathbf{K} \bm{\alpha} -   \mathbf{K}\widehat{\bm{\alpha}}^{\rm RWLS}  \big)
=\gamma^2   \bm{\alpha}^\top \mathbf{V}
\mathbf{K}^\top
\mathrm{W} 
\mathbf{K}
 \mathbf{V} \bm{\alpha}
+ 
\tr\big( 
\mathbf{V} \mathbf{K}^\top 
\mathrm{W}
\mathbf{K} 
\mathbf{V} 
\mathbf{K}^\top
\mathrm{W}
\mathbf{K}  \big)
\end{align*}
 the two terms of which are
\begin{subequations}
	\label{twoterms}
	\begin{align}
\gamma^2   \bm{\alpha}^\top \mathbf{V}
\mathbf{K}^\top
\mathrm{W} 
\mathbf{K}
\mathbf{V} \bm{\alpha}
&= \gamma^2   \bm{\alpha}^\top 
\left(\mathsf{I} - 
\frac{\gamma}{n}
\mathbf{S} \right)
\mathbf{V} \bm{\alpha} 
=\frac{\gamma^2}{n}
\bm{\alpha}^\top
\mathbf{S} \bm{\alpha}
-\frac{\gamma^3}{n^2}
\bm{\alpha}^\top
\mathbf{S}^{2}\bm{\alpha}\label{aa1}\\
%
\mathbf{V} \mathbf{K}^\top 
\mathrm{W}
\mathbf{K} 
\mathbf{V} 
\mathbf{K}^\top
\mathrm{W}
\mathbf{K} & =
\left(  \mathsf{I} - 
\frac{\gamma}{n}
\mathbf{S}  \right)^2
=  \mathsf{I} - 
2\frac{\gamma}{n}
\mathbf{S} 
+\frac{\gamma^2}{n^2}
\mathbf{S}^{2}
\end{align}
\end{subequations}
where $\mathbf{S}$ is defined in \dref{s}.
Define
\begin{align}
\mathscr{C}_R(\gamma)\eq n\left( R(\widehat{\bm{\alpha}}^{\rm RWLS} ) - \tr(\mathsf{I})  \right). \label{lr}
\end{align}
Thus
\begin{align}
\widehat{ \gamma}_{R}
(\widehat{\bm{\alpha}}^{\rm RWLS} )
=\argmin_{\gamma \geq 0}
\mathscr{C}_R(\gamma).
\end{align}
Substituting \eqref{twoterms} into \eqref{lr} yields
\begin{align}
\nonumber
\mathscr{C}_R(\gamma)
&=
\gamma^2
\bm{\alpha}^\top
\mathbf{S}\bm{\alpha}
-\frac{\gamma^3}{n}
\bm{\alpha}^\top
\mathbf{S}^{2}\bm{\alpha}- 
2\gamma
{\rm Tr}\left(\mathbf{S} \right)
+\frac{\gamma^2}{n}
{\rm Tr}\left(\mathbf{S}^{2} \right)\\
&\xra{}
\bm{\alpha} ^\top \Sigma^{-1}\bm{\alpha} \gamma^2
-2  \tr\big(  \Sigma^{-1} \big) \gamma
\eq\mathscr{C}(\gamma)\label{cfcon}
\end{align}
as $n\xra{}\infty$.
It is clear that
	\begin{align*}
	\nonumber
	\gamma^\star
			&\eq
			\argmin_{\gamma\geq 0}
			\mathscr{C}(\gamma)=\frac{\tr\big(  \Sigma^{-1} \big)}
	{\bm{\alpha} ^\top \Sigma^{-1}\bm{\alpha} }
	\end{align*}
	which can also be expressed by $\bm{\theta}$ and $\Upsilon$ in terms of Lemma \ref{lemb3}.
	
By Lemma  \ref{ct}, the limit $\widehat{ \gamma}_{R}
(\widehat{\bm{\alpha}}^{\rm RWLS} )\xra{}\gamma^\star$ holds
as $n\xra{}\infty$ since the convergence \dref{cfcon} is uniform over
a compact subset of $[0,+\infty)$ that includes 
 $\gamma^\star$.

For convenience of proving \eqref{ophpr}, denote
\begin{align}
\widehat{\bm{\alpha}}^{\rm WLS} 
\eq 
\mathbf{X}
\mathbf{K}^\top  \mathrm{W}
\mathbf{z},~~
\mathbf{X}\eq(\mathbf{K}^\top  \mathrm{W}\mathbf{K})^{-1}.
\end{align}
It follows that
	\begingroup
\allowdisplaybreaks
\begin{align}
\mathbf{V}
&=\mathbf{X}
-\gamma \mathbf{V} \mathbf{X},~~
\mathbf{V}\mathbf{X} ^{-1}
=\mathsf{I}
-\gamma \mathbf{V} \\
\nonumber
\mathbf{K}\widehat{\bm{\alpha}}^{\rm RWLS}
&=\mathbf{K} \mathbf{V}\mathbf{K}^\top
\mathrm{W} \mathbf{z}
=\mathbf{K}
\big(\mathbf{X}
-\gamma \mathbf{V} \mathbf{X}\big)\mathbf{K}^\top
\mathrm{W} \mathbf{z}\\
&=\mathbf{K} \widehat{\bm{\alpha}}^{\rm WLS}
-\gamma\mathbf{K} \mathbf{V}\widehat{\bm{\alpha}}^{\rm WLS}.
\end{align}
\endgroup
We first consider the decomposition of the first term of the cost function of  \eqref{unbiasedriskalpha}
\begin{align}
\nonumber
&\hspace{5mm}(\mathbf{z} - \mathbf{K}\widehat{\bm{\alpha}}^{\rm RWLS})^\top
\mathrm{W}
(\mathbf{z} - \mathbf{K}\widehat{\bm{\alpha}}^{\rm RWLS})\\
\nonumber
&
=\big(\mathbf{z} -  \mathbf{K} \widehat{\bm{\alpha}}^{\rm WLS}\big)^\top 
\mathrm{W}
\big(\mathbf{z} -  \mathbf{K} \widehat{\bm{\alpha}}^{\rm WLS}\big)
+\gamma^2(\widehat{\bm{\alpha}}^{\rm WLS})^\top
\mathbf{V} \mathbf{K}^\top
\mathrm{W}
\mathbf{K}
\mathbf{V}
\widehat{\bm{\alpha}}^{\rm WLS}\\
&\hspace{5mm}+2\gamma 
(\widehat{\bm{\alpha}}^{\rm WLS})^\top
\mathbf{V}
\mathbf{K}^\top
\mathrm{W}
\big(\mathbf{z} -  \mathbf{K} \widehat{\bm{\alpha}}^{\rm WLS} \big).\label{ft}
\end{align}
The second term of \eqref{ft} is
\begin{align}\nonumber
&~~~~\gamma^2(\widehat{\bm{\alpha}}^{\rm WLS})^\top
\mathbf{V} \mathbf{K}^\top
\mathrm{W}
\mathbf{K}
\mathbf{V}
\widehat{\bm{\alpha}}^{\rm WLS}\\
\nonumber
&=
\gamma^2
(\widehat{\bm{\alpha}}^{\rm WLS})^\top
\big(\mathsf{I} - \gamma \mathbf{V} \big) \mathbf{V}
\widehat{\bm{\alpha}}^{\rm WLS}\\
\nonumber
&=
\gamma^2
(\widehat{\bm{\alpha}}^{\rm WLS})^\top
\mathbf{V}
(\widehat{\bm{\alpha}}^{\rm WLS})
-\gamma^3
(\widehat{\bm{\alpha}}^{\rm WLS})^\top
\mathbf{V}^2
\widehat{\bm{\alpha}}^{\rm WLS}\\
&=
\frac{\gamma^2}{n}
(\widehat{\bm{\alpha}}^{\rm WLS})^\top
\mathbf{S}
(\widehat{\bm{\alpha}}^{\rm WLS})
-\frac{\gamma^3}{n^2}
(\widehat{\bm{\alpha}}^{\rm WLS})^\top
\mathbf{S} ^{2}
\widehat{\bm{\alpha}}^{\rm WLS}.
\label{c1}
\end{align}
The third term of \eqref{ft} is
\begin{align}
2\gamma 
(\widehat{\bm{\alpha}}^{\rm WLS})^\top
\mathbf{V}
\mathbf{K}^\top
\mathrm{W}
\big(\mathbf{z} -  \mathbf{K} \widehat{\bm{\alpha}}^{\rm WLS} \big)
=2\gamma 
(\widehat{\bm{\alpha}}^{\rm WLS})^\top
\mathbf{V}
\big( \mathbf{K}^\top
\mathrm{W} \mathbf{z}
-
\mathbf{K}^\top
\mathrm{W} 
\mathbf{K} \widehat{\bm{\alpha}}^{\rm WLS}
\big)
=0.
\label{c2}
\end{align}
Further, we have
\begin{align}
\mathbf{K} \mathbf{U} 
=
\mathbf{V} \mathbf{K}^\top 
\mathrm{W}
\mathbf{K} 
=
 \mathsf{I} - 
\frac{\gamma}{n}
\mathbf{S} .
\label{c3}
\end{align}
Define
\begin{align*}
\nonumber
\mathscr{C}_U(\gamma) \eq n\big( 
\mathscr{C}_u(\gamma)
&-\big(\mathbf{z} \!-\! \mathbf{K} \widehat{\bm{\alpha}}^{\rm WLS}\big)^\top 
\mathrm{W}
\big(\mathbf{z} \!-\!  \mathbf{K} \widehat{\bm{\alpha}}^{\rm WLS}\big)
-2\tr(\mathsf{I}) \big). \label{go}
\end{align*}
Thus
\begin{align}
\widehat{ \gamma}_{u}
(\widehat{\bm{\alpha}}^{\rm RWLS} )
=\argmin_{\gamma \geq 0}
\mathscr{C}_U(\gamma).
\end{align}
since $\big(\mathbf{z} -  \mathbf{K} \widehat{\bm{\alpha}}^{\rm WLS}\big)^\top 
\mathrm{W}
\big(\mathbf{z} -  \mathbf{K} \widehat{\bm{\alpha}}^{\rm WLS}\big)$ is independent of $\gamma$.
Substituting \eqref{c1}--\eqref{c3} into \eqref{go} turns out
	\begingroup
\allowdisplaybreaks
\begin{align}
\nonumber
\mathscr{C}_U(\gamma)
&=
\gamma^2
(\widehat{\bm{\alpha}}^{\rm WLS})^\top
\mathbf{S}
\widehat{\bm{\alpha}}^{\rm WLS}
-\frac{\gamma^3}{n}
(\widehat{\bm{\alpha}}^{\rm WLS})^\top
\mathbf{S} ^{2}
\widehat{\bm{\alpha}}^{\rm WLS}- 
2\gamma
{\rm Tr}\left(\mathbf{S} \right)\\
&\xra{}
\bm{\alpha} ^\top \Sigma^{-1}\bm{\alpha} \gamma^2
-2  \tr\big(  \Sigma^{-1} \big) \gamma
=\mathscr{C}(\gamma)
\end{align}
\endgroup
as $n\xra{}\infty$ since $\widehat{\bm{\alpha}}^{\rm WLS} \xra{} \bm{\alpha}$ almost surely as $n\xra{}\infty$.
It follows from Lemma \ref{ct} that the limit \eqref{ophpr} is true.

It remains to show the rates of convergence \dref{ophpd} and \dref{ophpr}.

For this, we first calculate the first and second order derivatives of $\mathscr{C}_R(\gamma)$ with respective to $\gamma$ with the help of Lemma \ref{lemb2}:
\begin{subequations}
	\begin{align*}
\nonumber
\frac{{\rm d}~\!\mathscr{C}_R(\gamma)}{{\rm d}~\! \gamma}
=~\!&2\gamma
\bm{\alpha}^\top
\mathbf{S}
\bm{\alpha}
-\frac{\gamma^2}{n}
\bm{\alpha}^\top
\mathbf{S^2}
\bm{\alpha}
-\frac{3\gamma^2 }{n}
\bm{\alpha}^\top
\mathbf{S}^2
\bm{\alpha}
+\frac{2\gamma^3}{n^2}
\bm{\alpha}^\top
\mathbf{S^3}
\bm{\alpha}\\
\nonumber
&-
2\tr\big( \mathbf{S} \big)
+\frac{2\gamma }{n}
\tr\big( \mathbf{S}^2 \big)+\frac{2\gamma }{n}
\tr\big( \mathbf{S}^2 \big)
-\frac{2\gamma^2 }{n^2}
\tr\big( \mathbf{S}^3 \big)\\
\frac{{\rm d^2}~\!\mathscr{C}_R(\gamma)}{{\rm d}~\! \gamma^2}
=~\!&2
\bm{\alpha}^\top
\mathbf{S}
\bm{\alpha}
+O\left(\frac{1}n\right).
\end{align*}
\end{subequations}
By using a Taylor expansion, we have	
\begin{align*}
0=\frac{{\rm d}~\!\mathscr{C}_R(\gamma)}{{\rm d}~\! \gamma}
\Big|_{\gamma = \widehat{ \gamma}_{R}
	(\widehat{\bm{\alpha}}^{\rm RWLS} )}
=\frac{{\rm d}~\!\mathscr{C}_R(\gamma)}{{\rm d}~\! \gamma}
\Big|_{\gamma = \gamma^\star}
+
\frac{{\rm d^2}~\!\mathscr{C}_R(\gamma)}{{\rm d}~\! \gamma^2}
\Big|_{\gamma = \overline{\gamma}  }
\left(\widehat{ \gamma}_{R}
(\widehat{\bm{\alpha}}^{\rm RWLS} )
-   \gamma^\star\right).
\end{align*}	
where $\overline{\gamma}$ is a real number between $\widehat{ \gamma}_{R}
(\widehat{\bm{\alpha}}^{\rm RWLS} )$ and $\gamma^\star$,
which implies that
\begin{align*}
\widehat{ \gamma}_{R}
(\widehat{\bm{\alpha}}^{\rm RWLS} )
-   \gamma^\star
=-
\left(
\frac{{\rm d^2}~\!\mathscr{C}_R(\gamma)}{{\rm d}~\! \gamma^2}
\Big|_{\gamma = \overline{\gamma}  }
\right)^{-1}
\frac{{\rm d}~\!\mathscr{C}_R(\gamma)}{{\rm d}~\! \gamma}
\Big|_{\gamma = \gamma^\star}.
\end{align*}
As  $n\xra{}\infty$, we have
	\begingroup
\allowdisplaybreaks
\begin{subequations}
	\begin{align*}
\nonumber
n\frac{{\rm d}~\!\mathscr{C}_R(\gamma)}{{\rm d}~\! \gamma}
\Big|_{\gamma = \gamma^\star}
&=	2n\big( \gamma^\star
\bm{\alpha}^\top
\mathbf{S}
\bm{\alpha} - \tr\big( \mathbf{S} \big)  \big)
-\frac{4(\gamma^\star)^2}{n}
\bm{\alpha}^\top
\mathbf{S^2}
\bm{\alpha}
+\frac{4\gamma^\star }{n}
\tr\big( \mathbf{S}^2 \big)
+O\left(\frac1n\right)\\
&
\nonumber
=-2(\gamma^\star)^2
\bm{\alpha}^\top
\mathbf{S}
\Sigma^{-1}
\bm{\alpha}
+2\gamma^\star
\tr\big( \mathbf{S} \Sigma^{-1} \big)
-4(\gamma^\star)^2
\bm{\alpha}^\top
\mathbf{S^2}
\bm{\alpha}
+4\gamma^\star
\tr\big( \mathbf{S}^2 \big)
+O\left(\frac1n\right)\\
&\xra{}
-6(\gamma^\star)^2
\bm{\alpha}^\top
\Sigma^{-2}
\bm{\alpha}
+6\gamma^\star
\tr\big( \Sigma^{-2} \big)\\
&\hspace{-3mm}\frac{{\rm d^2}~\!\mathscr{C}_R(\gamma)}{{\rm d}~\! \gamma^2}
\Big|_{\gamma = \overline{\gamma}  }
\xra{}
2
\bm{\alpha}^\top
\Sigma^{-1}
\bm{\alpha}
\end{align*}	
\end{subequations}
\endgroup
which yields
	\begingroup
\allowdisplaybreaks
\begin{align}
n\big(\widehat{ \gamma}_{R}
(\widehat{\bm{\alpha}}^{\rm RWLS} )
-   \gamma^\star\big)\xra{}
\frac{3\gamma^\star
\big(\gamma^\star
\bm{\alpha}^\top
\Sigma^{-2}
\bm{\alpha}
-
\tr\big( \Sigma^{-2} \big)
\big)}
{\bm{\alpha}^\top
	\Sigma^{-1}
	\bm{\alpha}}.
\end{align}
\endgroup
For proving \dref{ophpr}, the procedure is similar.
By Lemma \ref{lemb2}, the first and second derivatives of $\mathscr{C}_U(\gamma)$ are
\begingroup
\allowdisplaybreaks
\begin{align*}
\nonumber
\frac{{\rm d}~\!\mathscr{C}_U(\gamma)}{{\rm d}~\! \gamma}
&=~\!2\gamma
(\widehat{\bm{\alpha}}^{\rm WLS})^\top
\mathbf{S}
\widehat{\bm{\alpha}}^{\rm WLS}
-\frac{\gamma^2}{n}
(\widehat{\bm{\alpha}}^{\rm WLS})^\top
\mathbf{S^2}
\widehat{\bm{\alpha}}^{\rm WLS}\\
\nonumber
&\hspace{-3mm}
-\frac{3\gamma^2 }{n}
(\widehat{\bm{\alpha}}^{\rm WLS})^\top
\mathbf{S}^2
\widehat{\bm{\alpha}}^{\rm WLS}
+\frac{2\gamma^3}{n^2}
(\widehat{\bm{\alpha}}^{\rm WLS})^\top
\mathbf{S^3}
\widehat{\bm{\alpha}}^{\rm WLS}\\
&\hspace{-3mm}
-
2\tr\big( \mathbf{S} \big)
+\frac{2\gamma }{n}
\tr\big( \mathbf{S}^2 \big)\\
\frac{{\rm d^2}~\!\mathscr{C}_U(\gamma)}{{\rm d}~\! \gamma^2}
&=2
(\widehat{\bm{\alpha}}^{\rm WLS})^\top
\mathbf{S}
\widehat{\bm{\alpha}}^{\rm WLS}
+O_p\left(\frac{1}n\right).
\end{align*}	
\endgroup
Applying the Taylor expansion obtains
\begin{align*}
\widehat{ \gamma}_{u}
(\widehat{\bm{\alpha}}^{\rm RWLS} )
-   \gamma^\star
=-
\left(
\frac{{\rm d^2}~\!\mathscr{C}_U(\gamma)}{{\rm d}~\! \gamma^2}
\Big|_{\gamma = \overline{\gamma}  }
\right)^{-1}
\frac{{\rm d}~\!\mathscr{C}_U(\gamma)}{{\rm d}~\! \gamma}
\Big|_{\gamma = \gamma^\star}
\end{align*}
where $\overline{\gamma}$ is a real number between $\widehat{ \gamma}_{u}
(\widehat{\bm{\alpha}}^{\rm RWLS} )$ and $\gamma^\star$.
By a straightforward calculation,  we have
\begin{align*}
\sqrt{n}\big(\widehat{\bm{\alpha}}^{\rm WLS}
-  
\bm{\alpha} \big)
\xra{}\mathscr{N}
\left( 0, \Sigma^{-1} \right)
\end{align*}
by further using the Delta method, 
\begin{align*}
\nonumber
\sqrt{n}\big( (\widehat{\bm{\alpha}}&^{\rm WLS})^\top
\mathbf{S}
\widehat{\bm{\alpha}}^{\rm WLS}
-  \bm{\alpha}^\top
\Sigma^{-1}
\bm{\alpha} \big)\\
\nonumber
=~\!&
\sqrt{n}
\big(\widehat{\bm{\alpha}}^{\rm WLS}
-
\bm{\alpha}\big)^\top
\mathbf{S}
\widehat{\bm{\alpha}}^{\rm WLS}+\sqrt{n}
\bm{\alpha}^\top
\big( \mathbf{S} - \Sigma^{-1}  \big)
\widehat{\bm{\alpha}}^{\rm WLS}
+\bm{\alpha}^\top
\Sigma^{-1}
\sqrt{n}\big(\widehat{\bm{\alpha}}^{\rm WLS}
-  
\bm{\alpha} \big)\\
\nonumber
=~\!&
\sqrt{n}
\big(\widehat{\bm{\alpha}}^{\rm WLS}
-
\bm{\alpha}\big)^\top
\mathbf{S}
\widehat{\bm{\alpha}}^{\rm WLS}
+\bm{\alpha}^\top
\Sigma^{-1}
\sqrt{n}\big(\widehat{\bm{\alpha}}^{\rm WLS}
-  
\bm{\alpha} \big)
+O_p\left(\frac{1}{\sqrt{n}} \right)\\
\xra{}~\!&
\mathscr{N}
\left( 0, 4~\!\bm{\alpha}^\top
\Sigma^{-3} \bm{\alpha}\right).
\end{align*}
It follows that
\begin{align*}
\nonumber
&\hspace{5mm}\sqrt{n}
\frac{{\rm d}~\!\mathscr{C}_U(\gamma)}{{\rm d}~\! \gamma}
\Big|_{\gamma = \gamma^\star}\\
\nonumber
&=2\gamma^\star 
(\widehat{\bm{\alpha}}^{\rm WLS})^\top
\mathbf{S}
\widehat{\bm{\alpha}}^{\rm WLS}
-
2\tr\big( \mathbf{S} \big)
+O_p\left(\frac{1}{\sqrt{n}} \right)\\
\nonumber
&=2 \sqrt{n}
\gamma^\star 
\left(  (\widehat{\bm{\alpha}}^{\rm WLS})^\top
\mathbf{S}
\widehat{\bm{\alpha}}^{\rm WLS}
-  \bm{\alpha}^\top
\Sigma^{-1}
\bm{\alpha} \right)
+ 2\sqrt{n}\big(\tr\big( \Sigma^{-1} \big) 
-2\tr\big( \mathbf{S} \big)\big)
+O_p\left(\frac{1}{\sqrt{n}} \right)\\
&=2 
\gamma^\star 
\sqrt{n}
\left(  (\widehat{\bm{\alpha}}^{\rm WLS})^\top
\mathbf{S}
\widehat{\bm{\alpha}}^{\rm WLS}
-  \bm{\alpha}^\top
\Sigma^{-1}
\bm{\alpha} \right)
+O_p\left(\frac{1}{\sqrt{n}} \right)\\
&\xra{}\mathscr{N}
\left(0, 16(\gamma^\star)^2\bm{\alpha}^\top
	\Sigma^{-3}
	\bm{\alpha} 
\right).
\end{align*}
As a result,
\begin{align*}
&\sqrt{n}
\big(	\widehat{ \gamma}_u 
(\widehat{\bm{\alpha}}^{\rm RWLS} ) 
-\gamma^\star\big)
\xra{}
\mathscr{N}
\left(0, \frac{4(\gamma^\star)^2\bm{\alpha}^\top
	\Sigma^{-3}
	\bm{\alpha} }
{\big(\bm{\alpha}^\top
	\Sigma^{-1}
	\bm{\alpha}\big)^2 }
 \right).
\end{align*}

	\section*{Appendix B}
	
	\renewcommand{\thesection}{B}
	
	\setcounter{equation}{0}
	\renewcommand{\theequation}{\thesection.\arabic{equation}}
	\setcounter{lem}{0}
	\renewcommand{\thelem}{B\arabic{lem}}
	\setcounter{rem}{0}
	\renewcommand{\therem}{B\arabic{rem}}

	This appendix contains the technical lemmas used in the proof in
	Appendix A.
	\setcounter{subsection}{0}
\begin{lem} 
	\label{lemb1}
We have
	\begin{align}
	\mathbf{Q} \mathbf{H}
	=
	\left[
	\begin{array}{c}
	0 \\
	\mathbf{U}
	\end{array}
	\right],~~
	\mathbf{Q} \bm{ f}
	=
	\frac{1}{\|{\rm Tr}(\mathsf{B})\|}
	\left[
	\begin{array}{c}
	1 \\
	-\mathbf{U} \mathbf{d}
	\end{array}
	\right].\label{id4}
	\end{align}	
	
\end{lem}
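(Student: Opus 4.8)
The plan is to use the orthogonality of $\mathbf{Q}$ to conjugate $C$ into a block-partitioned inverse, after which both identities in \eqref{id4} follow by reading off the appropriate blocks. Write $\mathbf{u}_1\eq[1,0,\dots,0]^\top$ for the first standard basis vector of $\mathbb{R}^{d^2}$ (not to be confused with $\mathbf{U}$). Two elementary facts drive the argument. First, since the first row of $\mathbf{Q}$ is $\tr(\mathsf{B})^\top/\|\tr(\mathsf{B})\|$ and the remaining rows are orthogonal to it, one has $\mathbf{Q}\tr(\mathsf{B})=\|\tr(\mathsf{B})\|\,\mathbf{u}_1$. Second, from $\mathbf{D}=\mathbf{A}\mathbf{Q}^\top$ together with $\mathbf{Q}^\top\mathbf{Q}=\mathsf{I}$ we get $\mathbf{A}=\mathbf{D}\mathbf{Q}$, whence $\mathbf{A}^\top\mathrm{W}\mathbf{A}+\gamma\mathsf{I}=\mathbf{Q}^\top(\mathbf{D}^\top\mathrm{W}\mathbf{D}+\gamma\mathsf{I})\mathbf{Q}$ and therefore $C=\mathbf{Q}^\top\mathbf{G}\mathbf{Q}$ with $\mathbf{G}\eq(\mathbf{D}^\top\mathrm{W}\mathbf{D}+\gamma\mathsf{I})^{-1}$.

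First I would use these facts to strip $\mathbf{Q}$ off the expressions for $\mathbf{H}$ and $\bm{f}$. Since $\mathbf{Q}C=\mathbf{G}\mathbf{Q}$ and $\mathbf{Q}\mathbf{A}^\top=\mathbf{D}^\top$, it follows that $\mathbf{Q}C\mathbf{A}^\top\mathrm{W}=\mathbf{G}\mathbf{D}^\top\mathrm{W}$, while $\mathbf{Q}C\tr(\mathsf{B})=\|\tr(\mathsf{B})\|\,\mathbf{G}\mathbf{u}_1$ and $\tr(\mathsf{B})^\top C\tr(\mathsf{B})=\|\tr(\mathsf{B})\|^2(\mathbf{G})_{11}$. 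Substituting into the definitions of $\mathbf{H}$ and $\bm{f}$, the claims reduce to showing that
\begin{align*}
\mathbf{G}\mathbf{D}^\top\mathrm{W}-\frac{(\mathbf{G}\mathbf{u}_1)(\mathbf{u}_1^\top\mathbf{G}\mathbf{D}^\top\mathrm{W})}{(\mathbf{G})_{11}}=\begin{bmatrix}0\\ \mathbf{U}\end{bmatrix},\qquad \frac{\mathbf{G}\mathbf{u}_1}{(\mathbf{G})_{11}}=\begin{bmatrix}1\\ -\mathbf{U}\mathbf{d}\end{bmatrix},
\end{align*}
i.e. everything is now expressed through the first column of $\mathbf{G}$ and the product $\mathbf{G}\mathbf{D}^\top\mathrm{W}$.

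The core is then a block computation for $\mathbf{G}$. Partitioning $\mathbf{D}=[\mathbf{d},\mathbf{K}]$ gives
\begin{align*}
\mathbf{G}^{-1}=\begin{bmatrix}\mathbf{d}^\top\mathrm{W}\mathbf{d}+\gamma & \mathbf{d}^\top\mathrm{W}\mathbf{K}\\ \mathbf{K}^\top\mathrm{W}\mathbf{d} & \mathbf{K}^\top\mathrm{W}\mathbf{K}+\gamma\mathsf{I}\end{bmatrix},
\end{align*}
whose lower-right block is exactly $\mathbf{V}^{-1}$. Applying the Schur-complement inversion formula relative to that block, I would obtain $(\mathbf{G})_{11}=1/s$ with $s\eq\mathbf{d}^\top\mathrm{W}\mathbf{d}+\gamma-\mathbf{d}^\top\mathrm{W}\mathbf{K}\mathbf{V}\mathbf{K}^\top\mathrm{W}\mathbf{d}$, the lower part of $\mathbf{G}\mathbf{u}_1$ equal to $-(\mathbf{G})_{11}\mathbf{V}\mathbf{K}^\top\mathrm{W}\mathbf{d}$, and the lower-right block of $\mathbf{G}$ equal to $\mathbf{V}+(\mathbf{G})_{11}\mathbf{V}\mathbf{K}^\top\mathrm{W}\mathbf{d}\,\mathbf{d}^\top\mathrm{W}\mathbf{K}\mathbf{V}$. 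The second identity for $\bm{f}$ is then immediate, since dividing the first column of $\mathbf{G}$ by $(\mathbf{G})_{11}$ leaves $1$ on top and $-\mathbf{V}\mathbf{K}^\top\mathrm{W}\mathbf{d}=-\mathbf{U}\mathbf{d}$ below. For the $\mathbf{H}$ identity, subtracting the rank-one term cancels the top row-block of $\mathbf{G}\mathbf{D}^\top\mathrm{W}$ identically, and in the bottom block the correction removes precisely the $\mathbf{V}\mathbf{K}^\top\mathrm{W}\mathbf{d}\,\mathbf{d}^\top\mathrm{W}\mathbf{K}\mathbf{V}$ piece of the lower-right block, collapsing the bottom block to $\mathbf{V}\mathbf{K}^\top\mathrm{W}=\mathbf{U}$.

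I expect the only real work to be bookkeeping: tracking the Schur-complement blocks through the rank-one correction and confirming that the two $\mathbf{V}\mathbf{K}^\top\mathrm{W}\mathbf{d}\,\mathbf{d}^\top\mathrm{W}\mathbf{K}\mathbf{V}$ contributions cancel. There is no analytic subtlety; the decisive structural point is that the lower-right block of $\mathbf{G}^{-1}$ is exactly $\mathbf{V}^{-1}$, which is what makes both blocks collapse onto $\mathbf{U}$ and $\mathbf{U}\mathbf{d}$.
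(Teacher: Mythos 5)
Your proposal is correct, and it follows the same overall strategy as the paper's proof: both conjugate $C$ into $\mathbf{G}=(\mathbf{D}^\top\mathrm{W}\mathbf{D}+\gamma\mathsf{I})^{-1}$ via the orthogonality of $\mathbf{Q}$, use $\mathbf{Q}\tr(\mathsf{B})=\|\tr(\mathsf{B})\|\,\mathbf{u}_1$ to reduce everything to block algebra on $\mathbf{G}$, and exploit the fact that the lower-right block of $\mathbf{G}^{-1}$ is exactly $\mathbf{V}^{-1}=\mathbf{K}^\top\mathrm{W}\mathbf{K}+\gamma\mathsf{I}$. The difference is in execution of the core computation. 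You invoke the full Schur-complement inversion formula, writing out $(\mathbf{G})_{11}=1/s$, the lower part of $\mathbf{G}\mathbf{u}_1$, and the lower-right block of $\mathbf{G}$, then verify that the two $\mathbf{V}\mathbf{K}^\top\mathrm{W}\mathbf{d}\,\mathbf{d}^\top\mathrm{W}\mathbf{K}\mathbf{V}$ contributions cancel. The paper never inverts $\mathbf{G}^{-1}$ explicitly: it extracts the single identity $\bm{e_{-1}}^\top\mathbf{G}\bm{e_1}/\bm{e_1}^\top\mathbf{G}\bm{e_1}=-\mathbf{V}\mathbf{K}^\top\mathrm{W}\mathbf{d}$ from the $(2,1)$-block of $\mathbf{G}^{-1}\mathbf{G}=\mathsf{I}$, and for the $\mathbf{H}$ identity observes that the row combination $\bigl[-\bm{e_{-1}}^\top\mathbf{G}\bm{e_1}/\bm{e_1}^\top\mathbf{G}\bm{e_1},\ \mathsf{I}\bigr]$ reassembles into $\mathbf{V}\bm{e_{-1}}^\top\mathbf{G}^{-1}$, so that multiplying by $\mathbf{G}\mathbf{D}^\top\mathrm{W}$ collapses immediately to $\mathbf{U}$; the Schur complement $s$ and the diagonal entry $(\mathbf{G})_{11}$ never appear because they cancel in ratios. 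Your route buys transparency — the cancellation mechanism is fully visible in the explicit inverse — at the cost of somewhat more bookkeeping; the paper's buys brevity by using only the one piece of the block-inverse structure that is actually needed. One small point shared by both arguments and worth making explicit in a final write-up: the reduction needs $\mathbf{V}^{-1}$ (and hence the Schur blocks, or the paper's identity \eqref{id3}) to be well defined, which holds automatically for $\gamma>0$ and for $\gamma=0$ requires $\mathbf{K}$ to have full column rank — the same implicit assumption the paper makes.
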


\begin{proof}
	For convenience of proof, we denote
	\begin{align*}
	&C=(\mathbf{A}^\top  \mathrm{W}\mathbf{A} + \gamma \mathsf{I})^{-1}\\
	&\mathbf{G} = (\mathbf{D}^\top  \mathrm{W}\mathbf{D} + \gamma \mathsf{I})^{-1}
	\end{align*}
	as well as the column vector $\bm{e_1}$  of length $d^2-1$ and matrix $\bm{e_{-1}}$ of size $d^2$ by $d^2-1$  that make up the identity matrix
	\begingroup
	\allowdisplaybreaks
	\begin{align*}
	\left[
	\begin{array}{cc}
	\bm{e_1}, &\bm{e_{-1}} 
	\end{array}
	\right] 
	=\mathsf{I}.
	\end{align*}
	\endgroup
	Thus the following identities hold.
	\begingroup
	\allowdisplaybreaks
	\begin{subequations}
		\begin{align}
		&	\mathbf{G} =\mathbf{Q} C \mathbf{Q}^\top,~
		\mathbf{D} = \mathbf{A}\mathbf{Q}^\top=[\mathbf{d},\mathbf{K}] \label{id1}\\
		&\bm{e_1}^\top
		\mathbf{Q} 
		={\rm Tr}(\mathsf{B})^\top\!/\|{\rm Tr}(\mathsf{B})\|,~
		\mathbf{Q} {\rm Tr}(\mathsf{B})
		\!=\!\|{\rm Tr}(\mathsf{B})\|\bm{e_1}\label{id2}\\
		&\bm{e_{-1}}^\top\mathbf{G}\bm{e_1}/\bm{e_1}^\top \mathbf{G}\bm{e_1}
		\!=\!-(\mathbf{K}^\top  \mathrm{W}\mathbf{K} + \gamma \mathsf{I})^{-1}\mathbf{K} ^\top \mathrm{W} \mathbf{d}. \label{id3}
		\end{align}	
	\end{subequations}
	\endgroup
	
	The identities \eqref{id1}--\eqref{id2} can be verified by a straightforward calculation.
	
	The  equation \eqref{id3} is obtained by choosing the $(2,1)$-block submatrix of the identity 
	\begingroup
	\allowdisplaybreaks
	\begin{align*}
	\mathbf{G}^{-1}\mathbf{G}
	=\left[
	\begin{array}{cc} 
	\mathbf{d}^\top \mathrm{W} \mathbf{d} +\gamma & \mathbf{d}^\top \mathrm{W} \mathbf{K} \\
	\mathbf{K}^\top \mathrm{W} \mathbf{d} & \mathbf{K}^\top \mathrm{W} \mathbf{K} +\gamma\mathsf{I}
	\end{array}
	\right]\times \left[
	\begin{array}{cc}
	\bm{e_1}^\top \mathbf{G}\bm{e_1} & \bm{e_1}^\top\mathbf{G}\bm{e_{-1}}\\
	\bm{e_{-1}}^\top\mathbf{G}\bm{e_1}& \bm{e_{-1}}^\top\mathbf{G}\bm{e_{-1}}
	\end{array}
	\right]
	=\mathsf{I}.
	\end{align*}
	\endgroup
    By using \eqref{id1}--\eqref{id2}, we have
    \begingroup
    \allowdisplaybreaks
	\begin{align}
	\nonumber
	&\hspace{4.5mm}\mathbf{Q} \mathbf{H}
	=\mathbf{Q}C\mathbf{A}^\top  \mathrm{W} 
	-\mathbf{Q}C \tr(\mathsf{B}) \nonumber
	\frac{ \tr(\mathsf{B})^\top  C \mathbf{A}^\top  \mathrm{W} }{ \tr(\mathsf{B})^\top  C\tr(\mathsf{B})}\\
	\nonumber
	& =\mathbf{Q}C\mathbf{Q}^\top \mathbf{Q}\mathbf{A}^\top  \mathrm{W}
	-\mathbf{Q}C \mathbf{Q}^\top 
	\mathbf{Q} \tr(\mathsf{B}) \nonumber
	\frac{ \tr(\mathsf{B})^\top \mathbf{Q}^\top \mathbf{Q} C 
		\mathbf{Q}^\top \mathbf{Q}\mathbf{A}^\top  \mathrm{W} }
	{ \tr(\mathsf{B})^\top \mathbf{Q}^\top \mathbf{Q}  C \mathbf{Q}^\top \mathbf{Q}\tr(\mathsf{B})}\\
	\nonumber
	&=\mathbf{G} \mathbf{D}\mathrm{W} 
	-\mathbf{G}\bm{e_1} \bm{e_1}^\top\mathbf{G} \mathbf{D}\mathrm{W} /\bm{e_1}^\top \mathbf{G}\bm{e_1}.
	\end{align}
	\endgroup
It is clear that
	\begin{align}
\nonumber
&\hspace{4.5mm}
\bm{e_{1}}^\top\mathbf{Q} \mathbf{H}
=\bm{e_{1}}^\top\mathbf{G} \mathbf{D}\mathrm{W} 
-\bm{e_{1}}^\top\mathbf{G}\bm{e_1} \bm{e_1}^\top\mathbf{G} \mathbf{D}\mathrm{W} /\bm{e_1}^\top \mathbf{G}\bm{e_1}=0.
\end{align}
and further it follows from \eqref{id3} that
\begingroup
\allowdisplaybreaks
		\begin{align*}
	\nonumber
	&\hspace{4.5mm}
	\bm{e_{-1}}^\top\mathbf{Q} \mathbf{H}
	=\bm{e_{-1}}^\top\mathbf{G} \mathbf{D}\mathrm{W} 
	-\bm{e_{-1}}^\top\mathbf{G}\bm{e_1} \bm{e_1}^\top\mathbf{G} \mathbf{D}\mathrm{W} /\bm{e_1}^\top \mathbf{G}\bm{e_1}\\
	&=\left( \bm{e_{-1}}^\top 
	-\frac{\bm{e_{-1}}^\top\mathbf{G}\bm{e_1}}{\bm{e_1}^\top \mathbf{G}\bm{e_1}} 
	\bm{e_1}^\top
	\right)
	\mathbf{G} \mathbf{D}\mathrm{W}\\
	&=
	\left[
	\begin{array}{cc}
	-\frac{\bm{e_{-1}}^\top\mathbf{G}\bm{e_1}}{\bm{e_1}^\top \mathbf{G}\bm{e_1}}, & \mathsf{I}
	\end{array}
	\right]
	\left[
	\begin{array}{c}
	\bm{e_1}^\top\\
	\bm{e_{-1}}^\top
	\end{array}
	\right] 
	\mathbf{G}
	\mathbf{D}\mathrm{W} \\
	&=
	\left[
	\begin{array}{cc}
	(\mathbf{K}^\top  \mathrm{W}\mathbf{K} + \gamma \mathsf{I})^{-1}\mathbf{K} ^\top \mathrm{W} \mathbf{d}, & \mathsf{I}
	\end{array}
	\right]
	\mathbf{G}
	\mathbf{D}^\top\mathrm{W} \\
	&=(\mathbf{K}^\top  \mathrm{W}\mathbf{K} + \gamma \mathsf{I})^{-1}
	\left[
	\begin{array}{cc}
	\mathbf{K} ^\top \mathrm{W} \mathbf{d},
	& \mathbf{K}^\top  \mathrm{W}\mathbf{K} + \gamma \mathsf{I}
	\end{array}
	\right]
	\mathbf{G}
	\mathbf{D}^\top \mathrm{W} \\
	& =(\mathbf{K}^\top  \mathrm{W}\mathbf{K} + \gamma \mathsf{I})^{-1}
	\bm{e_{-1}}^\top  \mathbf{G}^{-1} \mathbf{G}\mathbf{D}^\top\mathrm{W} \\
	& =(\mathbf{K}^\top  \mathrm{W}\mathbf{K} + \gamma \mathsf{I})^{-1}
	\bm{e_{-1}}^\top  \mathbf{D}^\top\mathrm{W} \\
	&=(\mathbf{K}^\top  \mathrm{W}\mathbf{K} + \gamma \mathsf{I})^{-1}
	\mathbf{K}^\top\mathrm{W}\\
	&=\mathbf{U}.
	\end{align*}
	\endgroup
Using \eqref{id1}--\eqref{id2} again, one yields
\begingroup
\allowdisplaybreaks
\begin{align*}
\mathbf{Q} \bm{ f}
&=
 \frac{\mathbf{Q}  C \tr(\mathsf{B})}
{ \tr(\mathsf{B})^\top  C \tr(\mathsf{B})}\\
&=
\frac{\mathbf{Q}  C \mathbf{Q}^\top \mathbf{Q} \tr(\mathsf{B})}
{ \tr(\mathsf{B})^\top \mathbf{Q}^\top \mathbf{Q} C \mathbf{Q}^\top \mathbf{Q} \tr(\mathsf{B})}\\
&=
\frac{\mathbf{G}\bm{e_1}}{\bm{e_1}^\top \mathbf{G}\bm{e_1}\|{\rm Tr}(\mathsf{B})\|}.
\end{align*}
	\endgroup
	Thus by \eqref{id3} we have
		\begingroup
	\allowdisplaybreaks
	\begin{align*}
	\bm{e_1}^\top \mathbf{Q} \bm{ f}
	&=
	\frac{\bm{e_1} ^\top \mathbf{G}\bm{e_1}}{\bm{e_1}^\top \mathbf{G}\bm{e_1}\|{\rm Tr}(\mathsf{B})\|}
	=1/\|{\rm Tr}(\mathsf{B})\|\\
	\bm{e_{-1}}^\top \mathbf{Q} \bm{ f}
	&=
	\frac{\bm{e_{-1}} ^\top \mathbf{G}\bm{e_1}}{\bm{e_1}^\top \mathbf{G}\bm{e_1}\|{\rm Tr}(\mathsf{B})\|}\\
	&=-(\mathbf{K}^\top  \mathrm{W}\mathbf{K} + \gamma \mathsf{I})^{-1}\mathbf{K} ^\top \mathrm{W} \mathbf{d}/\|{\rm Tr}(\mathsf{B})\|\\
	&=-\mathbf{U}\mathbf{d}/\|{\rm Tr}(\mathsf{B})\|.
	\end{align*}
	\endgroup
	This completes the proof. 
\end{proof}

\begin{lem} 
	\label{lemb2}
	Denote 
	\begin{align}
	\label{s}
	\mathbf{S} =  \left(\Sigma + \frac{\gamma}{n} \mathsf{I}\right)^{-1}.
	\end{align}
We have
	\begingroup
\allowdisplaybreaks
\begin{align}
&\frac{{\rm d}~\!\bm{g}^\top \mathbf{S} \bm{g}}{{\rm d}~\! \gamma}
=-\frac{\bm{g}^\top \mathbf{S}^{2} \bm{g}}{n} ,
~\frac{{\rm d}~\!\bm{g}^\top \mathbf{S} \bm{g}}{{\rm d}~\! \gamma}
=-\frac{2\bm{g}^\top \mathbf{S}^{3} \bm{g}}{n} \\
&\frac{{\rm d} \tr\big(\mathbf{S} \big) }{{\rm d}~\! \gamma}
=-\frac{\tr\big(\mathbf{S}^{2} \big)}{n} ,
~\frac{{\rm d} \tr\big(\mathbf{S}^{2} \big) }{{\rm d}~\! \gamma}
=-\frac{2\tr\big(\mathbf{S}^{3} \big)}{n}. 
\end{align}
\endgroup
where $\bm{g}$ is any column vector.
	
\end{lem}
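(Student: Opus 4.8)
The plan is to reduce all four identities to a single matrix-calculus fact, namely the derivative of $\mathbf{S}$ itself. Since $\mathbf{S}^{-1} = \Sigma + \frac{\gamma}{n}\mathsf{I}$ and $\Sigma$ is independent of $\gamma$, I would first observe that $\frac{\mathrm{d}}{\mathrm{d}\gamma}\mathbf{S}^{-1} = \frac{1}{n}\mathsf{I}$. Differentiating the identity $\mathbf{S}\,\mathbf{S}^{-1} = \mathsf{I}$ and rearranging gives the standard formula $\frac{\mathrm{d}\mathbf{S}}{\mathrm{d}\gamma} = -\mathbf{S}\big(\frac{\mathrm{d}\mathbf{S}^{-1}}{\mathrm{d}\gamma}\big)\mathbf{S} = -\frac{1}{n}\mathbf{S}^2$. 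This is the only computation with any content; everything else is bookkeeping with the product rule, the linearity and cyclicity of the trace, and the fact that $\mathbf{S}$ commutes with all of its powers, since it is a function of the single symmetric matrix $\Sigma + \frac{\gamma}{n}\mathsf{I}$.

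With $\frac{\mathrm{d}\mathbf{S}}{\mathrm{d}\gamma} = -\frac{1}{n}\mathbf{S}^2$ in hand, I would handle the quadratic forms first. Because $\bm{g}$ is constant in $\gamma$, $\frac{\mathrm{d}}{\mathrm{d}\gamma}\bm{g}^\top \mathbf{S}\bm{g} = \bm{g}^\top\frac{\mathrm{d}\mathbf{S}}{\mathrm{d}\gamma}\bm{g} = -\frac{1}{n}\bm{g}^\top \mathbf{S}^2\bm{g}$. For the second-power form I would compute $\frac{\mathrm{d}\mathbf{S}^2}{\mathrm{d}\gamma} = \frac{\mathrm{d}\mathbf{S}}{\mathrm{d}\gamma}\mathbf{S} + \mathbf{S}\frac{\mathrm{d}\mathbf{S}}{\mathrm{d}\gamma} = -\frac{1}{n}(\mathbf{S}^2\mathbf{S} + \mathbf{S}\mathbf{S}^2) = -\frac{2}{n}\mathbf{S}^3$, where the two product-rule terms coincide precisely because $\mathbf{S}$ commutes with $\mathbf{S}^2$; sandwiching between $\bm{g}^\top(\cdot)\bm{g}$ then yields $-\frac{2}{n}\bm{g}^\top \mathbf{S}^3\bm{g}$, matching the (intended) second identity.

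The two trace identities follow the same pattern, with linearity of the trace replacing the quadratic form: $\frac{\mathrm{d}}{\mathrm{d}\gamma}\tr(\mathbf{S}) = \tr\big(\frac{\mathrm{d}\mathbf{S}}{\mathrm{d}\gamma}\big) = -\frac{1}{n}\tr(\mathbf{S}^2)$, and $\frac{\mathrm{d}}{\mathrm{d}\gamma}\tr(\mathbf{S}^2) = \tr\big(\frac{\mathrm{d}\mathbf{S}^2}{\mathrm{d}\gamma}\big) = -\frac{2}{n}\tr(\mathbf{S}^3)$, reusing the expression for $\frac{\mathrm{d}\mathbf{S}^2}{\mathrm{d}\gamma}$ derived above. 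I do not expect any genuine obstacle here; the lemma is purely a differentiation exercise. The only points demanding care are getting the sign and the factor $1/n$ right in $\frac{\mathrm{d}\mathbf{S}}{\mathrm{d}\gamma} = -\frac{1}{n}\mathbf{S}^2$, and noting that the product-rule summands for powers of $\mathbf{S}$ collapse into a single higher power exactly because all the matrices involved are simultaneously diagonalizable and therefore commute.
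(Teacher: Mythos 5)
Your proposal is correct and is essentially the proof the paper intends: the paper omits the details and simply cites the matrix differentiation formulas of \citet[Chapter 2]{Petersen2012}, the central one being exactly your identity $\frac{\mathrm{d}\mathbf{S}}{\mathrm{d}\gamma} = -\mathbf{S}\bigl(\frac{\mathrm{d}\mathbf{S}^{-1}}{\mathrm{d}\gamma}\bigr)\mathbf{S} = -\frac{1}{n}\mathbf{S}^2$, from which the four formulas follow by the product rule, trace linearity, and commutativity of powers of $\mathbf{S}$, just as you argue. You also correctly spotted that the second displayed identity in the lemma contains a typo (the left-hand side should read $\frac{\mathrm{d}\,\bm{g}^\top \mathbf{S}^{2} \bm{g}}{\mathrm{d}\gamma}$), and your derivation establishes the intended statement.
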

The proof is carried out by making use of the matrix differentiation formulas in \citet[Chapter 2]{Petersen2012} and is omitted due to limited space.

\begin{lem} 
	\label{lemb3}
	We have
		\begingroup
	\allowdisplaybreaks
	\begin{align*}
\tr\big( \Sigma^{-1} \big)
&=\Upsilon^{-1}
-\frac{\tr(\mathsf{B})^\top \Upsilon^{-2} \tr(\mathsf{B}) }{ \tr(\mathsf{B})^\top  \Upsilon^{-1}\tr(\mathsf{B})}\\
\bm{\alpha} ^\top \Sigma^{-1}\bm{\alpha} 
&=\bm{\theta}^\top 
\Upsilon^{-1}
\bm{\theta}
-
\frac{\bm{\theta} ^\top \Upsilon^{-1} \tr(\mathsf{B})  \tr(\mathsf{B})^\top \Upsilon^{-1}  \bm{\theta}}
{ \tr(\mathsf{B})^\top  \Upsilon^{-1}\tr(\mathsf{B})}.
\end{align*}
	\endgroup
\end{lem}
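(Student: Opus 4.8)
The plan is to route everything through the single orthogonal change of coordinates $\mathbf{Q}$ introduced in Section~\ref{sec4}, express $\Sigma$ as a diagonal block of $\mathbf{Q}\Upsilon\mathbf{Q}^\top$, and then convert this block relation into the desired formulas by a block-inverse (Schur complement) identity. Writing $\mathrm{D}_p\eq\diag([p_1-p_1^2,\cdots,p_M-p_M^2])$, so that $\Sigma=\mathbf{K}^\top\mathrm{D}_p\mathbf{K}$ and $\Upsilon=\mathbf{A}^\top\mathrm{D}_p\mathbf{A}$, I would first record the consequences of $\mathbf{D}=\mathbf{A}\mathbf{Q}^\top=[\mathbf{d},\mathbf{K}]$ with $\mathbf{Q}=\big[\,{\rm Tr}(\mathsf{B})^\top/\|{\rm Tr}(\mathsf{B})\|;\ \widetilde{\mathbf{Q}}\,\big]$, namely $\mathbf{d}=\mathbf{A}\,{\rm Tr}(\mathsf{B})/\|{\rm Tr}(\mathsf{B})\|$, $\mathbf{K}=\mathbf{A}\widetilde{\mathbf{Q}}^\top$, $\bm{\alpha}=\widetilde{\mathbf{Q}}\bm{\theta}$ (from \eqref{beta}), and
\begin{align*}
\mathbf{Q}\Upsilon\mathbf{Q}^\top=\mathbf{D}^\top\mathrm{D}_p\mathbf{D}
=\left[\begin{array}{cc}\mathbf{d}^\top\mathrm{D}_p\mathbf{d} & \mathbf{d}^\top\mathrm{D}_p\mathbf{K}\\ \mathbf{K}^\top\mathrm{D}_p\mathbf{d} & \Sigma\end{array}\right],
\end{align*}
so that $\Sigma$ is precisely the lower-right $(d^2-1)\times(d^2-1)$ block of $\mathbf{Q}\Upsilon\mathbf{Q}^\top$.

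The central step is to invert this block relation. Since $\mathbf{Q}$ is orthogonal, $\mathbf{Q}\Upsilon^{-1}\mathbf{Q}^\top=(\mathbf{Q}\Upsilon\mathbf{Q}^\top)^{-1}$, and partitioning it conformably as $\big[\,n_{11},\mathbf{n}_{12};\ \mathbf{n}_{21},\mathbf{N}_{22}\,\big]$ gives $n_{11}={\rm Tr}(\mathsf{B})^\top\Upsilon^{-1}{\rm Tr}(\mathsf{B})/\|{\rm Tr}(\mathsf{B})\|^2$, $\mathbf{n}_{12}={\rm Tr}(\mathsf{B})^\top\Upsilon^{-1}\widetilde{\mathbf{Q}}^\top/\|{\rm Tr}(\mathsf{B})\|$, and $\mathbf{N}_{22}=\widetilde{\mathbf{Q}}\Upsilon^{-1}\widetilde{\mathbf{Q}}^\top$. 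Applying the ``inverse of a block of the inverse'' identity --- the inverse of the $(2,2)$-block of a matrix equals the Schur complement, taken in its inverse, of the $(1,1)$-block --- to the partitioned $\mathbf{Q}\Upsilon^{-1}\mathbf{Q}^\top$ yields
\begin{align*}
\Sigma^{-1}=\mathbf{N}_{22}-\mathbf{n}_{21}\,n_{11}^{-1}\,\mathbf{n}_{12}
=\widetilde{\mathbf{Q}}\,\mathbf{G}\,\widetilde{\mathbf{Q}}^\top,\qquad
\mathbf{G}\eq\Upsilon^{-1}-\frac{\Upsilon^{-1}{\rm Tr}(\mathsf{B}){\rm Tr}(\mathsf{B})^\top\Upsilon^{-1}}{{\rm Tr}(\mathsf{B})^\top\Upsilon^{-1}{\rm Tr}(\mathsf{B})}.
\end{align*}
The structural fact I would then exploit is $\mathbf{G}\,{\rm Tr}(\mathsf{B})=0$, which is immediate from the definition of $\mathbf{G}$.

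Finally I would read off both claims using orthogonality of $\mathbf{Q}$. From $\widetilde{\mathbf{Q}}\widetilde{\mathbf{Q}}^\top=\mathsf{I}$ and $\widetilde{\mathbf{Q}}^\top\widetilde{\mathbf{Q}}=\mathsf{I}-{\rm Tr}(\mathsf{B}){\rm Tr}(\mathsf{B})^\top/\|{\rm Tr}(\mathsf{B})\|^2$, together with $\mathbf{G}\,{\rm Tr}(\mathsf{B})=0$ (so that this projection acts as the identity on $\mathbf{G}$), the trace claim follows from $\tr(\Sigma^{-1})=\tr(\widetilde{\mathbf{Q}}\mathbf{G}\widetilde{\mathbf{Q}}^\top)=\tr(\mathbf{G}\,\widetilde{\mathbf{Q}}^\top\widetilde{\mathbf{Q}})=\tr(\mathbf{G})$ and the scalar identity $\tr(\Upsilon^{-1}{\rm Tr}(\mathsf{B}){\rm Tr}(\mathsf{B})^\top\Upsilon^{-1})={\rm Tr}(\mathsf{B})^\top\Upsilon^{-2}{\rm Tr}(\mathsf{B})$, while the quadratic-form claim follows by substituting $\bm{\alpha}=\widetilde{\mathbf{Q}}\bm{\theta}$ into $\bm{\alpha}^\top\Sigma^{-1}\bm{\alpha}=\bm{\theta}^\top\widetilde{\mathbf{Q}}^\top\Sigma^{-1}\widetilde{\mathbf{Q}}\bm{\theta}=\bm{\theta}^\top\mathbf{G}\bm{\theta}$ (the projection again disappearing against $\mathbf{G}\,{\rm Tr}(\mathsf{B})=0$). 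I expect the only real obstacle to be bookkeeping: invoking the less familiar ``block of the inverse'' version of the Schur complement rather than the Schur complement of $\mathbf{Q}\Upsilon\mathbf{Q}^\top$ itself, and carrying the $\|{\rm Tr}(\mathsf{B})\|$ normalizations correctly when passing between the $\mathbf{Q}$-blocks and the $\Upsilon^{-1}$ expressions; invertibility of $\Sigma$ and $\Upsilon$ used throughout is guaranteed by $\rank(\mathbf{A})=d^2$ and $p_m\in(0,1)$.
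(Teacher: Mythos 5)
Your proof is correct, and it reaches the crucial congruence by a genuinely different route than the paper. The paper's own proof recycles Lemma \ref{lemb1}: setting $\gamma=0$ there gives $\mathbf{Q}\mathbf{H}$ with vanishing first row and remaining block $(\mathbf{K}^\top \mathrm{W}\mathbf{K})^{-1}\mathbf{K}^\top\mathrm{W}$, and evaluating $\mathbf{H}\,\mathrm{W}^{-1}\mathbf{H}^\top$ in the two coordinate systems yields
\begin{align*}
\mathbf{Q}^\top
\begin{bmatrix}
0&0\\
0&(\mathbf{K}^\top  \mathrm{W}\mathbf{K})^{-1}
\end{bmatrix}
\mathbf{Q}
= C-
\frac{C \tr(\mathsf{B})  \tr(\mathsf{B})^\top C  }{ \tr(\mathsf{B})^\top  C\tr(\mathsf{B})},
\qquad C=\big(\mathbf{A}^\top \mathrm{W}\mathbf{A}\big)^{-1},
\end{align*}
in effect identifying both sides as the CWLS covariance written in two bases, after which the trace and quadratic-form claims are read off exactly as you do. You instead obtain the same congruence $\Sigma^{-1}=\widetilde{\mathbf{Q}}\,\mathbf{G}\,\widetilde{\mathbf{Q}}^\top$ from first principles: $\Sigma$ is the $(2,2)$-block of $\mathbf{Q}\Upsilon\mathbf{Q}^\top$, and the ``inverse of a block of the inverse'' form of the Schur-complement identity, applied to the conformably partitioned $\mathbf{Q}\Upsilon^{-1}\mathbf{Q}^\top$, delivers $\mathbf{G}$ directly; your endgame ($\mathbf{G}\tr(\mathsf{B})=0$ annihilating the projector $\widetilde{\mathbf{Q}}^\top\widetilde{\mathbf{Q}}=\mathsf{I}-\tr(\mathsf{B})\tr(\mathsf{B})^\top/\|\tr(\mathsf{B})\|^2$, plus cyclicity of the trace and the substitution $\bm{\alpha}=\widetilde{\mathbf{Q}}\bm{\theta}$) then coincides with the paper's. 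What your route buys: it is self-contained (no dependence on Lemma \ref{lemb1}, the longest computation in Appendix B, nor on any estimator interpretation), and because it treats $\Sigma$ and $\Upsilon$ purely as compressions of one and the same middle weight matrix, it is insensitive to the discrepancy between the printed definitions (which use $\diag([p_1-p_1^2,\cdots,p_M-p_M^2])$) and the convention the paper's proof actually relies on (namely $n\Sigma=\mathbf{K}^\top\mathrm{W}\mathbf{K}$ and $n\Upsilon=\mathbf{A}^\top\mathrm{W}\mathbf{A}$, i.e., inverse weights) --- your argument proves the lemma verbatim in either convention, whereas the paper's step $\tr(\Sigma^{-1})=n\tr\big((\mathbf{K}^\top\mathrm{W}\mathbf{K})^{-1}\big)$ is only valid in the latter. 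What the paper's route buys is brevity given that Lemma \ref{lemb1} is needed anyway, and the statistical reading of the identity. Two small points in your favor: you correctly flag that the Schur step needs the scalar pivot $\tr(\mathsf{B})^\top\Upsilon^{-1}\tr(\mathsf{B})>0$, which holds since $\Upsilon\succ 0$ under $\rank(\mathbf{A})=d^2$ and $p_m\in(0,1)$ (and $\tr(\mathsf{B})\neq 0$, indeed $\|\tr(\mathsf{B})\|^2=d$); and your final formula correctly reads $\tr\big(\Upsilon^{-1}\big)$ where the lemma's statement contains the typo $\Upsilon^{-1}$.
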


\begin{proof}
	By letting $\gamma = 0$ in Lemma \ref{lemb1}, we have
	\begin{align}
	\mathbf{Q} \mathbf{H}
	=
	\left[
	\begin{array}{c}
	0 \\
	(\mathbf{K}^\top  \mathrm{W}\mathbf{K})^{-1}
	\mathbf{K}^\top\mathrm{W}
	\end{array}
	\right]
	\end{align}
	where
	\begin{align*}
	\mathbf{H}
	&=C\mathbf{A}^\top  \mathrm{W} 
	-C \tr(\mathsf{B}) \nonumber
	\frac{ \tr(\mathsf{B})^\top  C \mathbf{A}^\top  \mathrm{W} }{ \tr(\mathsf{B})^\top  C\tr(\mathsf{B})}\\
	C&=\big(\mathbf{A}^\top  \mathrm{W} \mathbf{A}\big)^{-1}.
	\end{align*}
	It follows that
	\begin{align*}
		& \hspace{5mm}
		\mathbf{Q}^\top
		\begin{bmatrix}
	0&0\\
	0&(\mathbf{K}^\top  \mathrm{W}\mathbf{K})^{-1}
	\end{bmatrix}
	\mathbf{Q}
	=
	\mathbf{H}  \mathrm{W}^{-1}  \mathbf{H}^\top
	\\
	&= 
	\left(  \mathsf{I} -  
	\frac{C \tr(\mathsf{B})  \tr(\mathsf{B})^\top   }{ \tr(\mathsf{B})^\top  C\tr(\mathsf{B})}\right)
	C
	\left(  \mathsf{I} -  
	\frac{ \tr(\mathsf{B})  \tr(\mathsf{B})^\top  C }{ \tr(\mathsf{B})^\top  C\tr(\mathsf{B})}\right)\\
	&=
	\left(C-
	\frac{C \tr(\mathsf{B})  \tr(\mathsf{B})^\top C  }{ \tr(\mathsf{B})^\top  C\tr(\mathsf{B})}\right)
	\end{align*}
	which derives that
	\begin{align*}
	\tr\big( \Sigma^{-1} \big)
	&=n\tr\big( (\mathbf{K}^\top  \mathrm{W}\mathbf{K})^{-1} \big)\\
	&=n\tr\big(C \big)
	-\frac{n \tr(\mathsf{B})^\top C^2 \tr(\mathsf{B}) }{ \tr(\mathsf{B})^\top  C\tr(\mathsf{B})}\\
	&=	\tr\big(\Upsilon^{-1}\big)
	-\frac{\tr(\mathsf{B})^\top \Upsilon^{-2} \tr(\mathsf{B}) }{ \tr(\mathsf{B})^\top  \Upsilon^{-1}\tr(\mathsf{B})}\\
	\bm{\alpha} ^\top \Sigma^{-1}\bm{\alpha} 
	&=n 	\bm{\alpha} ^\top(\mathbf{K}^\top  \mathrm{W}\mathbf{K})^{-1} 	\bm{\alpha}\\
	&=
	n
	\bm{\theta}^\top \mathbf{Q}^\top
	\begin{bmatrix}
	0&0\\
	0&(\mathbf{K}^\top  \mathrm{W}\mathbf{K})^{-1}
	\end{bmatrix}
	\mathbf{Q}\bm{\theta}\\
	&=
	n
	\bm{\theta}^\top 
	\left(C-
	\frac{C \tr(\mathsf{B})  \tr(\mathsf{B})^\top C  }{ \tr(\mathsf{B})^\top  C\tr(\mathsf{B})}\right)
	\bm{\theta}\\
	&=\bm{\theta}^\top 
	 \Upsilon^{-1}
	 \bm{\theta}
     -
	\frac{\bm{\theta} ^\top \Upsilon^{-1} \tr(\mathsf{B})  \tr(\mathsf{B})^\top \Upsilon^{-1}  \bm{\theta}}
	{ \tr(\mathsf{B})^\top  \Upsilon^{-1}\tr(\mathsf{B})}
	\end{align*}
This completes the proof.
\end{proof}


\begin{lem}\cite[Theorem 8.2]{Ljung1999}
	\label{ct}
	Assume that
	\begin{enumerate}[1)]
		\item $\mathscr{C}(\gamma)$ is a deterministic function that is continuous in $\gamma \in \Omega$ and  minimized at the point $\gamma^\star$,
		where $\Omega$ is a compact subset of $\mathbb{R}$.
		
		\item A sequence of functions $\{\mathscr{C}_n(\gamma)\}$ converges to $\mathscr{C}(\gamma)$ almost surely and uniformly in $\Omega$ as $n$ goes to $\infty$.
	\end{enumerate}
	Then $\widehat{\gamma}_n = \argmin_{\gamma\in \Omega} \mathscr{C}_n(\gamma)$
	converges to $\gamma^\star$ almost surely, namely,
	\begin{align*}
	|\widehat{\gamma}_n-\gamma^\star|\xra{}0~~\mbox{as}~n\xra{}\infty.
	\end{align*}
\end{lem}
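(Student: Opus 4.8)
The plan is to run the standard argmin-consistency (M-estimation) argument pathwise along a probability-one event, which reduces the almost-sure statement to a purely deterministic one. First I would fix an outcome in the event on which the uniform convergence $\sup_{\gamma\in\Omega}|\mathscr{C}_n(\gamma)-\mathscr{C}(\gamma)|\xra{}0$ holds; by hypothesis 2) this event has probability one, so it suffices to establish the deterministic limit $\widehat{\gamma}_n\xra{}\gamma^\star$ on it. Because $\Omega$ is compact, the sequence $\{\widehat{\gamma}_n\}$ is bounded and every subsequence has a further subsequence converging to some point of $\Omega$.

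The core step is to show that an arbitrary convergent subsequence $\widehat{\gamma}_{n_k}\xra{}\bar{\gamma}$ must satisfy $\bar{\gamma}=\gamma^\star$. Starting from the defining minimality $\mathscr{C}_{n_k}(\widehat{\gamma}_{n_k})\le\mathscr{C}_{n_k}(\gamma^\star)$, I would pass to the limit on both sides. On the right, pointwise convergence gives $\mathscr{C}_{n_k}(\gamma^\star)\xra{}\mathscr{C}(\gamma^\star)$. On the left, I would control the value at the \emph{moving} argument via the triangle-inequality bound
\[
\big|\mathscr{C}_{n_k}(\widehat{\gamma}_{n_k})-\mathscr{C}(\bar{\gamma})\big|
\le \sup_{\gamma\in\Omega}\big|\mathscr{C}_{n_k}(\gamma)-\mathscr{C}(\gamma)\big|
+\big|\mathscr{C}(\widehat{\gamma}_{n_k})-\mathscr{C}(\bar{\gamma})\big|,
\]
where the first term vanishes by uniform convergence and the second by continuity of $\mathscr{C}$ together with $\widehat{\gamma}_{n_k}\xra{}\bar{\gamma}$. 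Combining these limits yields $\mathscr{C}(\bar{\gamma})\le\mathscr{C}(\gamma^\star)$, and since $\gamma^\star$ is the unique minimizer of $\mathscr{C}$ (as the phrase ``the point $\gamma^\star$'' indicates, and as the strictly convex quadratic limit $\mathscr{C}(\gamma)$ in \eqref{cfcon} confirms), this forces $\bar{\gamma}=\gamma^\star$.

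Finally I would upgrade subsequential convergence to convergence of the whole sequence: in a compact set, a bounded sequence all of whose subsequential limits coincide with the single value $\gamma^\star$ must itself converge to $\gamma^\star$. As this deterministic argument is valid on the fixed probability-one event, the conclusion $\widehat{\gamma}_n\xra{}\gamma^\star$ holds almost surely. The delicate point—and the main obstacle—is the reliance on \emph{uniform} rather than merely pointwise convergence: it is precisely what legitimizes the limit interchange at the data-dependent minimizer $\widehat{\gamma}_{n_k}$, since pointwise convergence alone would not let us replace $\mathscr{C}_{n_k}(\widehat{\gamma}_{n_k})$ by $\mathscr{C}(\bar{\gamma})$. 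The only other sensitivity is identifiability of $\gamma^\star$; without a unique minimizer one could conclude only that $\widehat{\gamma}_n$ approaches the set of minimizers of $\mathscr{C}$.
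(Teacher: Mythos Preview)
The paper does not give its own proof of this lemma; it simply cites \cite[Theorem~8.2]{Ljung1999} and uses the result as a black box. Your argument is the standard pathwise argmin-consistency proof underlying that theorem---fixing an outcome in the probability-one uniform-convergence event, extracting convergent subsequences via compactness, and using the triangle-inequality bound to pass the minimality inequality to the limit---and it is correct. Your explicit remarks on why uniform (not merely pointwise) convergence is essential, and on the implicit uniqueness of $\gamma^\star$, are exactly the right caveats; there is nothing further to compare.
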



\end{document}